\newtheorem{theorem}{Theorem}[section]
\newtheorem{definition}[theorem]{Definition}
\newtheorem{example}{Example}
\newtheorem{prop}[theorem]{Proposition}
\newtheorem{observation}[theorem]{Observation}
\newtheorem{lemma}[theorem]{Lemma}
\newtheorem{corollary}[theorem]{Corollary}
\newtheorem{claim}[theorem]{Claim}
\newtheorem{remark}[theorem]{Remark}
 \newcommand{\alert}[1]{\textbf{\color{red}
[#1]}\marginpar{\textbf{\color{red}**}}\typeout{ALERT:
\the\inputlineno: #1}}
\newenvironment{proofof}[1]{

\noindent{\it Proof of {#1}:}}{\hfill\qed}
\newcommand{\ignore}[1]{}
\newcommand{\ar}{\Phi}
\DeclareMathOperator{\OPT}{OPT}
\DeclareMathOperator{\ALG}{ALG}
\DeclareMathOperator{\diam}{diam}
\DeclareMathOperator{\polylog}{polylog}
\newcommand{\Ghat}{\widehat{G}}
\newcommand{\Dcal}{\mathcal{D}}
\newcommand{\dhat}{\widehat{d}}
\newcommand{\Acal}{\mathcal{A}}
\newcommand{\Acalhat}{\widehat{\mathcal{A}}}
\newcommand{\fmax}{{\rho}_{\operatorname{max}}}
\newcommand{\fmin}{{\rho}_{\operatorname{min}}}
\newcommand{\Ehat}{\widehat{G}}
\newcommand{\cost}{\operatorname{cost}}
\newcommand{\poly}{\operatorname{poly}}
\newcommand{\R}{\mathcal{R}}
\newcommand{\Scal}{\mathcal{S}}
\newcommand{\Fcal}{\mathcal{F}}
\newcommand{\Zcal}{\mathcal{Z}}
\newcommand{\Rhat}{\widehat{R}}
\newcommand{\Chat}{\widehat{C}}
\newcommand{\Mcal}{\mathcal{M}}
\newcommand{\Talg}{\textsf{Alg}^\mathcal{T}}
\newcommand{\Malg}{\textsf{Alg}^\Mcal}
\newcommand{\Tembed}{\textsf{Embed}^\mathcal{T}}
\newcommand{\Membed}{\textsf{Embed}^\Mcal}
\newcommand{\Balg}{\textsf{BaseAlg}}
\newcommand{\alg}{\textsf{Alg}}
\newcommand{\combalg}{\textsf{CombineAlg}}
\newcommand{\event}{\mathcal{E}}
\begin{document}

\title{Online Probabilistic Metric Embedding: A General Framework for Bypassing Inherent Bounds\thanks{A preliminary version of this work appeared in SODA 2020~\cite{proceedings}. The proof of Lemma 3.4 in the preliminary version was later found to be flawed. This paper presents a corrected proof (Lemma~\ref{lem:base-HST}).}}
  
\author{
Yair Bartal\thanks{Hebrew University of Jerusalem, Israel. \href{mailto:yair@cs.huji.ac.il}{\texttt{yair@cs.huji.ac.il}} Supported in part by ISF grant 1817/17.}
\and
Ora N. Fandina\thanks{Hebrew University of Jerusalem, Israel. \href{mailto:fandina@gmail.com}{\texttt{fandina@cs.huji.ac.il}} Supported in part by ISF grant 1817/17 and by the Leibnitz Foundation.}
\and
Seeun William Umboh\thanks{School of Computing and Information Systems, The University of Melbourne, Australia. \href{mailto:william.umboh@unimelb.edu.au}{\texttt{william.umboh@unimelb.edu.au}} Supported in part by NWO grant 639.022.211 and ISF grant 1817/17. Part of this work was done while at Eindhoven University of Technology and The University of Sydney, and while visiting the Hebrew University of Jerusalem.}
}

\date{}
\begin{titlepage}
\def\thepage{}
\thispagestyle{empty}
\maketitle


\begin{abstract}
  
Probabilistic metric embedding into trees is a powerful technique for designing online algorithms. The standard approach is to embed the entire underlying metric into a tree metric and then solve the problem on the latter. The overhead in the competitive ratio depends on the expected distortion of the embedding, which is logarithmic in $n$, the size of the underlying metric. For many online applications, such as online network design problems, it is natural to ask if it is possible to construct such embeddings in an online fashion such that the distortion would be a polylogarithmic function of $k$, the number of terminals.

Our first main contribution is answering this question negatively, exhibiting a \emph{lower bound} of $\tilde{\Omega}(\log k \log \ar)$, where $\ar$ is the aspect ratio of the set of terminals, showing that a simple modification of the probabilistic embedding into trees of Bartal (FOCS 1996), which has expected distortion of $O(\log k \log \ar)$, is \emph{nearly-tight}. Unfortunately, this may result in a very bad dependence in terms of $k$, namely, a power of $k$. 

Our second main contribution is a general framework for bypassing this limitation. We show that for a large class of online problems this online probabilistic embedding can still be used to devise an algorithm with $O(\min\{\log k\log (k\lambda),\log^3 k\})$ overhead in the competitive ratio, where $k$ is the current number of terminals, and $\lambda$ is a measure of subadditivity of the cost function, which is at most $r$, the current number of requests. In particular, this implies the first algorithms with competitive ratio $\polylog(k)$ for online \emph{subadditive network design} (\emph{buy-at-bulk network design} being a special case), and $\polylog(k,r)$ for online \emph{group Steiner forest}.

 \end{abstract}
\end{titlepage}

\section{Introduction}

Low-distortion metric embeddings play an important role in the design of algorithms. Many applications are based on reducing the problem on arbitrary metrics to instances defined over a simple metric, where we may be able to devise efficient approximation or online algorithms, while incurring only a small loss in quality of the solution depending on the distortion of the embedding.

One of the most successful methods has been probabilistic embedding into ultrametrics \cite{Bar96} where an arbitrary metric space is embedded probabilistically into special tree metrics (HSTs, or ultrametrics). The main theorem provides a tight $O(\log n)$ bound on the distortion \cite{Bartal04, FRT04}, where $n$ is the size of the metric. This result has been used in a plethora of applications in various areas, including clustering \cite{BartalCR01}, metric labeling \cite{KleinbergT02}, network design \cite{AwerbuchA97,GargKR98}, linear arrangement and spreading metrics \cite{Bartal04}, among others. See \cite{I01} for more applications.  

The goal of this paper is to deepen the study of metric embedding and its applications in the online setting. An \emph{online algorithm} receives requests one by one and needs to satisfy each request immediately without knowing future requests. Previously, the dominant approach in applying metric embeddings to online algorithms is to use an offline embedding of the entire underlying metric $(V, d_V)$ into a tree metric or HST using the probabilistic embeddings of \cite{Bartal04, FRT04} and then solving the problem there. 
This approach has been particularly useful for online algorithms, with many applications including subadditive network design \cite{AwerbuchA97}, group Steiner tree/forest \cite{AlonAABN06, NaorPS11}, metrical task system and $k$-server problems \cite{BartalBBT97,BansalBMN15,BCLLM18}, reordering buffers \cite{EnglertRW10}, and file migration \cite{Bar96}, among others.

There are two main disadvantages to this approach: The first is that it requires upfront knowledge of the underlying metric; The second, perhaps more crucial, is the inherent $O(\log n)$ overhead incurred in the resulting competitive ratio bound, adding a dependence on $n$, which may be unrelated to the natural parameters of the problem (e.g., in terms of the problem's lower bounds). This first issue, i.e. the necessity of knowing the underlying metric, is due to the use of the offline embeddings of  \cite{Bartal04, FRT04}. This has motivated several researchers, including Indyk et al.~\cite{IMS10}, to initiate the study of \emph{online embeddings}. In this setting, the embedding algorithm is given the points of the metric one by one and needs to produce consistent metric embeddings at each stage. They observed that the probabilistic embedding method of Bartal~\cite{Bar96} can be quite simply implemented in an online fashion to achieve the distortion bound of $O(\log \hat{k}\log \hat{\Phi})$, where $\hat{k}$ is the \emph{final} number of requested points, and $\hat{\Phi}$ is the aspect ratio of the underlying metric. However, while this alternative approach (partially) overcomes the first issue, it fails to address the second: First, it still requires pre-knowledge of $\hat{k}$---the final number of requested points---and second, it introduces an additional dependence on another parameter, the aspect ratio, which may in general be exponential in $\hat{k}$, and by itself is often also not a natural parameter of the problem.

This paper makes two main contributions: First, we show that the aspect ratio is a \emph{necessary} parameter for any strictly online embedding, more specifically showing that the above bound cannot be much improved except for removing the necessity to know the parameters in advance. The second main contribution is providing a general framework for overcoming this obstacle in order to use such embeddings to obtain general bounds for online problems based on algorithms for trees with an overhead that is polylogarithmic in $k$, the number of terminal points which have appeared in requests up to the current stage. We expect further extensions and applications of this approach to arise in the near future.

\paragraph{Online embedding model.}
An online embedding algorithm receives a sequence of points (called \emph{terminals}) from an underlying metric space $(V,d_V)$ one by one; when a point arrives, the algorithm embeds it into a destination metric space $(M, d_M)$ while committing to previous decisions. 
If the underlying metric space $(V, d_V)$ is given to the algorithm upfront, we say that online embedding is {\it non-metric-oblivious}, otherwise the embedding is {\it metric-oblivious}.

More formally, an online embedding is defined as follows.
\begin{definition}[Online embedding]\label{def_online_emb}
An \emph{online embedding} $f$ of a sequence of terminal points $x_1, \ldots, x_k$ from a metric space $(V,d_V)$ into a metric space $(M, d_M)$ is a sequence of embeddings $f_1, \ldots, f_k$ such that for each $ 1\leq i \leq k$:
\begin{inparaenum}[\bfseries (1)]
\item $f_i$ is a non-contractive embedding of $x_1, \ldots, x_i$ into $M$,
\item $f_i$ \emph{extends} the previous embedding $f_{i-1}$, i.e. for each $x \in \{x_1, \ldots, x_{i-1}\}$, $f_i(x)=f_{i-1}(x)$. 
\end{inparaenum}
The \emph{distortion} of the online embedding $f$ is the distortion of the final embedding $f_k$. A \emph{probabilistic online embedding} is a probability distribution over online embeddings.
\end{definition}

\noindent Note that the results of \cite{Bartal04, FRT04} imply that there exists a probabilistic online non-metric-oblivious embedding into hierarchically separated trees with expected distortion $O(\log n)$, where $n = |V|$. A \emph{hierarchically separated tree} (HST) is a rooted tree $T$ where each edge has a non-negative length and the edge lengths decrease geometrically along every root-to-leaf path.

For our upper bounds, we will actually construct a metric-oblivious embedding, while we will prove lower bounds for non-metric-oblivious embeddings. 

\paragraph{Fully extendable embeddings.}
A key application of online embeddings is to reduce an online problem defined over an arbitrary metric $(V,d_V)$ to one defined over a ``simpler'' metric $(M,d_M)$. For this to work, we should be able to translate a  feasible solution obtained on $(M,d_M)$, which in general may use Steiner points (points in $M$ that are not images of the terminals) to a feasible solution in $(V,d_V)$. Moreover, for the analysis of the reduction, we should be able to translate an optimal solution in $(V,d_V)$ into a feasible solution in $(M,d_M)$. Here, it may definitely be the case that Steiner points in $V$, i.e. non-terminal points, are used in an optimal solution in $V$. We therefore further require the online embedding to be \emph{fully extendable}, that is both $f$ and $f^{-1}$ must have appropriate metric extensions.

\begin{restatable}[Fully extendable embeddings]{definition}{fullyextendable}\label{extendability}
     Let $f = (f_1, \ldots, f_k)$ be an online embedding of a sequence of terminals $x_1, \ldots, x_k \in  V$ into $M$. For all $ 1\leq i \leq k$, let $\alpha_i$ denote the bound on distortion of $f_i$.  We say that $f$ is \emph{fully extendable} to $V$ with respect to $\{M_i\}_{i \leq k}$, $M_1 \subseteq M_2 \ldots \subseteq M_k \subseteq M$,  if for all $i$ there exist $F_i \colon V \to M$ and $H_i \colon M_i \to V$  such that\footnote{The inverse of an injective function should be interpreted for its restriction to its image.}:
 \begin{inparaenum}[\bfseries (i)]
 \item\label{ext:1} $F_i$ and $H_i$ depend only on terminals $\{x_1, \ldots, x_i\}$;
 \item\label{ext:2} $F_i$ extends $f_i$ and for all $u \neq v \in V$, ${d_M(F_i(u), F_i(v))} \leq \alpha_i \cdot {d_V(u,v)}$;
 \item\label{ext:3} $H_i$ extends $f_i^{-1}$ and for all ${\hat u} \neq {\hat v} \in M_i$, $d_V(H_i({\hat u}), H_i({\hat v})) \leq  d_{M}({\hat u}, {\hat v})$; 
 \item\label{ext:4} $H_{i+1}$ extends $H_{i}$.
 \end{inparaenum}
 A probabilistic online embedding is fully extendable if for each embedding $f$ in its support, there exist extensions $F_i$ and $H_i$ such that properties (\ref{ext:1}), (\ref{ext:3}), and (\ref{ext:4}) hold for each embedding, and property (\ref{ext:2}) holds in expectation, i.e. for all $u \neq v \in V$, $E_{F_i}[d_M(F_i(u),F_i(v))] \leq \alpha_i \cdot d_V(u,v)$, where $\alpha_i$ is the expected distortion of the probabilistic embedding $f_i$. 
 \end{restatable}
 \noindent Roughly speaking, one should think of the non-terminal points in $M_i$ as Steiner points that are useful in defining $M_i$. For example, if $f_i$ embeds the terminals into the leaves of a tree $T_i$, then $M_i = T_i$.

\subsection{Our Results}
We begin by showing that a somewhat more delicate application of the approach of Bartal~\cite{Bar96} 
yields a probabilistic online embedding into HSTs with expected distortion $O(\log k \log \ar)$ which does not require knowledge of the underlying metric space, i.e., the embedding is \emph{metric-oblivious}.

\begin{theorem}
  \label{thm:dist-ub}
  For any metric space $(V, d_V)$ and sequence of terminals $x_1, \ldots, x_k \in V$, there exists a probabilistic online embedding into HSTs with expected distortion 
$O(\log k \log \Phi)$,
where $\Phi$ is the aspect ratio of the terminals.\footnote{Note that this can be made into $O(k \log k)$. Essentially, we only need to maintain a probabilistic partition for each ``relevant'' distance scale and there can be at most $\min\{k, \log \Phi\}$ of these.}
\end{theorem}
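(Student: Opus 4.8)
The plan is to adapt Bartal's recursive low-diameter partitioning scheme~\cite{Bar96} so that it runs online and so that the randomness at each distance scale only ``sees'' the terminals, never the ambient metric. First I would normalize so that the minimum interterminal distance is $1$ and the maximum is $\Phi$; the relevant distance scales are then $2^0, 2^1, \ldots, 2^{\lceil \log \Phi \rceil}$, giving $O(\log \Phi)$ levels (and, as the footnote indicates, at most $k$ scales are ever ``active'', which is what lets one trade $\log\Phi$ for $\log k$ in the alternative bound). At each scale $2^j$ I maintain a random partition of the terminals seen so far into clusters of diameter $O(2^j)$, produced by the standard ball-growing / random-permutation-and-radius procedure: sample a random radius $r_j \in [2^{j-2}, 2^{j-1}]$ (or use a single exponential/uniform source shared across scales as in Bartal's construction to get the $O(\log k)$ rather than $O(\log k \log\Phi)$ hit from the partition, though here the $\log\Phi$ factor is unavoidably present anyway) together with a random ordering of a set of ``centers.'' The HST $T$ is then the laminar family obtained by nesting these partitions: the root corresponds to the single scale-$\lceil\log\Phi\rceil$ cluster, a node at level $j$ has edge length $\Theta(2^j)$ to its parent, and the leaves are the terminals.

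The key point for \emph{onlineness} is that when terminal $x_i$ arrives, it must be inserted into the existing laminar structure without moving any previously placed terminal. I would handle this by having each scale-$j$ cluster be defined by a \emph{center} terminal and the random radius $r_j$: when $x_i$ arrives, at each scale from top down I check whether $x_i$ falls within $r_j$ of the center of some already-existing scale-$j$ child of its current scale-$(j{+}1)$ cluster; if so it joins that cluster, and if not it starts a brand-new singleton cluster (becoming a new center) at scale $j$ and recurses downward, creating a fresh chain of singleton clusters down to the leaf level. Because the radii $r_j$ were fixed in advance (before any terminal arrived) and clusters are only ever created, never destroyed or merged, this is consistent: it extends the previous embedding, so it is a legal online embedding in the sense of Definition~\ref{def_online_emb}, and it is manifestly metric-oblivious since the only randomness is the collection $\{r_j\}$ and a tie-breaking order among centers, both independent of $(V,d_V)$.

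For the \emph{distortion} analysis I would argue the two directions separately. Non-contraction (and the geometric-decrease HST property) is immediate from the construction: two terminals separated at scale $2^j$ in the laminar family have tree distance $\Theta(2^j)$, and the cluster-diameter bound guarantees $d_T(x,y) \ge d_V(x,y)$ up to the normalization of edge lengths. For the upper bound on expected stretch, fix $x,y$ with $d_V(x,y) = \delta$ and let $j^\star$ be the scale with $2^{j^\star} \approx \delta$. Then $d_T(x,y) = \Theta(2^{j})$ where $j$ is the smallest scale at which $x$ and $y$ lie in the same cluster, so $E[d_T(x,y)] = \Theta\!\bigl(\sum_{j \ge j^\star} 2^j \Pr[x,y \text{ separated at all scales} < j \text{ above } j^\star]\bigr)$. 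The standard partition-lemma bound says the probability that a random scale-$\ell$ cluster cuts the pair $\{x,y\}$ is $O\!\bigl(\tfrac{\delta}{2^\ell}\cdot \log(\text{number of candidate centers near } x,y)\bigr) = O\!\bigl(\tfrac{\delta}{2^\ell}\log k\bigr)$, since at most $k$ terminals can serve as centers; summing the resulting geometric series over the $O(\log\Phi)$ scales between $j^\star$ and the top gives $E[d_T(x,y)] = O(\delta \log k \log\Phi)$, which is the claimed $O(\log k \log\Phi)$ expected distortion.

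The main obstacle I anticipate is the online insertion argument, specifically making precise that the ``create a fresh singleton chain'' rule does not blow up the cut probability: a newly arrived $x_i$ that fails to join any existing cluster at scale $j$ must nonetheless not be charged a spurious separation from a nearby old terminal $y$, so one needs the radii $r_j$ to have been drawn from a continuous distribution on $[2^{j-2},2^{j-1}]$ (so that ``$x_i$ just barely outside a cluster but within $\delta$ of $y$ inside it'' happens with probability $O(\delta/2^j)$ exactly as in the offline analysis) and one needs to be careful that the set of potential centers against which $x_i$ is tested is controlled by $k$, not by $|V|$. Packaging the random source so that a single draw per scale simultaneously defines the partition for \emph{all} terminals — past and future — consistently, while still yielding the clean product-of-geometric-series bound above, is the technical heart of the proof; everything else is the routine HST bookkeeping and the textbook ball-growing estimate.
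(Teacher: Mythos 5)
Your overall architecture (nested per-scale partitions of the terminals, clusters defined by center terminals and radii, greedy ``join the first existing cluster that contains you, else open a new one'' insertion, laminar family turned into a $\mu$-HST, padding $\times$ number of scales giving $O(\log k\log\Phi)$) matches the paper's. But there is a genuine gap at the step you yourself flag as the technical heart: the claim that a single radius $r_j$ drawn uniformly from $[2^{j-2},2^{j-1}]$ per scale, fixed in advance and shared by all clusters at that scale, yields a per-scale cut probability of $O\bigl(\tfrac{\delta}{2^j}\log k\bigr)$ for a pair at distance $\delta$. This is false when the centers are processed in adversarial arrival order. Concretely, take a star-like metric with branches carrying points $v_1,\dots,v_m$ at distances $a_i=\Delta/2-(i-1)\delta$ from $x$ (so $d(v_i,v_{i'})=a_i+a_{i'}>\Delta/2\ge r_j$, forcing each $v_i$ to open its own cluster in the order $v_1,v_2,\dots$), and put $y$ at distance $\delta$ from $x$ on its own branch, so $d(v_i,y)=a_i+\delta$. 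For \emph{every} value of $r_j$, the first cluster (in creation order) whose ball reaches $x$ is the one with index $i^*=\min\{i:r_j\ge a_i\}$, and by minimality $r_j<a_{i^*-1}=a_{i^*}+\delta=d(v_{i^*},y)$, so that cluster captures $x$ but never $y$: the pair is cut with probability $1$, not $O(\delta/2^j\cdot\log k)$. This is exactly the classical example showing that the CKR-type $\log k$ bound needs a uniformly random permutation of the centers --- which you cannot have online, since the adversary dictates the order in which centers appear. Your proposed remedy (make the radius distribution continuous) does not address this.

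The paper's proof supplies the missing ingredient differently: each newly opened cluster $t$ draws its \emph{own} radius, at creation time, from a truncated exponential distribution $p(r)\propto\chi_t^{-8r/\Delta}$ on $[\Delta/4,\Delta/2]$ with $\chi_t=2t^2$, and the padding bound $O(\max_t\log\chi_t)=O(\log k)$ is extracted from the Abraham--Bartal--Neiman partition lemma, which (as the paper observes) holds for an \emph{arbitrary} order of centers provided $\sum_t\chi_t^{-1}\le 1$ --- the choice $\chi_t=2t^2$ makes this sum at most $\pi^2/12<1$. The near-memorylessness of the exponential radius is what replaces the random permutation and makes the argument order-oblivious, hence implementable online. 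A secondary, more minor issue: fixing all radii ``before any terminal arrived'' presumes the set of relevant scales (hence $\Phi$) is known in advance, whereas the aspect ratio grows online; the paper handles this by creating new top and bottom scales on the fly while preserving nestedness, which your write-up would also need to address.
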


Our first main result is a lower bound on the distortion of every probabilistic online embedding into trees, showing that the above bound is nearly tight. We also prove a better lower bound for probabilistic online embeddings into HSTs. We note that the lower bounds hold for metric-non-oblivious online embeddings.

\begin{theorem}
  \label{thm:dist-lb}
  There exists an infinite family of metric spaces $\{(V_\ell, d_{V_\ell})\}$ and terminal sequences $\{\sigma_\ell\}$ such that every probabilistic online embedding of $\sigma_\ell$ into trees has expected distortion 
$\Omega\left(\log k \log \Phi_\ell/(\log \log k + \log\log \Phi_\ell)\right)$,
where $k = |\sigma_\ell|$ and $\Phi_\ell$ is the aspect ratio of $(V_\ell, d_{V_\ell})$ which can be as large as $ 2^{k^{(1-\delta)/2}}$ for any fixed $\delta \in (0,1)$.
\end{theorem}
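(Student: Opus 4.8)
The plan is to combine the classical expander-based lower bound for probabilistic embeddings into trees with a multi-scale construction, designed so that an \emph{online} embedding is forced to incur an independent $\Omega(\log k)$-type loss at each of $\Theta(\log\Phi)$ distance scales, and so that these losses genuinely compound---which, crucially, they would not for an offline embedding, where the same instance forces only $\Theta(\log k)$ in total (matching, up to the $\log\log$ factor, the upper bound of Theorem~\ref{thm:dist-ub}).

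The building block is the standard fact that a bounded-degree expander $G$ on $m$ vertices with unit edge lengths has a pair of adjacent vertices with expected stretch $\Omega(\log m)$ under any distribution over dominating tree metrics. Writing a dominating tree metric as a nonnegative combination $d_T=\sum_e\ell_e\,\chi_{A_e}$ of the cut semimetrics of its edges, edge expansion gives
\begin{align*}
\sum_{(u,v)\in E(G)}d_T(u,v) &=\sum_e\ell_e\,|\partial_G(A_e)| \;\gtrsim\;\sum_e\ell_e\min\{|A_e|,m-|A_e|\}\\
&\ge\;\tfrac1m\sum_{u,v}d_T(u,v)\;\ge\;\tfrac1m\sum_{u,v}d_G(u,v)=\Omega(m\log m),
\end{align*}
while $\sum_{(u,v)\in E(G)}d_G(u,v)=\Theta(m)$; taking expectations over the tree distribution and averaging over the $\Theta(m)$ expander edges yields the claim. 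The same holds after rescaling all distances by a common factor, or on any subset of a larger point set carrying such an expander metric.

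For the instance I would take $t\asymp\log\Phi/(\log\log k+\log\log\Phi)$ scales $\Delta_1\gg\dots\gg\Delta_t$ with a common ratio $C$ chosen so that $\log C=\Theta(\log\log k+\log\log\Phi)$ (thus $C=\operatorname{poly}(\log k,\log\Phi)$, and $\Delta_1/\Delta_t=C^{t-1}$ realizes any target aspect ratio, including $\Phi$ as large as $2^{k^{(1-\delta)/2}}$), and build $(V_\ell,d_{V_\ell})$ hierarchically: at scale $\Delta_j$ there is a rescaled bounded-degree expander on $m\asymp k/t$ terminals, and the scale-$(j{+}1)$ gadget is placed inside a single vertex of the scale-$j$ expander; the factor-$C$ separation makes the gadgets at different scales metrically independent (each lies in a ball of radius $\ll$ the edge length of any coarser expander). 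In the stated range $\log m=\Omega(\log k)$, and $tm=k$. The terminal sequence $\sigma_\ell$ reveals the scales one at a time in a fixed order, so any online embedding commits to the coarse part of the hierarchy before seeing a finer scale; since both the metric and the order are fixed, the resulting bound holds even against non-metric-oblivious embeddings.

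The heart of the proof---and the step I expect to be the main obstacle---is showing that the per-scale losses add up. What one wants is an inductive/charging statement: conditioned on an online embedding's commitments for scales $1,\dots,j$, revealing the scale-$(j{+}1)$ expander forces $\Omega(\log m)$ \emph{additional} expected stretch, chargeable to tree edges of length $\Theta(\Delta_{j+1})$---a length band disjoint from those charged at the other scales because consecutive scales differ by the factor $C$. Summing over the $t$ scales then gives expected distortion $\Omega(t\log m)=\Omega\big(\log k\log\Phi/(\log\log k+\log\log\Phi)\big)$. The subtle point is that the coarse-scale commitments genuinely prevent the online algorithm from amortizing these losses against one another, whereas an offline embedding that sees all scales at once extracts only $\Theta(\log k)$ from the same instance; pinning this down is exactly where the fixed reveal order and the precise choice of $C$---hence the $\log\log$ loss---are used. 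The stronger HST bound is obtained by the same scheme, exploiting that the geometrically-decreasing edge lengths of an HST pin the per-scale structure down more rigidly; and throughout, the passage to probabilistic embeddings is handled by linearity of expectation over the tree distribution.
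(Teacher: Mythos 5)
There is a genuine gap, and it sits exactly at the step you flag as ``the main obstacle'': for the construction you propose, the per-scale losses do \emph{not} add up. You place the scale-$(j{+}1)$ expander inside a single vertex of the scale-$j$ expander and note that the factor-$C$ separation makes the gadgets ``metrically independent.'' But metric independence is precisely what kills the compounding. Each gadget is revealed as a whole and can be embedded the moment it arrives as a fresh subtree attached near the image of its host vertex, incurring only the \emph{offline} expander-versus-tree loss of $\Theta(\log m)$; the coarse-scale commitments constrain only where the host vertex sits, not how well the new gadget can be embedded internally, and cross-gadget distances are dominated by coarse distances up to lower-order additive error. Hence an online algorithm achieves distortion $O(\log m)=O(\log k)$ on your instance, and the inductive statement you need (``revealing scale $j{+}1$ forces $\Omega(\log m)$ \emph{additional} stretch'') is false. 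More fundamentally, the quantity being lower-bounded is the expected stretch of a \emph{single} pair $(u,v)$; with independent gadgets every pair lives in one gadget and its stretch is bounded by that gadget's distortion, so there is no pair whose tree-distance decomposes as a sum of contributions over all $t$ length bands, and the ``disjoint length bands'' charging has nothing to charge to.

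What makes compounding possible in the paper is that the fine-scale gadget \emph{metrically refines} a coarse-scale edge rather than hiding inside a vertex: a uniformly random edge $(u,v)$ of the current level is replaced by a copy of a base graph all of whose edges lie on shortest $(u,v)$-paths. Then (i) by the triangle inequality, if $(u,v)$ was already stretched by a factor $\alpha$, the average stretch of the refining edges is at least $\alpha$, so distortion incurred at any earlier level is inherited by the final queried pair (this is the claim inside Lemma~\ref{lem:tree}); and (ii) because the base graph has girth $g=\Theta(t\log(m/t))$, the Rabinovich--Raz bound forces an $\Omega(1/t)$ fraction of its edges to be stretched by $\Omega(g)$ already at the moment the level is revealed, and the adversary's random choice of which edge to refine means the online algorithm cannot protect the refined one. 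The resulting mechanism is not ``$t$ scales each contributing $\log m$'' but ``a single $\Omega(t\log(m/t))$ hit occurs at some level with constant probability and then propagates to the final pair.'' Your averaging argument over expander edges shows a badly stretched edge exists but gives the adversary no handle to recursively zoom into it, and your nesting destroys the inheritance mechanism; repairing this essentially forces you back to the paper's high-girth, edge-disjoint-shortest-path base graph replacing a randomly chosen edge.
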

\begin{theorem}
  \label{thm:dist-HST-lb}
  There exists an infinite family of metric spaces $\{(V_\ell, d_{V_\ell})\}$ and terminal sequences $\{\sigma_\ell\}$ such that every probabilistic online embedding of $\sigma_\ell$ into HSTs has expected distortion at least $\Omega\left(\log k \log \Phi_\ell/\log \log k\right)$,
where $k = |\sigma_\ell|$ and $\Phi_\ell$ is the aspect ratio of $(V_\ell, d_{V_\ell})$ which can be as large as $2^{k^{1-\delta}}$ for any fixed $\delta \in (0,1)$.
\end{theorem}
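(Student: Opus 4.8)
The plan is to prove the bound via Yao's minimax principle together with a multi‑scale adversarial construction that exploits the rigidity of HST distances. First, observe that a probabilistic online embedding with expected distortion $D$ is a distribution over deterministic online embeddings $f$ into HSTs satisfying $E_f[d_T(u,v)]\le D\cdot d_{V_\ell}(u,v)$ for every pair (here $d_T$ denotes distance in the target HST). Hence it suffices to construct, for each $\ell$, a metric $(V_\ell,d_{V_\ell})$ together with an \emph{oblivious} distribution $\mathcal{D}_\ell$ over terminal sequences $\sigma_\ell$ and a rule that, given $f$, designates a ``target pair'' $(u,v)$, such that $E_{\sigma\sim\mathcal{D}_\ell}\bigl[d_T(u,v)/d_{V_\ell}(u,v)\bigr]=\Omega(\log k\log\Phi_\ell/\log\log k)$; since the target pair's stretch never exceeds the distortion, this lower‑bounds $D$. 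The reason $\mathcal{D}_\ell$ can be taken to be a distribution rather than an adaptive adversary is that its randomness will only be a small ``zoom‑in'' choice at each distance scale, which we will show succeeds against every $f$ in expectation.

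Next, the construction: build $V_\ell$ as a line‑like skeleton of $R=\Theta(\log\Phi_\ell)$ geometrically separated scales $\Delta_1\gg\Delta_2\gg\cdots\gg\Delta_R$ (the skeleton alone is essentially the line‑metric instance forcing $\Omega(\log\Phi)$ distortion into HSTs, which supplies the $\log\Phi_\ell$ factor), with a rich substructure of $\Theta(k)$ terminals planted at the finest scales. The adversary reveals the finest structure first; then, at each coarser scale $j$, it reveals a new point $y_j$ that is close (distance $\approx\Delta_{j+1}$) to one side of the current region and far (distance $\approx\Delta_j$) from the other. Since the leaf‑metric of an HST is, up to constants, an ultrametric — so for any three terminals the two largest pairwise distances are equal up to a constant — once $f$ has committed to a clustering of the current region it is forced to set $d_T(y_j,z)=d_T(\text{cluster of }z,\,y_j)$ for every $z$ on the near side, so its only freedom is the clustering it fixed before $y_j$ appeared. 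The distribution $\mathcal{D}_\ell$ picks $y_j$ inside whichever sub‑cluster has the largest committed diameter; a pigeonhole over the $O(\log k)$ relevant sub‑scales shows that this choice forces a multiplicative blow‑up of $\Omega(\log k/\log\log k)$ on a controlled pair, the $\log\log k$ being exactly the price of the oblivious (rather than adaptive) choice.

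Then, the telescoping: maintain after scale $j$ a target pair $(u_j,v_j)$ and a ``scale certificate'' $d_T(u_j,v_j)\ge c_j\cdot(\text{honest distance of }(u_j,v_j)\text{ as realized at scales }\le j)$. The key claim is that because $f$ is online it cannot revise its scale‑$\le j$ decisions, so the scale‑$(j{+}1)$ gadget planted inside the committed cluster multiplies $c_j$ by a further $\Omega(\log k/\log\log k)$ factor; ultrametric rigidity is precisely what prevents $f$ from recouping this loss, and is the reason Theorem~\ref{thm:dist-HST-lb} is stronger than Theorem~\ref{thm:dist-lb} for general trees, where the embedder has enough slack to amortize. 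Iterating over all $R$ scales gives expected stretch $\Omega(R\log k/\log\log k)$ on the final target pair. Choosing the branching of the substructure, $R$, and the ratios $\Delta_j/\Delta_{j+1}$ appropriately yields $|\sigma_\ell|=k$ and $R=\Theta(\log\Phi_\ell)$ with $\Phi_\ell$ as large as $2^{k^{1-\delta}}$ for any fixed $\delta\in(0,1)$ — affordable because the $\Delta_j$ may be spaced doubly exponentially, so an enormous aspect ratio costs no additional terminals — and one checks that the trees involved satisfy the geometric‑decrease condition.

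The main obstacle is the telescoping step: showing that the per‑scale losses genuinely compound rather than being amortized away by a clever online embedder. This requires a clean invariant tying the HST's treatment of the target pair at scale $j$ to its committed treatment at scale $j-1$, together with the right pigeonhole to extract a sub‑cluster that is simultaneously worst against \emph{every} deterministic $f$ while costing only a $\log\log k$ factor. The Yao reduction, the parameter arithmetic, and verifying the HST axioms are routine once this invariant is in hand.
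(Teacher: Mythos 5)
There is a genuine gap: the entire difficulty of the theorem is concentrated in your ``telescoping'' step, and you explicitly defer it (``the main obstacle \ldots requires a clean invariant \ldots the right pigeonhole'') without ever supplying the invariant, the gadget, or the pigeonhole. What you have written is a plan for a proof, not a proof. For comparison, the paper makes this step concrete with three ingredients: a base graph of girth $\Theta(\log m)$ built by the probabilistic method, so that by the Rabinovich--Raz bound a constant fraction of its edges must be stretched by $\Omega(\log m)$ in \emph{any} tree embedding (Lemmas~\ref{lem:base-HST} and \ref{lem:base-tree}); a coarse-to-fine random recursive replacement of a uniformly random active edge, so the online embedder must commit to each scale before knowing where the next gadget lands; and the HST cut-set property (Proposition~\ref{prop:HST-cut}), which guarantees that once the level-$j$ pair is stretched, a uniformly random active edge at every finer level inherits that stretch with probability at least $1/\phi'_m$, so the per-level stretch contributions \emph{add} to give $\Omega(\ell\log m)$ (Lemma~\ref{lem:HST}). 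Your proposal names ``ultrametric rigidity'' as the mechanism but never shows how a single oblivious zoom-in choice defeats every deterministic embedder by $\Omega(\log k/\log\log k)$ per scale, which is exactly the claim that needs proof.

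Several of the surrounding assertions are also wrong or inconsistent as stated. (1) A geometrically separated ``line-like skeleton'' does \emph{not} by itself force $\Omega(\log\Phi)$ distortion into HSTs: well-separated scales on a line are $O(1)$-embeddable into an ultrametric offline, so the $\log\Phi_\ell$ factor must come from the interaction of $\Theta(\log\Phi_\ell)$ online rounds, not from the skeleton's metric. (2) Revealing the finest structure first contradicts the zoom-in adversary you describe two sentences later and removes the online embedder's need to commit early, which is the whole source of hardness. (3) If each scale genuinely ``multiplies $c_j$ by $\Omega(\log k/\log\log k)$,'' then $R$ scales would give $(\log k/\log\log k)^{R}$, vastly exceeding the $O(\log k\log\Phi)$ upper bound; the accumulation must be additive in the stretch, and showing additivity is precisely the unproved step. (4) The $\log\log k$ loss is left unexplained; in the paper it has a specific origin, namely that the base graph has width $\phi'_m=\Theta(\log m)$, so the recursion supports only $\ell=\log\Phi_\ell/\log\phi'_m=\Omega(\log\Phi_\ell/\log\log k)$ levels. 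Without a concrete gadget and a proved compounding lemma, the proposal does not establish the theorem.
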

\noindent This is almost-tight and also implies that as a function of $k$, a dependence of some power of $k$ is required.

Our second main result is that despite the lower bound of Theorem~\ref{thm:dist-HST-lb}, which forces a dependence of $\Phi$ on the embedding distortion, for a broad class of online problems, it is still possible to use the HST embedding of Theorem~\ref{thm:dist-ub} in a more clever manner such that the overhead of using the embedding is only $O(\min\{\log k \log (kr), \log^3 k\})$, where $r$ is the number of requests. This is much less than the distortion of the embedding, effectively bypassing the distortion lower bound of Theorem~\ref{thm:dist-HST-lb}. We call this class of problems \emph{abstract network design}. It captures problems on metric spaces that are amenable to the usual tree embedding approach, including well-known network design problems such as group Steiner forest and buy-at-bulk network design, and other online problems involving metrics such as $s$-server\footnote{This is usually called $k$-server but we use $s$ to denote the number of servers to avoid confusion with our use of $k$.}, reordering buffer, and constrained file migration.

\paragraph{Abstract network design.}
The basic idea behind the definition is to capture problems on graphs that are amenable to the usual tree-embedding-based approach, and whose feasible solutions can be defined in terms of a subset of vertices, called terminals.

In an instance of abstract network design, the algorithm is given a connected graph $G = (V,E)$ with edge lengths $d : E \rightarrow \mathbb{R}_{\geq 0}$. At each time step $i$, the algorithm is given a request that consists of a set of terminals $Z_i \subseteq V$. Let $\mathcal{Z}_i=\cup_{j \leq i} Z_j$, the set of terminals seen so far. The algorithm has to respond with a response $\mathcal{R}_i=(R_i, C_i)$, where $R_i \subseteq E$ is a subgraph of $G$, and a \emph{connectivity list} $C_i$ which is an ordered subset of terminal pairs from ${\mathcal{Z}_i \choose 2}$. A solution to the first $i$ requests is a sequence of responses $\mathcal{S}_i:= (\mathcal{R}_1, \mathcal{R}_2, \ldots, \mathcal{R}_i)$. The algorithm is also given at each time step a \emph{feasibility function} $\mathcal{F}_i=\mathcal{F}_i[Z_1, Z_2, \ldots, Z_i]$ mapping a sequence of connectivity lists $(C_1, C_2, \ldots, C_i)$ to $\{0,1\}$. The solution $\mathcal{S}_i$ is \emph{feasible} iff $\mathcal{F}_i(C_1, C_2, \ldots, C_i)=1$ and every pair in $C_j$ is connected in $R_j$ for each $j \leq i$.  A valid algorithm must maintain feasible solutions $\mathcal{S}_1, \mathcal{S}_2, \ldots \mathcal{S}_r$.

A feasibility function is called \emph{memoryless} if it obeys that whenever $\mathcal{F}_i(C_1, C_2, \ldots, C_i)=1$, then for all $(C_1', C_2', \ldots, C_{i-1}')$ such that $\mathcal{F}_{i-1}(C_1', C_2', \ldots, C_{i-1}')=1$, it holds that $\mathcal{F}_i(C_1', C_2', \ldots, C_{i-1}',C_i)=1$. E.g., most standard network design problems, such as Steiner tree, possess a memoryless feasibility function.

Next, we specify how the cost of a solution $\Scal$ is determined. To include a wide class of problems, we use the general framework of subadditive functions: a set function $f$ is \emph{subadditive} if for any pair of sets $A,B$, it holds $f(A \cup B) \leq f(A) + f(B)$. In particular, we define the cost of using an edge $e$ to be its length $d(e)$ times a subadditive function of the subset $I$ of time steps $i$ in which $e \in R_i$. At each time step $i$, the algorithm is given oracle access to a \emph{load function} $\rho_i: 2^{\{1, \ldots, i\}} \to \mathbb{R}_{\geq 0}$ that is  subadditive, monotone-increasing and satisfies $\rho_i(I) = 0$ if and only if $I = \emptyset$. We also require that $\rho_i$ extends $\rho_{i-1}$ for each $i$. To simplify notation, for a set $I$ of time steps, we write $\rho(I) = \rho_j(I)$ where $j$ is the latest time step in $I$. The cost of a solution $\mathcal{S}_i$ is $\cost(\mathcal{S}_i) = \sum_{e \in E} d(e) \rho(\{ j \leq i : e \in R_j\})$. 
The goal is to minimize the cost. When the algorithm is randomized, we measure the expected cost of its solution.
We denote the number of requests in the online problem by $r$ and write $\Zcal = \Zcal_r$, and we use $k$ to denote the total number of terminal points $|\mathcal{Z}_r|$. We also define the parameter $\lambda_\rho = \min\{\max_I \rho(I)/\min_{I \neq \emptyset}\rho(I), r\}$ which is, intuitively,  a measure of the level of subadditivity of the function.

It is often convenient to assume w.l.o.g. (the proof is in Appendix \ref{metric_completion}) that $G$ is a complete graph, with edge lengths satisfying triangle inequality.

One can also consider \emph{metric-oblivious} abstract network design, where the input graph $G$ is not given to the algorithm in advance. In this case we will assume that $G$ is a metric graph and at each time step $i$, subset $Z_i$ together with the pairwise distances of $\mathcal{Z}_i$ are revealed to the algorithm. 
All our results generalize to the metric-oblivious version. For simplicity of the exposition, in the rest of the paper we assume that the input graph $G$ is given to the algorithm.

For the sake of concreteness, we demonstrate how to express four of our applications in our framework. The first three problems share the same load function $\rho$ which satisfies $\rho(I) = 1$ when $I \neq \emptyset$ and $\rho(\emptyset) = 0$; note that $\lambda_\rho = 1$.

\begin{example}[Steiner Forest]
  In the online Steiner Forest problem, a sequence of vertex pairs $(s_i, t_i)$ arrives online, and the goal is to maintain a subgraph $G'$ such that each $(s_i,t_i)$ that has arrived so far is connected in $G'$. The cost of a solution $G'$ is the total length of its edges.
  
  In the Abstract Network Design formulation, the terminal set of the $i$-th request consists of $s_i$ and $t_i$. The feasibility function $\mathcal{F}_i$ satisfies $\mathcal{F}_i(C_1, \ldots, C_i) = 1$ if and only if $C_j = \{(s_j,t_j)\}$ for every $1 \leq j \leq i$. 
\end{example}

\begin{example}[Constrained Forest]
  In the online Constrained Forest problem \cite{GoemansW95}, each request consists of a set of terminals $Z_i$ and a \emph{cut requirement function} $g_i : 2^{Z_i} \to \{0,1\}$ that is \emph{proper}: $g_i(\emptyset) = g_i(Z_i) = 0$, $g_i(X) = g_i(Z_i - X)$ for all $X \subseteq Z_i$ and $g_i(X \cup Y) \leq \max\{g_i(X), g_i(Y)\}$ for all disjoint sets $X, Y \subseteq Z_i$. The goal is to maintain a subgraph $G'$ that satisfies each request $(Z_i, g_i)$ seen so far, i.e. for every vertex subset $S \subseteq V$ such that $g_i(S \cap Z_i) = 1$, there is at least one edge in $G'$ that leaves the set $S$. The cost of $G'$ is the total length of its edges.

  In the Abstract Network Design formulation, the terminal set of the $i$-th request is $Z_i$. The feasibility function $\mathcal{F}_i$ satisfies $\mathcal{F}_i(C_1, \ldots, C_i) = 1$ if and only if for every $1 \leq j \leq i$, for every terminal subset $S \subseteq Z_j$ such that $g_j(S \cap Z_j) = 1$, there is at least one edge in $C_j$ that leaves the set $S$. 
\end{example}

\begin{example}[Group Steiner Forest]
    In the online Group Steiner Forest problem, each request consists of a pair of vertex sets $(S_i, T_i)$. The goal is to maintain a subgraph $G'$ such that for each request $(S_i,T_i)$ seen so far, we have that $G'$ connects some vertex in $S_i$ to some vertex in $T_i$. The cost of $G'$ is the total length of its edges.

      In the Abstract Network Design formulation, the terminal set of the $i$-th request is $S_i \cup T_i$. The feasibility function $\mathcal{F}_i$ satisfies $\mathcal{F}_i(C_1, \ldots, C_i) = 1$ if and only if for every $1 \leq j \leq i$, we have that $C_i = \{(s_i,t_i)\}$ for some $s_i \in S_i$ and some $t_i \in T_i$.
    \end{example}

\begin{example}[$s$-server]
  In the $s$-server problem, we are given initial locations of $s$ servers. Each request consists of a single vertex $t_i$. Upon receiving a request, it must be served by moving one of the servers from its current location to $t_i$, incurring a cost equal to the distance travelled.

  In the Abstract Network Design formulation, the terminal set of the $i$-th request consists of the set of points on which the first $i$ requests of the $s$-server instance appeared on and the initial locations of the $s$ servers. The feasibility function $\mathcal{F}_i$ satisfies $\mathcal{F}_i(C_1, \ldots, C_i) = 1$ if for each $j \leq i$, $C_j$ is a single pair $(u,v)$ (which corresponds to moving a server from $u$ to a request location $v$) and the sequence of $(C_1, \ldots, C_i)$ corresponds to a valid movement of the servers. The load function $\rho$ is simply the cardinality function since we pay $d(e)$ each time $e$ is traversed; here $\lambda_\rho = r$.  
\end{example}

\begin{example}[Reordering Buffer]
  In the reordering buffer problem, we are given the initial location of a single server and a parameter $b$ called the buffer size. Each request consists of a single vertex $t_i$. A request is served when the server visits it. At any point in time, all but at at most $b$ requests must be served.

  In the Abstract Network Design formulation, the terminal set of the $i$-th request consists of the initial location of the server and the set of all requests thus far. The feasibility function $\mathcal{F}_i$ and the load function $\rho$ are defined as in the $s$-server problem. 
\end{example}

\noindent Note that the feasibility functions for the online network design problems are memoryless. On the other hand, the $s$-server problem is not memoryless since we need to check that $(C_1, \ldots, C_i)$ corresponds to a valid movement of servers.

Our second main result is that for abstract network design problems satisfying some mild condition, it is possible to use the online embedding of Theorem~\ref{thm:dist-ub} such that the overhead due to the embedding is only $\polylog$ in $k$ and $\lambda_\rho \leq r$. The two ingredients that we need is a ``baseline'' algorithm on general metrics with a finite competitive ratio $\beta$ (that can be arbitrarily large, possibly dependent on $r$, $k$ and $n$), and a scheme for combining any two algorithms $A$ and $B$ to give a ``combined'' algorithm $C$ such that for any instance, the cost of $C$ is not much more than the minimum of the cost of $A$ and $B$.  It has been shown by \cite{ABM93}
that a broad class of problems, namely, Metrical Task Systems, admit a min operator. We also show that memoryless abstract network design problems admit a min operator (Corollary~\ref{cor_min_operator} in Appendix~\ref{onlineMin}).

\begin{definition}[Min operator]\label{def:min_operator-inf}
An online problem admits a \emph{min operator} with factor $\eta\geq 1$ if 
for any two deterministic online algorithms $A$ and $B$, there is a deterministic online algorithm $C$ that on every instance of the problem satisfies $\cost(C) \leq \eta \cdot \min\{\cost(A), \cost(B)\}$, where $\cost(\cdot)$ is the cost of the respective algorithm. Moreover, if either $A$ or $B$ is randomized, then $C$ is also randomized and has $E[\cost(C)] \leq \eta \cdot E[\min\{\cost(A),\cost(B)\}]$. If $\eta=O(1)$, we simply say that the problem admits a min operator.
\end{definition}
\noindent Note that the definition for the case when $A$ and $B$ are deterministic implies the definition for the case when at least one of them is randomized since a randomized algorithm is a distribution over deterministic algorithms.

\begin{theorem}
  \label{thm:subadd-ub}
  Consider an abstract network design problem that admits a min operator and that has a baseline algorithm $B$ on general instances with competitive ratio $\beta$. If there exists an algorithm $A$ that is $\alpha$-competitive on instances where the input graph is an HST then there exists a randomized algorithm that, on every instance, has expected competitive ratio 
  $$O(\alpha \cdot \min\{ \log k \cdot \log (k \alpha \lambda_\rho), \log^2 k \cdot \log(k\alpha)\}).$$

	If $A$ and/or $B$ have additive terms in the competitive ratio, then the additive term is $c^{\log^*(\beta)}$, where $c=O(\max\{a, b\})$, and $a$ and $b$ are the additive terms of $A$ and $B$. 
	
\end{theorem}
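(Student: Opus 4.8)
The plan is to reduce the problem on $G$ to a sequence of instances on HSTs via the probabilistic online embedding of Theorem~\ref{thm:dist-ub}, run $A$ on each, translate solutions back using full extendability, and stitch the pieces together using the min operator together with the baseline $B$. By the metric-completion reduction (Appendix~\ref{metric_completion}) we may assume $G$ is a complete graph with edge lengths obeying the triangle inequality. The naive instantiation --- embed all of $\Zcal$ into a random HST online, run $A$, and map $\OPT_G$ into the HST by $F$ and the HST solution back by the contraction $H$ --- is $O(\alpha\log k\log\Phi)$-competitive; the whole point is to remove the $\log\Phi$ factor, which Theorem~\ref{thm:dist-HST-lb} shows is intrinsic to the embedding object itself, by only embedding the ``relevant'' band of distance scales at any time.

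The first ingredient is a scale-restriction lemma. Write $\rho_{\max}=\max_I \rho(I)$ and $\rho_{\min}=\min_{I\neq\emptyset}\rho(I)$; by subadditivity and the cap at $r$ the effective ratio governing the cost is $O(\lambda_\rho)$. Suppose at the current time we have a monotone estimate $\OPThat$ with $\OPT\le\OPThat\le\gamma\cdot\OPT$. Then: (i) no edge of length exceeding $\OPThat/\rho_{\min}$ can occur in any optimal solution, since even at minimum load it would cost more than $\OPT$; in particular, any two terminals at distance larger than this threshold are placed in distinct components by $\OPT$, so the instance decomposes along such cuts and we may treat each component separately; and (ii) merging every pair of terminals at distance at most $\epsilon\OPThat/(\gamma k\rho_{\max})$ into a single point perturbs the cost of every response by at most $\epsilon\OPT$ worth of extra edges in total (the merging is monotone over time and is performed only once as $\OPThat$ grows). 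Hence within each component we may run the online HST embedding only over the window $[\,\epsilon\OPThat/(\gamma k\rho_{\max}),\ \OPThat/\rho_{\min}\,]$, whose aspect ratio is $O(\gamma k\lambda_\rho/\epsilon)$; taking $\epsilon$ an inverse polynomial, Theorem~\ref{thm:dist-ub} gives expected distortion $O(\log k\,(\log\gamma+\log(k\lambda_\rho)))$, and running $A$ on top of it (translating via full extendability) yields an algorithm of expected competitive ratio $O(\alpha\log k\,(\log\gamma+\log(k\lambda_\rho)))$, valid as long as $\OPThat$ stays a valid estimate. A valid monotone $\OPThat$ is obtained from any finitely competitive algorithm $A'$ by $\OPThat=\cost(A')$, using that solution cost is monotone in time (since $\rho$ is monotone and responses only accumulate) and that $\cost(A')\ge\OPT$ always; if $\OPThat$ ever turns out to undershoot the true optimum, a parallel copy started with a larger guess takes over and the min operator combines them.

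The second ingredient bootstraps this. Online we only know $\OPThat=\cost(B)\le\beta\cdot\OPT$, i.e.\ $\gamma=\beta$. Applying the scale-restriction with $\gamma=\beta$ and combining the resulting algorithm with $B$ via the min operator (which the problem admits by hypothesis, or by Corollary~\ref{cor_min_operator} in the memoryless case) yields $\alg_1$ with competitive ratio $\beta_1=O(\alpha\log k\,(\log\beta+\log(k\lambda_\rho)))$ and additive term $O(\max\{a,b\})$: the multiplicative bound is inherited from the window-algorithm, and the additive term grows by only a constant factor under one application of the min operator. Iterating with $\alg_j$ as the estimator (so $\OPThat=\cost(\alg_j)\le\beta_j\cdot\OPT$, $\gamma=\beta_j$) gives $\beta_{j+1}=O(\alpha\log k\,(\log\beta_j+\log(k\lambda_\rho)))$ with the additive term again multiplied by $O(1)$. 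Since $\log\beta_{j+1}=O(\log(\alpha\log k)+\log\log\beta_j)$ once the first term dominates, $\log\beta_j$ collapses in the iterated-logarithm sense, so after $O(\log^*\beta)$ steps (at which point we stop) $\beta_j=O(\alpha\log k\log(k\alpha\lambda_\rho))$, and the additive term has reached $(O(\max\{a,b\}))^{\log^*\beta}=c^{\log^*\beta}$; the multiplicative ratio does not compound because each $\alg_{j+1}$ inherits the window-algorithm's ratio and $\min$-ing against the (worse-ratio) $\alg_j$ cannot hurt it. This gives the first of the two bounds.

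For the second bound, the $\lambda_\rho$ factor above is wasteful precisely because the only source of width in the window is the $\rho_{\max}/\rho_{\min}$ span; we avoid it by exploiting that at most $\min\{k,\log\Phi\}\le k$ distance scales are genuinely relevant at any time (the footnote to Theorem~\ref{thm:dist-ub}). We partition the relevant scales into $O(k/\log(k\alpha))$ consecutive pieces of width $\le\log(k\alpha)$, run a window-algorithm on the online HST embedding restricted to each piece (expected distortion $O(\log k\log(k\alpha))$, composed with $A$), and combine the $O(k)$ per-piece algorithms through a multi-way version of the min operator, whose overhead when combining $\poly(k)$ algorithms is $\tilde O(\log k)$; the total is $O(\alpha\log^2 k\log(k\alpha))$, with the additive term again driven to $c^{\log^*\beta}$ via the same bootstrap to strip the $\beta$-dependence. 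Running both constructions and taking the better via one more application of the min operator yields the stated bound. I expect the main obstacles to be: (i) making ``relevant scales'' and the restricted, decomposition-based embedding genuinely fully extendable --- in particular, since any valid $H$ forces the embedding to be non-contractive on terminals, the top scale of each piece is tied to a terminal-set diameter rather than directly to $\OPT$, so for forest-type problems (where $\OPT$ can be far below the diameter) one must carefully charge cost across scale windows and across the component decomposition; and (ii) establishing and quantifying the (multi-way) min operator for abstract network design with a nontrivial subadditive load function --- proved for memoryless instances in Appendix~\ref{onlineMin} and available for metrical-task-system-type problems via \cite{ABM93} --- and verifying that across the $O(\log^*\beta)$ recursion levels only the additive term compounds while the multiplicative ratio stays at its fixed point.
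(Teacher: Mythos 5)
Your high-level architecture --- combine a tree-embedding-based algorithm with the baseline $B$ via the min operator, then bootstrap $O(\log^*\beta)$ times so that the $\log\beta$ dependence collapses and only the additive term compounds to $c^{\log^*\beta}$ --- is exactly the paper's outer loop (Theorem~\ref{thm:combine} iterated). But the core mechanism you propose for removing the $\log\Phi$ factor is different from the paper's and has genuine gaps. The paper never restricts the embedding's scales \emph{algorithmically}: it runs the full-scale online embedding of Theorem~\ref{thm:summary}, runs $A$ on it, min-combines with $B$, and does all the scale truncation purely in the \emph{analysis}. The quantity $\min\{\OPT(T),\beta\OPT\}$ automatically caps the contribution of large scales at $\beta\OPT$, and a ``tree sparsity'' argument (Definition~\ref{def:sparsity}, Lemma~\ref{lem:translation-cost}) caps the contribution of scales below $\OPT/\eta$, so the truncated-expectation bound of Claim~\ref{claim:tdist} yields $O(\log k\log(\beta\eta))$ with no changes to the algorithm. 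Your algorithmic window restriction forces you to (i) decompose the instance into components along edges longer than $\OPThat/\rho_{\min}$ and ``treat each component separately'' --- but nothing in the abstract network design definition lets you solve components independently: the feasibility function $\Fcal_i$ is an arbitrary function of the connectivity lists (e.g.\ in group Steiner forest a group may straddle components, and for non-memoryless problems like $s$-server there is global state), so this step is unjustified; and (ii) merge terminals below a threshold that moves as $\OPThat$ and $k$ grow, which you do not reconcile with the online, commitment-respecting nature of the embedding. Moreover, your window $[\epsilon\OPThat/(\gamma k\rho_{\max}),\,\OPThat/\rho_{\min}]$ has width $\Theta(\gamma k\rho_{\max}/(\epsilon\rho_{\min}))$, not $\Theta(\gamma k\lambda_\rho/\epsilon)$: when $\lambda_\rho=r\ll\rho_{\max}/\rho_{\min}$ your first bound degrades to $\log(k\rho_{\max}/\rho_{\min})$. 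The paper obtains the $\lambda_\rho$ dependence via Proposition~\ref{prop:subadd-ar} (a greedy subcollection argument covering the case $\lambda_\rho=r$), which has no analogue in your window construction.

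The second bound is where the gap is most serious. You invoke a multi-way min operator combining $O(k/\log(k\alpha))$ per-band algorithms with overhead $\tilde O(\log k)$. The only min operator available (Definition~\ref{def:min_operator-inf}, Lemma~\ref{min_general_lemma}) is two-way with a constant factor $\eta>1$; composing it along a binary tree over $N=\poly(k)$ algorithms gives overhead $\eta^{\log N}=\poly(k)$, which destroys the bound, and no $\tilde O(\log k)$-overhead multi-way combiner is established or plausible from what is proved. Worse, each per-band algorithm sees only a band of distance scales and so is not by itself a feasible, finitely competitive algorithm for the whole instance --- but the min operator is defined only for combining algorithms each of which is individually feasible, so it is not even clear what object you are combining. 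The paper's route to the $\log^2 k\log(k\alpha)$ bound is entirely analytic and needs no new combining machinery: it exhibits (offline, only in the analysis) an $O(\log k)$-approximate solution in $G$ with tree sparsity $k-1$, via the FRT-type embedding of Theorem~\ref{thm:GNR} plus Steiner-point removal (Claim~\ref{claim:KRS}, Lemma~\ref{lem:OPT-embed}), and plugs it into the same Lemma~\ref{lem:translation-cost}. To repair your proof you would essentially have to move the scale truncation from the algorithm into the analysis, at which point you recover the paper's argument.
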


Assuming there is an $\alpha$-competitive algorithm on HST's implies the existence of a competitive baseline algorithm on general inputs. Particularly, as we show in Theorem \ref{thm:reduction-trees}, the usual tree embedding based approach results in $\beta=O(\alpha\log k \log \Phi)$ competitive algorithm, when the online embedding of Theorem \ref{thm:dist-ub} is used.

Theorem \ref{thm:subadd-ub} states that the additive term of the algorithm for general inputs stems from the additive term in the algorithm on HST instances. 
 We note that for many online algorithms on trees the competitive ratio has no additive term. In particular, this is the case for all online problems we consider in this paper except the $s$-server problem.
However, for the $s$-server problem there is the baseline algorithm with a constant $(2s-1)$-competitive ratio~\cite{KoutsoupiasP95}.

We also note that for 
the abstract network design problems with a memoryless feasibility function, the greedy online algorithm (i.e., at each time step the algorithm outputs the feasible response $R_i$ with the least cost) is $r$-competitive.

\paragraph{Network design problems.}
Our model naturally captures online network design problems and their generalizations to subadditive costs, i.e.~when the load function is subadditive.As mentioned earlier, these problems have memoryless feasibility functions. In Appendix~\ref{onlineMin} (Corollary~\ref{cor_min_operator}), we show that such problems admit a min operator. We now state applications to two network design problems: Constrained Forest, and Group Steiner Forest.

It is easy to see that the Constrained Forest problem can be solved exactly on trees, even with subadditive costs. Thus, we can apply Theorem~\ref{thm:subadd-ub} with $\alpha = 1$ to get:

\begin{corollary}
 There is a randomized $O(\min\{\log k \cdot \log (k\lambda_\rho), \log^3 k\})$-competitive algorithm for the Subadditive Constrained Forest problem.
\end{corollary}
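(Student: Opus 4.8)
The plan is to verify that the Subadditive Constrained Forest problem satisfies the three hypotheses of Theorem~\ref{thm:subadd-ub} and then apply that theorem with $\alpha = 1$; the only genuinely problem-specific work is the last hypothesis.

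First, as explained in the Constrained Forest example above, the problem is an instance of abstract network design: the $i$-th terminal set is $Z_i$, the subadditive load $\rho$ is the instance's load function, and the feasibility function $\mathcal{F}_i(C_1,\dots,C_i)$ returns $1$ iff for every $j \le i$ and every $S \subseteq Z_j$ with $g_j(S \cap Z_j) = 1$, some pair in $C_j$ crosses the partition $(S, Z_j \setminus S)$. This feasibility function is memoryless --- the constraint it imposes at step $i$ refers only to $C_i$ --- so by Corollary~\ref{cor_min_operator} the problem admits a min operator.

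Second, I would show that on any HST (indeed any tree) $T$ the problem has a $1$-competitive online algorithm. The key observation is that the response is essentially forced: deleting a tree edge $e$ splits $Z_j$ into $A_e$ and $Z_j \setminus A_e$, and since $g_j$ is proper we have $g_j(A_e) = g_j(Z_j \setminus A_e)$; the cut constraint for the set $A_e$ can only be met by an edge crossing $(A_e, Z_j \setminus A_e)$, and on a tree $e$ is the \emph{unique} such edge. Hence every feasible response for request $j$ must include $B_j := \{ e \in E(T) : g_j(A_e \cap Z_j) = 1 \}$, and conversely --- using that the violated sets of a proper $g_j$ behave well along the tree --- the edge set $B_j$, together with a connectivity list $C_j$ whose pairs have tree-paths contained in $B_j$, is itself feasible for request $j$. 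Since $\cost(\Scal_i) = \sum_{e} d(e)\,\rho(\{j \le i : e \in R_j\})$ is monotone in the responses and each $R_j$ is forced to equal $B_j$, the online algorithm that outputs $R_j = B_j$ is optimal on $T$, even with subadditive $\rho$; thus $\alpha = 1$ with no additive term.

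Finally, since $\alpha = 1$, the remark following Theorem~\ref{thm:subadd-ub} (formalized in Theorem~\ref{thm:reduction-trees}) supplies a baseline algorithm on general instances with competitive ratio $\beta = O(\log k \log \Phi)$ and no additive term. Plugging $\alpha = 1$ into Theorem~\ref{thm:subadd-ub} gives expected competitive ratio $O(\min\{\log k \cdot \log(k\lambda_\rho),\ \log^2 k \cdot \log k\}) = O(\min\{\log k \log(k\lambda_\rho),\ \log^3 k\})$, and since both the tree algorithm and the baseline have additive term $0$, the $c^{\log^*(\beta)}$ additive term vanishes. The main obstacle is the tree-optimality claim of the second step --- in particular checking that the forced edge set $B_j$ can be realized as a pair $(C_j, R_j)$ that is feasible for request $j$, which is exactly where the properness axioms $g_j(X) = g_j(Z_j - X)$ and $g_j(X\cup Y) \le \max\{g_j(X), g_j(Y)\}$ are needed; everything else is bookkeeping against the framework already established.
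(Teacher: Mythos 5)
Your proposal is correct and follows exactly the paper's route: the paper likewise invokes memorylessness (hence the min operator via Corollary~\ref{cor_min_operator}) and applies Theorem~\ref{thm:subadd-ub} with $\alpha=1$, asserting without proof that Constrained Forest "can be solved exactly on trees, even with subadditive costs." Your forced-edge-set argument (every feasible $R_j$ must contain $B_j=\{e : g_j(A_e\cap Z_j)=1\}$, and $B_j$ is itself feasible by properness) correctly supplies the detail the paper leaves implicit.
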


For the Group Steiner Forest problem, we restrict our attention to the load function $\rho$ where $\rho(I) = 0$ if $I = \emptyset$ and $\rho(I) = 1$ otherwise, i.e. we simply want to minimize the total length of edges used; note that $\lambda_\rho = 1$. Naor, Panigrahi and Singh~\cite{NaorPS11} gave an $O(\log^4 k \log r)$-competitive algorithm on HSTs, which yields an $O(\log n \log^4 k \log r)$-competitive algorithm on general graphs when combined with the usual $O(\log n)$-distortion embedding into HSTs. Applying Theorem~\ref{thm:subadd-ub} with $\lambda_\rho = 1$ and $\alpha = \log^4 k \log r$ gives us:

\begin{corollary}
  There is a randomized $\tilde{O}(\log^6 k \cdot \log r)$-competitive algorithm for the Group Steiner Forest problem. 
\end{corollary}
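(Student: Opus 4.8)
The plan is to obtain the corollary as a direct instantiation of Theorem~\ref{thm:subadd-ub}, so all the work lies in checking the hypotheses and bookkeeping the parameters. First I would recall from the Group Steiner Forest example that the problem is an abstract network design problem: the terminal set of request $i$ is $S_i \cup T_i$, and the feasibility function $\mathcal{F}_i$ records that $C_i$ is a single pair $(s_i,t_i)$ with $s_i \in S_i$ and $t_i \in T_i$. This feasibility function is memoryless, so by Corollary~\ref{cor_min_operator} the problem admits a min operator. The relevant load function is $\rho(I)=1$ for $I \neq \emptyset$ and $\rho(\emptyset)=0$, which is subadditive, monotone-increasing, vanishes only on $\emptyset$, and has $\lambda_\rho = 1$.

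Next I would supply the two algorithmic ingredients demanded by Theorem~\ref{thm:subadd-ub}. For the baseline on general instances, one can take the greedy algorithm, which is $r$-competitive for memoryless abstract network design problems and has no additive term, so $\beta = r$ (alternatively, Theorem~\ref{thm:reduction-trees} gives an $O(\alpha \log k \log \Phi)$-competitive baseline via the online HST embedding). On HSTs, we use the algorithm of Naor, Panigrahi and Singh~\cite{NaorPS11}, which is $\alpha$-competitive with $\alpha = O(\log^4 k \log r)$ and again carries no additive term. All hypotheses of Theorem~\ref{thm:subadd-ub} are now in place.

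Plugging $\lambda_\rho = 1$ and $\alpha = O(\log^4 k \log r)$ into the bound of Theorem~\ref{thm:subadd-ub}, and using the first branch of the minimum, yields expected competitive ratio
$$O\left(\alpha \cdot \log k \cdot \log(k\alpha)\right) \;=\; O\left(\log^4 k \cdot \log r \cdot \log k \cdot \log\left(k \log^4 k \cdot \log r\right)\right).$$
Since $\log\left(k \log^4 k \cdot \log r\right) = \log k + O(\log\log k) + \log\log r$, and these additive $\log\log$ contributions are dominated by the stated factors (and absorbed into the $\tilde O$ notation, consistent with the form of the claimed bound), this simplifies to $\tilde O(\log^6 k \cdot \log r)$. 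Finally, by the last part of Theorem~\ref{thm:subadd-ub}, the additive term of the resulting algorithm is $c^{\log^*(\beta)}$ with $c = O(\max\{a,b\})$ where $a,b$ are the additive terms of the HST algorithm and baseline; since both are $0$, the final algorithm has no additive term. There is no genuine obstacle here — the only points requiring a little care are (i) confirming that the Group Steiner Forest feasibility function is memoryless so that Corollary~\ref{cor_min_operator} applies, and (ii) being explicit that the $\log\log k$ and $\log\log r$ factors coming from $\log(k\alpha)$ are lower order and are what the $\tilde O$ hides.
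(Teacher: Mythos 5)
Your proposal is correct and follows essentially the same route as the paper: instantiate Theorem~\ref{thm:subadd-ub} with $\lambda_\rho = 1$ and $\alpha = O(\log^4 k \log r)$ from Naor--Panigrahi--Singh, with the min operator supplied by the memoryless feasibility function via Corollary~\ref{cor_min_operator}. Your extra bookkeeping (verifying the hypotheses, tracking the additive term, and noting that the $\log\log$ factors from $\log(k\alpha)$ are absorbed into the $\tilde O$) is accurate and merely makes explicit what the paper leaves implicit.
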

\noindent Even though our techniques apply to any subadditive load function, there is no known algorithm that can handle general subadditive costs with a polylogarithmic competitive ratio, even on trees.

\paragraph{Other online metric problems.} We consider several more applications of Theorem \ref{thm:subadd-ub}: the constrained file migration~\cite{BartalFR92, AwerbuchBF93, BartalDP96}, $s$-server~\cite{ManasseMS88}, and reordering buffer~\cite{RackeSW02} problems. As all these problems are Metrical Task Systems, they admit a min operator (\cite{ABM93}), implying the applicability of the theorem.

\begin{corollary}\label{cor_paging}
There is a randomized algorithm for the constrained file migration problem with competitive ratio $O(\log^4 m)$, where $m$ is the sum of the number of different pages requested and the number of different locations in the requests thus far.
\end{corollary}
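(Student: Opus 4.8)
\begin{proofof}{Corollary~\ref{cor_paging}}
The plan is to realize constrained file migration as an abstract network design problem and then invoke Theorem~\ref{thm:subadd-ub}. First I would cast it in the framework exactly as in the $s$-server example: the terminal set of the $i$-th request consists of the location of that request together with the initial locations of the pages; the feasibility function $\Fcal_i$ encodes that $(C_1,\dots,C_i)$ corresponds to a legal sequence of page movements respecting the memory constraints (so, as for $s$-server, it is not memoryless); and the load function is the cardinality function, since a unit of cost is paid each time an edge is traversed, so that $\lambda_\rho = r$. Since every terminal is either a request location or the initial location of a page, the number of distinct terminals satisfies $k = O(m)$, with $m$ as in the statement, and hence $\log k = O(\log m)$.

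Next I would check the two structural hypotheses of Theorem~\ref{thm:subadd-ub}. Constrained file migration is a Metrical Task System, so by~\cite{ABM93} it admits a min operator. For a baseline algorithm on general metrics it suffices to take the one obtained from the usual tree-embedding reduction (Theorem~\ref{thm:reduction-trees}) applied to an algorithm on HSTs; its competitive ratio $\beta = O(\alpha \log k \log \Phi)$ is finite, and since $\beta$ enters the final bound only through the additive term $c^{\log^*\beta}$, its precise value is immaterial. (Alternatively, the $O(N)$-competitive work-function algorithm for an $N$-state Metrical Task System may be used.)

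The one genuine ingredient is an $\alpha$-competitive randomized algorithm for constrained file migration \emph{on HSTs} with $\alpha = O(\log m)$, which follows from prior work on constrained file migration~\cite{BartalFR92, AwerbuchBF93, BartalDP96}. Given this, Theorem~\ref{thm:subadd-ub} applies directly. Since $\lambda_\rho = r$ may be large, I would use the second term in the minimum, namely $\log^2 k \cdot \log(k\alpha)$; substituting $k = O(m)$ and $\alpha = O(\log m)$ yields expected competitive ratio
\[
O\left(\alpha \cdot \log^2 k \cdot \log(k\alpha)\right)
= O\left(\log m \cdot \log^2 m \cdot \log m\right)
= O(\log^4 m).
\]
Since the HST algorithm has no additive term, neither does the baseline nor the final algorithm beyond the negligible $c^{\log^*\beta}$ contribution, which completes the argument.

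The step I expect to require the most care is the HST ingredient: one must obtain a competitive ratio that is polylogarithmic in $m$ --- the number of distinct requested pages and request locations --- rather than in the size $n$ of the whole location metric, or in the number of states of the induced Metrical Task System (which would only give $\poly(m)$). One must also verify that $m$ genuinely dominates $k$ and every other quantity that feeds into Theorem~\ref{thm:subadd-ub}. Everything else is a routine substitution into that theorem.
\end{proofof}
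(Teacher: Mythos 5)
Your proposal is correct and follows essentially the same route as the paper: cast constrained file migration as an abstract network design instance with cardinality load function, use the MTS min-operator of \cite{ABM93}, plug the $O(\log m)$-competitive HST algorithm into Theorem~\ref{thm:subadd-ub}, and take the second term of the minimum to get $O(\log m\cdot\log^2 m\cdot\log(m\log m))=O(\log^4 m)$. The paper's proof is just this substitution (citing \cite{Bar96} for the HST ingredient), so your write-up is a faithful, more detailed version of the same argument.
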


The corollary follows from applying Theorem~\ref{thm:subadd-ub} with the value of $\alpha$ (the randomized competitive ratio for HSTs) being $O(\log m)$ \cite{Bar96}.

We can assume w.l.o.g. that each node in the network holds only one file. Thus, the request of the node $v$ to access the file $F$ is associated with moving the file from the node $w$ that holds it, to the node $v$.  

\begin{corollary}\label{cor_server}
There is a randomized algorithm for the $s$-server problem with competitive ratio $O(\log^2 s \log r(\log r + \log\log s))$.
\end{corollary}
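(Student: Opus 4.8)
The plan is to derive the bound as a direct instantiation of the general reduction in Theorem~\ref{thm:subadd-ub}, applied to the $s$-server problem in the abstract network design formulation already spelled out in the excerpt: the request at step $i$ reveals the current request point together with all earlier request points and the $s$ initial server locations, and the load function $\rho$ is the cardinality function (we pay $d(e)$ each time edge $e$ is traversed), so that $\lambda_\rho = r$. To invoke Theorem~\ref{thm:subadd-ub} I need to supply three things for $s$-server: a min operator, a baseline algorithm on arbitrary metrics with some finite competitive ratio $\beta$, and an $\alpha$-competitive algorithm when the input graph is an HST.

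All three are available off the shelf. First, $s$-server is a metrical task system --- the states are server configurations, the transition cost between two configurations is the cost of the cheapest matching moving one to the other, and a request is a task that forbids every configuration not covering the request point --- so by~\cite{ABM93} it admits a min operator with $\eta = O(1)$, as already noted in the excerpt. Second, the work-function algorithm of Koutsoupias and Papadimitriou~\cite{KoutsoupiasP95} is $(2s-1)$-competitive on every metric space with no additive term, giving a baseline with $\beta = 2s-1$. Third, the $k$-server algorithm of~\cite{BCLLM18} is $O(\log^2 s)$-competitive on every HST, with the competitive ratio depending only on the number $s$ of servers (not on the size or depth of the tree) and again with no additive term; so we may take $\alpha = O(\log^2 s)$. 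Since neither the HST algorithm nor the baseline carries an additive term, the algorithm produced by Theorem~\ref{thm:subadd-ub} has none either.

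It then remains to substitute these parameters. For an $s$-server instance with $r$ requests the number of terminals is $k = |\mathcal{Z}_r| \le r + s$; since we may assume $s \le r$ (by a standard argument: any server that never serves a request may be left stationary in the optimum without increasing its cost, so only $\min\{s,r\}$ servers are ever moved in an optimal solution), we get $\log k = O(\log r)$, and using $\alpha = O(\log^2 s)$ and $\lambda_\rho = r$ we obtain $\log(k\alpha\lambda_\rho) = O\!\left(\log(r^2\log^2 s)\right) = O(\log r + \log\log s)$. Taking the first term of the $\min$ in Theorem~\ref{thm:subadd-ub} then gives expected competitive ratio
$$O\!\left(\alpha\cdot\log k\cdot\log(k\alpha\lambda_\rho)\right) \;=\; O\!\left(\log^2 s\cdot\log r\cdot(\log r + \log\log s)\right),$$
which is exactly the claimed bound.

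The only steps needing care, the rest being routine substitution, are the HST ingredient and the reduction to $s \le r$. For the former one must quote an HST bound whose dependence is genuinely only on the number of servers: the framework runs the HST algorithm on a tree with $\Theta(k)$ leaves, so any residual $\log(\text{leaves})$ factor would reappear as a $\log k$ and spoil the simplification of $\log(k\alpha\lambda_\rho)$ down to $\log r + \log\log s$. For the latter, if one prefers not to invoke the reduction one still obtains $O(\log^2 s\cdot\log^2(r+s))$ directly from Theorem~\ref{thm:subadd-ub}, which already matches the stated bound whenever $s = \poly(r)$.
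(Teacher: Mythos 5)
Your instantiation of Theorem~\ref{thm:subadd-ub} (min operator via metrical task systems~\cite{ABM93}, baseline $\beta=2s-1$ from the work function algorithm, $\alpha=O(\log^2 s)$ on HSTs from~\cite{BCLLM18}, $\lambda_\rho=r$, and the final substitution) is exactly the paper's route and the arithmetic is right. The one step that does not hold up is the reduction to $s\le r$, and unfortunately it is the step that carries the whole corollary in the regime where $s$ is super-polynomial in $r$.

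Your justification --- that an optimal solution moves at most $\min\{s,r\}$ servers --- is a statement about the structure of $\OPT$, and it does not shrink the parameter $k$ that enters Theorem~\ref{thm:subadd-ub}. In the Abstract Network Design formulation of $s$-server, the terminal set at every time step contains \emph{all} $s$ initial server locations (the feasibility function must validate movements of any server, and the tree algorithm needs all server positions embedded to decide which server to move), so $k=r+s$ regardless of which servers $\OPT$ happens to use. To actually "assume $s\le r$" you would need to commit, online and without knowing $r$ or the future requests, to a subset of $r$ servers that competes with an adversary free to use any of the $s$ servers --- and no such commitment is available (the request could land next to a server you discarded). As you yourself note, without the reduction you only get $O(\log^2 s\cdot\log^2(r+s))$, which matches the claimed bound only when $s=\poly(r)$; so the claim is genuinely unproved when $s\gg r$. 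The paper closes this case differently (Appendix~\ref{app:server}): it reduces, at the cost of an additive constant depending only on the initial configuration, to the instance in which all $s$ servers start at a single common point $v_0$. Then the $s$ initial locations collapse to one terminal, $k=r+1$, and $\log k=O(\log r)$ holds with no assumption relating $s$ and $r$. Replacing your $s\le r$ step with that co-location argument (and noting that the resulting additive term is absorbed into the competitive-ratio definition) repairs the proof.
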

We can assume w.l.o.g. that all the servers are located at the same location at the beginning (the proof is in Appendix ~\ref{app:server}). Therefore, to conclude the corollary, we apply Theorem~\ref{thm:subadd-ub} with the following parameters: $\alpha=O(\log^2 s)$, by \cite{BCLLM18}, $k=r+1$, and $\lambda_{\rho}=r$. We note that there is an alternative approach to the problem with overhead $O(\log s \log \Phi)$, providing a competitive ratio $O(\log^3 s \log \Phi)$ (\cite{BCLLM18}). Our bounds provide an improvement when $r$ is small.

\begin{corollary}\label{cor_buffer}
 There is a randomized algorithm for the reordering buffer problem with competitive ratio $O(\log b\log r(\log r + \log\log b))$, where $b$ is the buffer size.
\end{corollary}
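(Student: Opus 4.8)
The plan is to instantiate Theorem~\ref{thm:subadd-ub} for the reordering buffer problem, following the same recipe used for the $s$-server problem in Corollary~\ref{cor_server}. Three ingredients are needed: (i) a formulation of reordering buffer as an abstract network design problem, which is exactly the Reordering Buffer example above; (ii) the fact that the problem admits a min operator, which holds because reordering buffer is a Metrical Task System and such problems admit a min operator by~\cite{ABM93}; and (iii) an $\alpha$-competitive algorithm $A$ on HST instances, from which Theorem~\ref{thm:reduction-trees} (applied with the online embedding of Theorem~\ref{thm:dist-ub}) automatically supplies a baseline algorithm $B$ on general instances. For (iii) I would take $A$ to be the known $O(\log b)$-competitive algorithm for reordering buffer on HSTs~\cite{EnglertRW10}, so $\alpha = O(\log b)$; since this algorithm has no additive term, neither does the resulting algorithm, and the $c^{\log^*(\beta)}$ additive term of Theorem~\ref{thm:subadd-ub} plays no role.

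Next I would read off the parameters from the Reordering Buffer example. The terminal set after $r$ requests is the single initial server location together with the (at most $r$) request points, so $k = |\mathcal{Z}_r| \le r+1$. The load function $\rho$ is the cardinality function, just as in the $s$-server problem, hence subadditive and monotone with $\rho(I)=0$ iff $I=\emptyset$, and $\lambda_\rho = r$.

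It remains to substitute $\alpha = O(\log b)$, $k = O(r)$ and $\lambda_\rho = r$ into the bound $O(\alpha\cdot\min\{\log k\cdot\log(k\alpha\lambda_\rho),\ \log^2 k\cdot\log(k\alpha)\})$ of Theorem~\ref{thm:subadd-ub}. The first term of the minimum already suffices: it is $O(\log b\cdot\log r\cdot\log(r\cdot\log b\cdot r)) = O(\log b\cdot\log r\cdot(\log r + \log\log b))$, which is the claimed competitive ratio, with no additive term.

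The only step that is not a routine substitution is ingredient (ii)/(iii): one must check that the reordering buffer constraint ``at most $b$ requests may be unserved at any time'' is faithfully encoded by the feasibility function $\mathcal{F}_i$ (so feasible responses of the abstract network design instance correspond precisely to valid buffer schedules, with $\cost$ matching the total server movement), and that the problem is genuinely an MTS --- its states encoding the current set of buffered requests together with the server position --- so that the min operator of~\cite{ABM93} applies. Everything else follows directly from Theorem~\ref{thm:subadd-ub}.
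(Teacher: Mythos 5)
Your proposal is correct and follows essentially the same route as the paper: apply Theorem~\ref{thm:subadd-ub} with $\alpha=O(\log b)$ from the known HST algorithm for reordering buffers, $k=r+1=O(r)$, and $\lambda_\rho=r$, using the MTS min operator of~\cite{ABM93}, and the first term of the minimum gives the stated bound. The only (immaterial) difference is the citation for the $O(\log b)$-competitive HST algorithm, which the paper attributes to~\cite{ER17}.
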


The corollary follows from applying Theorem ~\ref{thm:subadd-ub} with: $\alpha=O(\log b)$, by \cite{ER17}, $k=r+1=O(r)$ and $\lambda_{\rho}=r$. The existing results hold for the case where the metric is known in advance and the overhead is $O(\log n)$.

\subsection{Our Techniques}
We briefly summarize our main technical contributions here. 

\paragraph{Upper bound for online embedding.}
Our online embedding is based on the construction of Bartal~\cite{Bar96}. The key idea is to use probabilistic partitions with a dynamic padding parameter, which can be bounded by $O(\log k)$ using the analyses of \cite{Bartal04, ABN06}. Achieving dependence on the current number of terminals $k$ rather than the final number of requested points $\hat{k}$ is made possible via a technique of \cite{ABN06}, where the probabilistic partitions are used in the construction with padding parameter polynomial in $k$. We maintain a hierarchical partition of the terminals, one per distance scale, such that for every scale, the partition at that scale is a refinement of the higher-scale partitions. A technical challenge is maintaining the refinement property of the partitions as new distance scales appear over time causing an increase in the aspect ratio $\Phi$. We then show that our construction can be modified to be fully extendable (see Section~\ref{sec:online:emb} for a discussion), which is necessary for its applications.

\paragraph{Lower bounds.}
We use a recursive construction to build the underlying graph. The idea is to use as a ``base graph" a high-girth graph which has many long edge-disjoint paths. The existence of such a graph follows from an application of the probabilistic method. Applying Yao's principle, we use the underlying graph to define a distribution over terminal sequences and vertex pairs such that any deterministic online embedding into a tree has expected distortion $\Omega(\ell\log k)$, where $\ell$ is the number of recursive levels.

\paragraph{Abstract network design.}
To prove Theorem~\ref{thm:subadd-ub}, we augment the usual application of tree embeddings to online problems. Consider an abstract network design problem that admits a min operator and an algorithm for HST instances. Our approach uses the combining scheme to combine the usual tree-embedding-based algorithm and the baseline algorithm. At a high level, the scheme allows us to fall back on the baseline algorithm in the case that the tree solution becomes very expensive due to some graph edges being distorted badly. Essentially, this lets us replace the logarithmic dependency on $\Phi$ with a logarithmic dependency on the competitive ratio of the baseline algorithm, the number of terminals $k$ and $\lambda_\rho$ (Theorem~\ref{thm:combine}). Since the dependency on the competitive ratio of the baseline algorithm decreases exponentially, this idea can be applied repeatedly to completely remove the dependency on the baseline algorithm as well.

\subsection{Other Related Work}
The notion of online embedding was previously considered by \cite{IMS10}, where the authors considered the metric-oblivious model. The only previous work that tries to go beyond the $O(\log n)$-distortion overhead of the usual tree embedding approach are the recent results of Bubeck et al.~\cite{BCLLM18} for the $s$-server problem. They gave a dynamic embedding into HSTs for the $s$-server problem with an overhead of $O(\log s\log \Phi)$, where $\Phi$ is the aspect ratio of the entire underlying metric space.

\paragraph{Previous work on subadditive network design.}
In the offline setting, the only results on general subadditive network design that we are aware of is an $O(\log k)$-approximation by combining the Awerbuch-Azar technique with a simple modification of the tree embedding of \cite{FRT04} due to \cite{GuptaNR10}. (See proof of Lemma~\ref{lem:OPT-embed})
Most of the literature on subadditive network design focuses on the well-studied buy-at-bulk case where the subadditive function $f$ is of the form $f(A) = g(|A|)$ for some concave function $g$. For the buy-at-bulk problem, the best approximation is achieved by the same $O(\log k)$-approximation algorithm for the general subadditive problem and there is a hardness result of $O(\log^{1/4 - \epsilon} k)$~\cite{Andrews04}; in the single-sink setting, where the sinks $t_i$ are equal, many $O(1)$-approximations are known, e.g. \cite{GargKKRSS01, Talwar02,  GuptaKR04, GrandoniI06, GuhaMM09, JothiR09, GrandoniR10, GoelP12}.

In the online setting, the only work on the general subadditive network design problem is due to \cite{GuptaHR06}. Assuming that the underlying metric is known to the algorithm, a derandomization of oblivious network design from \cite{GuptaHR06} gives an $O(\log^2 n)$-competitive algorithm. As in the offline setting, most of the previous work is on the buy-at-bulk problem. For the special cases of the Steiner Tree and Steiner Forest problems, tight deterministic $O(\log k)$-competitive algorithms are known \cite{ImaseW91,BermanC97}. There are also deterministic $O(\log k)$-competitive algorithms for the online rent-or-buy problem \cite{AwerbuchAB04, BartalCI01, Umboh15}.  For the buy-at-bulk problem, the only prior work is on the single-sink case: \cite{GRTU17} gave a deterministic $O(\log k)$-competitive algorithm and also observed that is possible to use online tree embeddings and obtain a randomized $O(\log^2 k)$-competitive algorithm.

\paragraph{Subsequent work.}
Since the initial announcement of this work~\cite{proceedings}, there has been further work on online embeddings and their applications. Newman and Rabinovich~\cite{NewmanR20} considered deterministic online embeddings into normed spaces and into trees. Forster, Goranci and Henzinger~\cite{ForsterGH21} initiated the study of dynamic algorithms for probabilistic tree embeddings. Recently, Bhore, Filtser and T\'{o}th~\cite{BhoreFT24} gave improved bounds on embeddings into $\ell_2$ and generalized Theorem~\ref{thm:dist-ub} by showing an online probabilistic metric embedding into HSTs with expected distortion $O(\mathrm{ddim}\cdot \log \Phi)$ where $\mathrm{ddim}$ is the doubling dimension of the set of points that are being embedded. Note that every $k$-point metric space has doubling dimension $O(\log k)$. In terms of applications, Deryckere and Umboh~\cite{DeryckereU23} applied our online embeddings to the problem of online matching with delays.

\subsection{Organization of the Paper}
Section~\ref{sec:prelim} covers basic notation and terminology used in the rest of the paper. In Section~\ref{sec:extendability}, we discuss how to use a fully extendable online embedding from $(V, d_V)$ to $(M, d_M)$ to (approximately) reduce an abstract network design problem defined over the former to one defined over the latter. Then, we give an overview of our online embedding construction in Section~\ref{sec:online:emb} (details are in Appendix~\ref{online_embedding}). Next, we describe our lower bound construction in Section~\ref{sec:lowerbound}. Finally, we present our online algorithm for abstract network design problems that admit a min operator in Section~\ref{sec:framework}.

\section{Preliminaries}\label{sec:prelim}

\noindent
{\bf Notation and terminology.} The \emph{distortion} of an embedding $f: V \to Y$ is its maximum expansion times maximum contraction, i.e.  $distortion(f)=\max_{u \neq v \in V}\frac{d_Y(f(u),f(v))}{d_V(u,v)} \cdot \max_{u \neq v \in V} \frac{d_V(u,v)}{d_Y((f(u)),(f(v)))}$. We say that the embedding is \emph{non-contractive} if for all $u \neq v \in V$, $d_Y(f(u), f(v)) \geq d_V(u,v)$ and \emph{non-expansive} if for all $u \neq v \in V$, $d_Y(f(u), f(v)) \leq d_V(u,v)$
The \emph{aspect ratio} of $V$ is $\Phi(V)=d_{\max}(V)/d_{\min}(V)$ where $d_{\max}(V)=\max_{u \neq v \in V}d_V(u,v)$ and $d_{\min}(V)=\min_{u \neq v \in V}d_V(u,v)$.\\

\noindent
{\bf Hierarchically separated trees (HSTs).}
HST metrics were defined in \cite{Bar96}: 
\begin{definition} 
A $\mu$-HST metric is the metric defined on the leaves of a rooted tree $T$ with the following properties. Each node $v$ of $T$ has an associated label $\Delta(v) \geq 0$, such that $\Delta(v)=0$ iff $v$ is a leaf, and for any two nodes $u \neq v$, if $v$ is a child of $u$ then $\Delta(v) \leq \Delta(u)/\mu$. The distance between two leaves $u \neq v$ is given by $d_T(u,v)=\Delta(lca(u,v))$, where $lca(u,v)$ is the least common ancestor of $u$ and $v$.\end{definition} 

The following definition is equivalent to the one given above (up to a constant), we use these representations interchangeably through the paper:
\begin{definition}\label{HST_Labels}
A $\mu$-HST metric is the shortest path metric defined on the leaves of a weighted tree $T$ that satisfies the following: (1) the edge weight from any node to each of its children is the same, and (2) the edge weights along any path from the root to any leaf are decreasing by a factor of at least $\mu$.
\end{definition}

\section{Using Online Embeddings for Abstract Network Design}\label{sec:extendability}
In this section, we show how to use a fully extendable online embedding $f$ from $(V, d_V)$ to $(M, d_M)$ to reduce an instance of an abstract network design problem $P$ defined over the former to one defined over the latter, with an overhead equal to the distortion of $f$. Together Later, in Section~\ref{sec:online:emb} and Appendix~\ref{online_embedding}, we give a fully extendable probabilistic online embedding into HSTs (Theorem~\ref{thm:dist-ub}). Combining these together gives us Theorem~\ref{thm:subadd-ub}.

\begin{definition}[Instance induced by embedding]
  Consider an instance of $P$ defined on $(V, d_V)$ with request sequence $Z_1, \ldots, Z_r \subseteq V$, feasibility functions $\Fcal_1, \ldots, \Fcal_r$ and load function $\rho$. Let $f = (f_1, \ldots, f_r)$ be an online embedding of the request sequence into a tree metric $(T, d_T)$ and $T_i = T[\Zcal_i]$. Then, $f$ \emph{induces} the following instance of $P$ on $(T,d_T)$: the sequence of requests of the induced instance is $f(Z_1), \ldots, f(Z_r) \subseteq M$; given a sequence of connectivity lists $C'_i \subseteq {f(\Zcal_i) \choose 2}$, its $i$-th feasibility function $\Fcal'_i$ satisfies $\Fcal'_i(C'_1, \ldots, C'_i) = \Fcal_i(f^{-1}(C'_1), \ldots, f^{-1}(C'_i))$; and it has the same load function $\rho$ as the original instance.
\end{definition}

In Section~\ref{sec:using-trees}, we describe how to use fully extendable online embeddings into tree metrics, which will be needed for the proof of Theorem~\ref{thm:subadd-ub} in Section~\ref{sec:framework}. In Section~\ref{sec:using-general}, we describe how to generalize the approach to work with fully extendable online embeddings into other families of metrics. 

\subsection{Using Online Embeddings into Trees}
\label{sec:using-trees}

For an embedding into a tree metric, it will be useful to have extension $H_i$ of $f^{-1}_i$ to the subtree induced by the terminals seen so far.

\begin{definition}[Fully extendable online tree embedding]\label{def_fully_tree_ext}
  Consider an online embedding $f = (f_1, \ldots, f_k)$ of a sequence of terminal points $x_1, \ldots, x_k$ from $(V, d_V)$ into a tree metric $(T, d_T)$. Let $T_i$ be the subtree of $T$ induced by $f_i(\{x_1, \ldots, x_i\})$. Then, we say that is \emph{fully extendable} if it is fully extendable with respect to $\{T_i\}_{i \leq k}$. For a probabilistic embedding, we require that every tree embedding in its support is fully extendable.

  We say that an online tree embedding algorithm is \emph{fully extendable} if given any input metric space $(V, d_V)$ and any online sequence of terminal points from $(V,d_V)$, the algorithm produces a fully extendable online embedding of the terminal sequence into a tree metric.
\end{definition}

A crucial property of instances on tree metrics is that the algorithm only needs to consider the subtree induced by the set of terminals seen so far. 
\begin{prop}
  \label{prop:tree-containment}
  Let $P$ be an online Abstract Network Design problem. Consider an instance of $P$ on a tree metric $(T, d_T)$ with request sequence $Z_1, \ldots Z_r$, and let $T_i=T[\Zcal_i]$ where $\Zcal_i = Z_1 \cup \ldots \cup Z_i$. Then, for any feasible solution $\Scal = ((R_1, C_1), \ldots, (R_r, C_r))$, there exists a feasible solution $\Scal' = ((R'_1, C_1), \ldots, (R'_r, C_r))$ such that each $R'_i$ is contained in $T_i$ and $\cost(\Scal') \leq \cost(\Scal)$.
\end{prop}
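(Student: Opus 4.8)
The plan is to transform the given solution $\Scal$ response-by-response, at each step projecting the edge set $R_i$ onto the subtree $T_i = T[\Zcal_i]$ via a canonical ``retraction'' map of the tree onto the subtree $T_i$, while keeping the connectivity lists $C_i$ unchanged. The key observation is that in a tree, there is a natural nearest-point retraction $\pi_i \colon T \to T_i$ that maps each vertex $v$ to the (unique) closest vertex of $T_i$ to $v$; equivalently, $\pi_i(v)$ is the first vertex of $T_i$ encountered on any path from $v$ into $T_i$ (this is well-defined precisely because $T_i$ is a subtree, i.e.\ connected). Extending $\pi_i$ to edges in the obvious way (an edge $e=(u,v)$ either maps into a single vertex, if $\pi_i(u)=\pi_i(v)$, or onto the tree-path between $\pi_i(u)$ and $\pi_i(v)$ in $T_i$), I would define $R'_i := \bigcup_{e \in R_i} \pi_i(e) \subseteq T_i$.

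The three things to check are: (1) \emph{$R'_i \subseteq T_i$}, which is immediate from the definition of $\pi_i$; (2) \emph{feasibility}, namely that every pair in $C_i$ that was connected in $R_i$ is still connected in $R'_i$ --- for this, note that all terminals in $\Zcal_i$ already lie in $T_i$, so $\pi_i$ fixes them, and the image under $\pi_i$ of any path in $R_i$ between two terminals is a connected walk in $R'_i$ between the same two terminals; since the connectivity lists are unchanged, the feasibility function's value is unchanged, and memorylessness is not even needed here because we only modify the $R_i$'s; (3) \emph{cost does not increase}. The cost bound is the crux: I need $\cost(\Scal') \le \cost(\Scal)$ where $\cost$ sums $d(e')\,\rho(\{j : e' \in R'_j\})$ over edges $e'$ of $T$. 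The right way to see this is edge-by-edge on the target tree: fix an edge $e'$ of $T_i$ (note $T_1 \subseteq T_2 \subseteq \cdots$, so once $e'$ appears it stays); I claim that whenever $e' \in R'_j$ there is a ``witness'' edge $e$ of $R_j$ with $\pi_j(e) \ni e'$, and moreover one can choose these witnesses consistently across time so that $\{j : e' \in R'_j\} \subseteq \{j : e'' \in R_j\}$ for a suitable single edge $e''$ lying on the fundamental path defined by $e'$ --- then monotonicity of $\rho$ gives $\rho(\{j : e' \in R'_j\}) \le \rho(\{j : e'' \in R_j\})$, and summing $d(e')$ over the $e'$ on the path of a fixed $e''$ telescopes against $d(e'')$ by the tree metric, yielding the total inequality.

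The main obstacle is precisely this last accounting step: since $\rho$ is merely subadditive and monotone (not linear, not a cardinality function), one cannot argue scale-by-scale or additively across edges in a naive way; the argument must be set up so that each original edge $e$ of $T$ is ``charged'' by a whole path of new edges of $T_i$ at once, and one must verify that for a fixed original edge $e$, the set of new edges $e'$ that it is the witness for forms a subpath of $T$ whose lengths sum to at most $d(e)$, while the \emph{time-sets} satisfy the desired containment so monotonicity of $\rho$ applies. Concretely, I would fix, for each $e'\in T$, the edge $e''(e')$ on the ``far side'' of $e'$ from $T_i$ along the ancestor direction (or more precisely the unique edge such that the retraction of any crossing path uses $e'$ only if it uses $e''(e')$); a short case analysis on how a path-edge $e \in R_j$ retracts --- either it collapses, or it covers a contiguous block of $T_i$-edges --- shows that the map $e' \mapsto e''(e')$ has the property that $d(e')$-weights over its fibers sum to at most the original edge lengths and time-containment holds fiberwise, which together with monotonicity and then summation of $d(e')$ over each fiber completes the bound.
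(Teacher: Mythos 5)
Your retraction idea is sound and would yield a correct proof, but you have made the final accounting step far harder than it needs to be, and the paper's own argument is considerably more direct. Two observations collapse your step (3). First, in this setting $R_i$ is a set of edges of the tree $T$ itself and $T_i$ is a connected subtree, so the nearest-point retraction of a single tree edge $e=(u,v)$ onto $T_i$ is either $e$ itself (when $e\in T_i$) or a single vertex: deleting $e$ splits $T$ into two components and the connected set $T_i$ lies entirely in one of them, forcing $\pi_i(u)=\pi_i(v)$ whenever $e\notin T_i$. Hence your $R'_i$ is just $R_i\cap E(T_i)$, the time-sets satisfy $\{j: e'\in R'_j\}\subseteq\{j: e'\in R_j\}$ edge-for-edge with $e''=e'$, and monotonicity of $\rho$ gives $\cost(\Scal')\le\cost(\Scal)$ with no charging, no fibers, no telescoping. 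Your sketched witness scheme is not only unnecessary but, as stated, shaky: if single response edges really could retract onto multi-edge paths, then the witness for $e'$ at time $j_1$ could differ from the witness at time $j_2$, and there is no reason a single $e''$ with $\{j:e'\in R'_j\}\subseteq\{j:e''\in R_j\}$ would exist; luckily the issue is moot. Second, the paper bypasses retraction entirely: it sets $R'_i=\bigcup_{(u,v)\in C_i}p_T(u,v)$, the union of the unique tree paths between the pairs in the (unchanged) connectivity list. Since every pair in $C_i$ consists of terminals of $\Zcal_i$, each such path lies in $T_i$; and since feasibility of $\Scal$ requires each pair in $C_i$ to be connected in the tree subgraph $R_i$, the unique path is already contained in $R_i$. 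Thus $R'_i\subseteq T_i\cap R_i$ and monotonicity of $\rho$ finishes the proof in one line. Both routes work; the paper's buys brevity by exploiting that connectivity in a subgraph of a tree forces containment of the unique path, while yours keeps slightly more of $R_i$ (all of $R_i\cap E(T_i)$ rather than just the needed paths), which is harmless but gains nothing here.
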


\begin{proof}
  Consider the solution $\Scal'$ with $R'_i = \bigcup_{(u,v) \in C_i} p_T(u,v)$, where $p_T(u,v)$ is the unique path in $T$ between $u$ and $v$. Clearly, $\Scal'$ is a feasible solution. We also have that $R'_i \subseteq T_i \cap R_i$. Thus, since the load function is monotone, we get  $\cost(\Scal') \leq \cost(\Scal)$. 
\end{proof}

We are now ready to prove the following reduction theorem.
\begin{theorem}\label{thm:reduction-trees}
  Let $P$ be an online Abstract Network Design problem. Suppose that there exists an online algorithm $\Talg$ for $P$ over tree metrics with competitive ratio $\alpha$, and that there exists a fully extendable online tree embedding algorithm $\Tembed$ with distortion $\gamma$. Then, there exists an online algorithm $\alg$ for $P$ over arbitrary metrics with competitive ratio $\alpha\gamma$. Moreover, when either $\Tembed$ or $\Talg$ is randomized, then $\alg$ is randomized with expected competitive ratio $\alpha\gamma$.
\end{theorem}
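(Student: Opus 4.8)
The plan is to compose the embedding and the tree algorithm, using the extension maps $F_i$ and $H_i$ to translate solutions back and forth between $(V,d_V)$ and $(T,d_T)$ while controlling the cost via the distortion bound. Concretely, I would run $\Tembed$ on the terminal sequence $Z_1,\ldots,Z_r$ arising from requests to $P$ on $(V,d_V)$, producing at each step $i$ a tree embedding $f_i$ together with its extensions $F_i \colon V \to T$ and $H_i \colon T_i \to V$. Feeding the induced instance (with requests $f(Z_1),\ldots,f(Z_r)$ and the pulled-back feasibility functions $\Fcal'_i$) to $\Talg$ gives responses $(R'_i, C'_i)$ on the tree. The algorithm $\alg$ then outputs the response $(R_i, C_i)$ on $(V,d_V)$ defined by $C_i = H_i(C'_i)$ (applying $H_i$ pairwise) and $R_i = \bigcup_{(u,v)\in C_i} p_V(u,v)$, where $p_V$ denotes a shortest path in $G$; since $G$ may be assumed complete and metric, $R_i$ can just be taken to be the set of edges $\{(H_i(u),H_i(v)) : (u,v)\in C'_i\}$.

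I would then verify feasibility and bound the cost. For feasibility: by property (\ref{ext:3}), $H_i$ extends $f_i^{-1}$, so the pulled-back connectivity list $H_i(C'_i)$ is exactly $f_i^{-1}(C'_i)$ on terminal pairs, and since $\Fcal'_i(C'_1,\ldots,C'_i) = \Fcal_i(f^{-1}(C'_1),\ldots,f^{-1}(C'_i)) = 1$ and every pair in $C_i$ is connected in $R_i$ by construction, $\Scal$ is feasible for $P$ on $(V,d_V)$; property (\ref{ext:4}) ensures the $H_i$ are consistent over time so this is a valid online solution. For the cost bound, I would first use property (\ref{ext:3}) (non-expansiveness of $H_i$) together with the triangle inequality in $G$ to show $\cost(\alg) \le \cost_T(\Talg)$, the cost of $\Talg$'s solution on the tree (monotonicity of $\rho$ handles the union-of-paths step, as in Proposition~\ref{prop:tree-containment}). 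Then I would bound $\cost_T(\Talg) \le \alpha \cdot \OPT_T$ where $\OPT_T$ is the optimum of the induced tree instance, and finally bound $\OPT_T \le \gamma \cdot \OPT_V$ by taking an optimal solution on $(V,d_V)$ — which may use Steiner points — and mapping it through $F_i$: property (\ref{ext:2}) gives $d_T(F_i(u),F_i(v)) \le \alpha_i \, d_V(u,v) \le \gamma \, d_V(u,v)$, so each edge's length blows up by at most $\gamma$, and the load function is unchanged, giving a feasible tree solution of cost at most $\gamma \cdot \OPT_V$. Chaining: $\cost(\alg) \le \alpha\gamma\cdot\OPT_V$.

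For the randomized case, I would observe that when $\Tembed$ is a probabilistic embedding, property (\ref{ext:2}) holds only in expectation, so the step $\OPT_T \le \gamma\cdot\OPT_V$ becomes $E[\OPT_T] \le \gamma\cdot\OPT_V$ by linearity of expectation applied edge-by-edge to a fixed optimal solution on $(V,d_V)$ (the same solution is mapped under every embedding in the support). The inequality $\cost(\alg) \le \cost_T(\Talg)$ holds pointwise for every embedding in the support, and $\cost_T(\Talg) \le \alpha\cdot\OPT_T$ holds pointwise (or in expectation over $\Talg$'s internal randomness, in which case we also take expectation), so taking expectations over both sources of randomness yields $E[\cost(\alg)] \le \alpha\gamma\cdot\OPT_V$. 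When $\Talg$ is randomized but $\Tembed$ is deterministic, the argument is identical with expectation only over $\Talg$.

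\textbf{Main obstacle.} The routine part is the cost chaining; the delicate point I expect to need care is the direction of the two extension maps and ensuring that Steiner points are handled correctly on \emph{both} sides. Specifically, $\OPT_V$ may use non-terminal points of $V$, and we need $F_i$ — defined on all of $V$, not just terminals — to map such a solution to a valid tree solution; conversely, $\Talg$'s tree solution may use Steiner nodes of $T$ outside $T_i$, but Proposition~\ref{prop:tree-containment} lets us assume $R'_i \subseteq T_i$ first, after which $H_i$ (defined on $T_i$) can be applied. Getting these two reductions to interlock correctly — replace the tree solution by one inside $T_i$, then pull back via $H_i$; and take the $V$-optimum and push forward via $F_i$ — is the conceptual heart of the argument, and it is exactly what the definition of fully extendable embedding was crafted to enable.
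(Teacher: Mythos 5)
Your overall architecture --- compose $\Tembed$ with $\Talg$, pull the tree solution back via $H_i$, push the $V$-optimum forward via $F$ --- is exactly the paper's, and your feasibility argument, the bound $\OPT(T)\le\gamma\cdot\OPT$ via property~(\ref{ext:2}), and the treatment of the randomized case all match. However, there is a genuine gap in the step $\cost(\alg)\le\cost(\Talg)$, and it comes from your choice of $R_i$. You set $R_i=\{(H_i(u),H_i(v)):(u,v)\in C'_i\}$, one direct $G$-edge per connectivity pair. This charges each pair $(u,v)$ separately an amount $d_V(H_i(u),H_i(v))\le d_T(u,v)=\sum_{e'\in p_T(u,v)}d_T(e')$, so a tree edge $e'$ lying on the paths of many pairs is paid once per pair; in $\cost(\Talg)$ that same edge is paid only $d_T(e')\,\rho(\{i:e'\in R^T_i\})$, i.e.\ essentially once. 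Subadditivity and monotonicity of $\rho$ go the wrong way for you here: they bound $\rho$ of a union by a sum of $\rho$'s, not a sum by $\rho$ of a union, and Proposition~\ref{prop:tree-containment} does not apply because your $R_i$ is not a subgraph of anything whose cost you control. Concretely, take $\rho$ the indicator load and an HST whose leaves $u,v_1,\dots,v_n$ are pairwise at distance $\approx\Delta$, with connectivity pairs $(u,v_1),\dots,(u,v_n)$: the tree solution costs $O(\Delta\log n)$ (the total edge length of the HST) while your translated solution costs $\Omega(n\Delta)$, so no inequality $\cost(\alg)\le\cost(\Talg)$, nor even $\cost(\alg)\le O(1)\cdot\cost(\Talg)$, can hold for this construction.

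The fix is the one the paper uses: first invoke Proposition~\ref{prop:tree-containment} to assume $R^T_i\subseteq T_i$, and then set $R_i=H_i(R^T_i)$, translating the tree subgraph \emph{edge by edge} rather than pair by pair. Then a $G$-edge $\{x,y\}$ is used exactly at the union of the time sets of the tree edges in $H^{-1}(\{x,y\})$, subadditivity of $\rho$ is applied in the helpful direction (splitting that union into a sum over the colliding tree edges), and each term $d_V(H(x'),H(y'))\rho(I^T_{x',y'})$ is bounded by $d_T(x',y')\rho(I^T_{x',y'})$ using non-expansiveness of $H$; summing over tree edges recovers $\cost(\Talg)$ exactly. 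With that single change, the rest of your write-up is correct and coincides with the paper's proof.
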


\begin{proof}
  For simplicity, we only prove the case when both $\Tembed$ and $\Talg$ are deterministic; the proof for the randomized case is similar. The algorithm $\alg$ works as follows. For each request $Z_i$ and feasibility function $\Fcal_i$, the algorithm $\alg$ uses the online embedding algorithm $\Tembed$ to compute the embedding $f_i$ of $\Zcal_i$, and it feeds the request $f(Z_i)$ and feasibility function $\Fcal'_i$ to $\Talg$. Let $(R^T_i,C^T_i)$ be the response of $\Talg$; note that $R^T_i \subseteq T_i$ by Proposition~\ref{prop:tree-containment}. Then, the algorithm $\alg$ constructs its response $(R_i, C_i)$ by translating $\Talg$'s response to $V$ using the extension function $H_i$ of $f^{-1}$, i.e. $R_i = H_i(R^T_i)$ and $C_i = f^{-1}_i(C^T_i)$. Observe that $x,y \in T$ are connected in $R^T_i$ if and only if $H_i(x),H_i(y) \in V$ is connected in $R_i$. Together with the fact that $H_i$ extends $f^{-1}_i$, we get that every pair in $C_i$ is connected in $R_i$. By feasibility of $\Talg$, we also have $\Fcal_i(C_1, \ldots, C_i) = 1$. Thus, the solution $\Scal_i = ((R_1, C_1), \ldots, (R_i, C_i))$ maintained by algorithm $\alg$ is feasible.

  Finally, we analyze the competitive ratio of algorithm $\alg$. Let $\Scal^* = ((R^*_1, C^*_1), \ldots, (R^*_r, C^*_r))$ denote an optimal solution for the instance on $V$ and $\OPT$ be its cost. Also, let $\OPT(T)$ denote the cost of an optimal solution on $T$.

  \begin{claim}
    \label{clm:1}
		$\cost(\alg) \leq \cost(\Talg).$
  \end{claim}

  \begin{proof}
      For $u,v \in V$, define $I_{u,v} = \{i : (u,v) \in R_i\}$ and similarly, for $(x,y) \in T$, define $I^T_{x,y} = \{i : (x,y) \in R^T_i\}$. For brevity, we write $H = H_r$. Since $H$ extends $H_i$ for every $i < r$, we have $R_i = H_i(R^T_i) = H(R^T_i)$ and so $(u,v) \in R_i$ if and only if there exists $(x,y) \in R^T_i$ such that $H(\{x,y\}) = \{u,v\}$. Thus, $ I_{u,v} = \bigcup_{(x,y) \in T: H(\{x,y\}) = \{u,v\}} I^T_{x,y}$. 
  Using subadditivity of $\rho$, we upper bound $\cost(\Scal)$ by $\cost(\Scal^T)$ as follows:
  \begin{align*}
    \cost(\Scal)
    &= \sum_{u,v \in V} d_V(u,v)\rho(I_{u,v}) \\
    &\leq \sum_{u,v \in V} d_V(u,v) \sum_{(x,y) \in T : H(\{x,y\}) = \{u,v\}} \rho(I^T_{x,y})\\
    &= \sum_{(x,y) \in T} d_V(H(x),H(y)) \rho(I^T_{x,y}) \\
    &\leq  \sum_{(x,y) \in T} d_T(x,y) \rho(I^T_{x,y}) = \cost(\Scal^T),
  \end{align*}
  where the last inequality is due to the non-expansiveness of $H$.
  \end{proof}
Since $\Talg$ is $\alpha$-competitive on tree metrics, we have that $\cost(\Scal) \leq \cost(\Scal^T) \leq \alpha\OPT(T)$.

  \begin{claim}
    \label{clm:2}
    $\OPT(T) \leq \gamma \OPT$.
  \end{claim}

  \begin{proof}
    Consider the solution on $T$ formed by embedding $\Scal^*$ into $T$ using the extension function $F$ of $f$, i.e. its responses are $(\Rhat_i, \Chat_i)$ where $\Rhat_i = F(R^*_i)$  and $\Chat_i = F(C^*_i)$. The embedded solution is feasible for the instance on $T$ and has cost at most $\sum_{u,v \in V} d_T(u,v) \rho(I^*_{u,v}) \leq \alpha \sum_{u,v \in V} d_V(u,v) \rho(I^*_{u,v}) = \gamma\OPT$ because $f$ has expansion at most $\gamma$, and the extension $F$ is expansion preserving.
  \end{proof}
  With these claims in hand, we conclude that $\alg$ is $\alpha\gamma$-competitive.
\end{proof}

\subsection{Using Online Embeddings into Other Families of Metrics}
\label{sec:using-general}

The main property of tree metrics that the proof of Theorem~\ref{thm:reduction-trees} uses is Proposition~\ref{prop:tree-containment} which enables the algorithm $\ALG$ to translate the solution on the tree $\Scal^T$ back into the original metric. While the proposition is not true for any arbitrary family $\Mcal$ of metrics, we can still apply the same approach by requiring that the online embedding algorithm $\Membed$ and the online algorithm $\Malg$ for $P$ on metrics in $\Mcal$ satisfy the following additional properties.

Consider an instance of $P$ on a metric $(V, d_V)$ with request sequence $Z_1, \ldots, Z_r$. Let $f$ be an online embedding of $Z_1, \ldots, Z_r$ into a metric $(M, d_M)$ that is fully extendable with respect to $M_1 \subseteq \ldots \subseteq M_r \subseteq M$, and $\Scal^M = ((R^M_1, C^M_1), \ldots, (R^M_r, C^M_r))$ be a solution to the instance on $(M, d_M)$ induced by $f$. Then, $f$ and $\Scal^M$ are \emph{compatible} if $M_i$ contains the vertex set of the subgraph $R^M_i$ for each $1 \leq i \leq r$. When either $f$ or $\Scal^M$ is probabilistic, then we require that the above property holds for every embedding in the support of $f$ and every solution in the support of $\Scal^M$.

Suppose $\Mcal$ is a family of metrics. Let $\Membed$ be an online embedding algorithm that embeds into $\Mcal$, and $\Malg$ be an online algorithm for $P$ on instances over metrics belonging to $\Mcal$, then $\Membed$ and $\Malg$ are compatible if for every instance of $P$ on an arbitrary metric $(V, d_V)$, algorithm $A$ produces an online embedding $f$ of the request sequence into a metric $(M, d_M) \in \Mcal$ such that the solution $\Scal^M$ of $\Malg$ on the instance induced by $f$ on $(M, d_M)$ is compatible with $f$.

\begin{theorem}\label{thm:reduction}
  Let $P$ be an online Abstract Network Design problem and $\Mcal$ be a family of metrics. Suppose that there exists an online algorithm $\Malg$ for $P$ over metrics belonging to $\Mcal$ with competitive ratio $\beta$, and that there exists an online embedding algorithm $\Membed$ that embeds into $\Mcal$ with distortion $\alpha$. Then, there exists an online algorithm $\ALG$ for $P$ over arbitrary metrics with competitive ratio $\alpha\beta$. Moreover, when either $\Membed$ or $\Malg$ is randomized, then $\ALG$ is randomized with expected competitive ratio $\alpha\beta$.
\end{theorem}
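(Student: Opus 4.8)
The plan is to mimic the proof of Theorem~\ref{thm:reduction-trees}, replacing the use of Proposition~\ref{prop:tree-containment} by the compatibility hypothesis. Concretely, I would define the algorithm $\ALG$ as follows: on receiving request $Z_i$ and feasibility function $\Fcal_i$, run $\Membed$ to extend the online embedding to $f_i$ of $\Zcal_i$ into $(M,d_M)\in\Mcal$, and feed the induced request $f(Z_i)$ together with the induced feasibility function $\Fcal'_i$ to $\Malg$. Let $(R^M_i,C^M_i)$ be the response of $\Malg$. The key point is that, since $\Membed$ and $\Malg$ are compatible, the vertex set of $R^M_i$ lies inside $M_i$, so the extension $H_i\colon M_i\to V$ of $f_i^{-1}$ is defined on all of $R^M_i$; thus $\ALG$ can output $R_i = H_i(R^M_i)$ and $C_i = f_i^{-1}(C^M_i)$. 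This is exactly where tree-ness was used before (via the induced-subtree containment), and compatibility is precisely the abstraction of that property.

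Next I would verify feasibility of the solution $\Scal_i=((R_1,C_1),\ldots,(R_i,C_i))$ produced by $\ALG$. Because $H_i$ is a (well-defined) map, two points $\hat u,\hat v\in M_i$ are connected in $R^M_i$ if and only if $H_i(\hat u),H_i(\hat v)$ are connected in $R_i=H_i(R^M_i)$; since $H_i$ extends $f_i^{-1}$, every pair in $C_i=f_i^{-1}(C^M_i)$ is therefore connected in $R_i$. Feasibility of $\Malg$'s solution gives $\Fcal'_i(C^M_1,\ldots,C^M_i)=1$, and by definition of the induced instance this is $\Fcal_i(f^{-1}(C^M_1),\ldots,f^{-1}(C^M_i))=\Fcal_i(C_1,\ldots,C_i)=1$. (One should also use property (\ref{ext:4}), $H_{i+1}$ extends $H_i$, so that the whole solution is consistent over time and we may write $H=H_r$ uniformly, exactly as in Claim~\ref{clm:1}.)

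For the cost analysis I would reproduce the two claims from the proof of Theorem~\ref{thm:reduction-trees} verbatim, with $\gamma$ replaced by $\alpha$ and $T$ replaced by $M$. For the upper bound $\cost(\ALG)\le\cost(\Malg)$: writing $H=H_r$ and $I_{u,v}=\{i:(u,v)\in R_i\}$, we have $I_{u,v}=\bigcup_{(\hat u,\hat v):H(\{\hat u,\hat v\})=\{u,v\}} I^M_{\hat u,\hat v}$, so by subadditivity and monotonicity of $\rho$ and the non-expansiveness of $H$ (property (\ref{ext:3})), $\cost(\Scal)\le\sum_{(\hat u,\hat v)\in M} d_V(H(\hat u),H(\hat v))\,\rho(I^M_{\hat u,\hat v})\le\sum_{(\hat u,\hat v)} d_M(\hat u,\hat v)\,\rho(I^M_{\hat u,\hat v})=\cost(\Scal^M)$. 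For the lower bound $\OPT(M)\le\alpha\OPT$: take an optimal solution $\Scal^*$ on $V$ and push it forward through the extension $F=F_r$ of $f$; by property (\ref{ext:2}) $F$ has expansion at most $\alpha$ in expectation, so the image solution is feasible for the induced instance and has (expected) cost at most $\alpha\OPT$. Combining, $\cost(\ALG)\le\cost(\Malg)\le\beta\cdot\OPT(M)\le\alpha\beta\cdot\OPT$.

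Finally, the randomized case: if either $\Membed$ or $\Malg$ is randomized, both claims hold in expectation—the cost-upper-bound claim holds pointwise for each realization of the embedding and the algorithm, and the $\OPT(M)\le\alpha\OPT$ claim holds in expectation over $F$ by the probabilistic version of property (\ref{ext:2}) in Definition~\ref{extendability}—and the stated bound on $E[\cost(\ALG)]$ follows. I do not expect a genuine obstacle here; the only subtlety, and the step deserving the most care, is checking that compatibility really does guarantee $H_i$ is defined on every vertex touched by $\Malg$'s response, so that the translation step is legitimate at every time step and consistent across time steps via (\ref{ext:4}).
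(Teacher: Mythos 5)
Your proposal is correct and follows exactly the route the paper intends: the paper's own ``proof'' of Theorem~\ref{thm:reduction} is just the remark that it is similar to Theorem~\ref{thm:reduction-trees}, and you have correctly identified that the only substantive change is replacing Proposition~\ref{prop:tree-containment} with the compatibility hypothesis so that $H_i$ is defined on the vertex set of $R^M_i$. The two cost claims and the randomized case carry over verbatim as you describe.
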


The proof of the theorem is similar to that of Theorem~\ref{thm:reduction-trees}.

\section{Online Probabilistic Embedding}\label{sec:online:emb}

In Section~\ref{sec:overview}, we give a sketch of the construction of the online embedding of Theorem~\ref{thm:dist-ub} and summarize in Theorem~\ref{thm:summary} the properties needed for Section~\ref{sec:framework}. The full details of the construction and its analysis appear in Section~\ref{online_embedding}. 

\subsection{Overview}
\label{sec:overview}
\paragraph{Constructing the online embedding.}
A \emph{$\Delta$-bounded probabilistic partition} of $V$ is a distribution over partitions $P$ of $V$, with cluster diameters bounded by $\Delta$. 
A $\Delta$-bounded probabilistic partition has \emph{padding parameter} $\gamma$ if for each $v \in V$ and any $\delta >0$ the probability that $B(v, {\delta \Delta}/{\gamma})$ is cut by the cluster of $P$ that contains $v$ is at most $\delta$.  

Let $X_k \subseteq V$ be the set of $k$ terminals revealed to the online embedding thus far. We construct a collection of nested probabilistic partitions of $X_k$,  with diameters decreasing by $\mu$. The number of such partitions is $O(\log_{\mu}\Phi(X_k))$, namely, these partitions capture all scales of distances in $X_k$ up to a factor $\mu$. Moreover, each probabilistic partition in this collection has padding parameter $O(\log k)$. The whole hierarchical structure is maintained online. Bartal showed in \cite{Bar96} that constructing such a hierarchical probabilistic partition of $X_k$ implies an embedding of $X_k$ into a distribution of HST's, with expected distortion $O(\mu \log_{\mu}\Phi(X_k) \log k)$.

The online Algorithm~\ref{online-part} in Appendix~\ref{online_embedding} maintains a  $\Delta$-bounded probabilistic partition of the current terminal set $X_k$,  for a given scale $\Delta$, with  padding parameter $O(\log k)$.  Notably, the algorithm does not assume the knowledge of $k$ upfront, as is the case in previous works. 

The construction is based on the (offline) probabilistic partitions of \cite{ABN06}. Their algorithm iteratively partitions a given metric space $V$ in the following way: At the $j$-th step, a still unclustered point $v_j$ is chosen in a particular way\footnote{Minimizing the local growth rate, see \cite{ABN06} for the details.} that is associated with a parameter $\chi_j \geq 2$; Then, the radius $r_j$ is randomly chosen from the distribution $p(r)=\frac{\chi_j^2}{1-\chi_j^{-2}}\frac{8 \ln \chi_j}{\Delta} \chi_j^{-\frac{8r}{\Delta}}$, for $r\in [\Delta/4, \Delta/2]$. The $j$-th cluster is defined to be $B(v_j, r_j)$  intersected  with the still unclustered points in $V$.  They show (Lemma $5$ in \cite{ABN06}) that this construction gives a probabilistic partition with padding parameter $O(\max_{1\leq j \leq t} \log \chi_j)$, where $t$ is the number of clusters that can be obtained by this construction. This construction cannot be implemented in an online fashion, because the value of $\chi_j$ can be changed latter, after it was already used by the algorithm to choose the radius.

 However, their analysis implicitly implies that the partition in which $v_j$ is chosen {\it arbitrarily} from all the uncovered points, and $r_j$ is chosen with $\chi_j$ being a parameter satisfying  $\sum_{ 1 \leq j \leq t} 1/\chi_j \leq 1$, has the same bound on its padding parameter (see Lemma~\ref{ABN}). This observation allows us to construct partitions in an online fashion, with $\chi_j$ being set online to satisfy the above requirement.

In particular, Algorithm~\ref{online-part} works as follows: When a new terminal $x_k$ arrives, it is added to the first cluster, by the order of construction, that contains $x_k$. If no such cluster exists, a new cluster is created by randomly picking the radius $r$, according to $p(r)$ with $\chi_k=2k^2$, and defining $x_k$ as its center and the only point. The choice of $\chi_k$ is such that the sum of $1/\chi_j$ over the $k$ steps of the algorithm is at most $1$, implying the padding parameter $O(\log k)$, as the maximal number of clusters is obviously $k$.

An online Algorithm \ref{online_hier_part} maintains the hierarchy $N$ of nested probabilistic partitions for $\Theta( \log_{\mu}\Phi(X_k) )$ different distance scales, which cover all distances in $X_k$, up to a factor $\mu$. When a new terminal $x_k$ arrives, the algorithm checks whether the aspect ratio has been increased by at least a constant factor of $\mu$, in which case it creates and adds new top (or bottom) scales to ${N}$. The algorithm then inserts the terminal $x_k$ to all the partitions, in all the levels from top to bottom, by applying the single-scale procedure of Algorithm \ref{online-part}. Eventually, the HST tree $T_k$ is constructed from the hierarchy $N$, in a standard way.

Finally, Claim~\ref{extend_HST} shows that the above online probabilistic metric-oblivious embedding is fully extendable tree embedding. 

We conclude the overview with a summary of the properties of our embedding that are used in Section~\ref{sec:framework}.

\begin{theorem}
  \label{thm:summary}
  For any metric space $(V, d_V)$, sequence of terminals $x_1, \ldots, x_k \in V$, and parameter $\mu > 1$, there exists a fully extendable
  probabilistic online embedding into a random $\mu$-HST $T$ such that
  \begin{inparaenum}[\bfseries (i)]
  \item $T$ has $O(k)$ edges;
  \item for every $u,v \in V$, and $L > 0$, we have $\Pr_T[d_T(u,v) \geq L] \leq O(\log k)\frac{d(u,v)}{L}$.
  \end{inparaenum}
\end{theorem}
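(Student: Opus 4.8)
The goal is to establish Theorem~\ref{thm:summary}, which packages the properties of the online HST embedding needed for the framework in Section~\ref{sec:framework}. The plan is to go through the construction sketched in the overview (Algorithms~\ref{online-part}, \ref{online_hier_part}, the HST extraction, and Claim~\ref{extend_HST}) and verify the three claimed properties in turn: full extendability, $O(k)$ edge count, and the tail bound on $d_T(u,v)$. Each of these has essentially already been set up in the overview; the theorem is the clean synthesis.

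\textbf{Step 1: The construction.} First I would invoke Algorithm~\ref{online_hier_part}: it maintains a hierarchy $N$ of nested probabilistic partitions of the current terminal set $X_k$, one per distance scale $\Delta = \mu^j$ for $\Theta(\log_\mu \Phi(X_k))$ consecutive values of $j$ covering all pairwise distances in $X_k$ up to a factor $\mu$. Each single-scale partition is produced by Algorithm~\ref{online-part}, whose center-selection rule picks, for each new terminal, the first (in construction order) existing cluster containing it, and otherwise opens a new cluster centered at that terminal with radius drawn from $p(r)$ with parameter $\chi_k = 2k^2$. Since $\sum_{j\le k} 1/\chi_j = \sum_{j\le k} 1/(2j^2) \le 1$ and there are at most $k$ clusters at any scale, Lemma~\ref{ABN} gives that each single-scale partition is $\Delta$-bounded with padding parameter $O(\log k)$. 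The HST $T = T_k$ is then built from $N$ in the standard way: one tree node per cluster at each scale, with a node at scale $\mu^{j}$ connected to its child node at scale $\mu^{j-1}$ whenever the latter's cluster is contained in the former's, and edge lengths $\Theta(\mu^j)$. This yields a $\mu$-HST by Definition~\ref{HST_Labels}.

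\textbf{Step 2: Properties (i) and full extendability.} For the edge count, note that every internal node of $T$ other than those on the ``spine'' corresponds to a cluster that was created when some terminal opened it; each of the $k$ terminals opens at most one cluster per scale, but a more careful accounting — collapsing scales at which no branching occurs, exactly as in the footnote to Theorem~\ref{thm:dist-ub} — shows the number of \emph{branching} nodes, and hence the total number of edges after suppressing degree-two internal nodes, is $O(k)$. So property (i) holds. Full extendability is exactly Claim~\ref{extend_HST}: the embedding is metric-oblivious and one can define $F_i \colon V \to T$ by sending any $v \in V$ into the leaf of the partition cell of $X_i \cup \{v\}$ it would fall in (extending $f_i$ and only expanding, in expectation, by the distortion bound), and $H_i \colon T_i \to V$ by sending each tree node to a representative terminal of its cluster, which is non-expansive because cluster diameters at scale $\mu^j$ are $O(\mu^j) = O(d_T(\text{that node}))$; the nesting of partitions gives $H_{i+1} \supseteq H_i$. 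I would simply cite Claim~\ref{extend_HST} here rather than redo it.

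\textbf{Step 3: Property (ii), the tail bound — the main obstacle.} This is the heart of the statement and the step I would expect to require the most care. Fix $u, v \in V$ and $L > 0$. The embedding is non-contractive, so $d_T(u,v) \ge d(u,v)$ always; thus the bound is only meaningful for $L > d(u,v)$, and we must show $\Pr_T[d_T(u,v) \ge L] \le O(\log k)\, d(u,v)/L$. In the HST, $d_T(u,v) \approx \mu^{j}$ where $\mu^{j}$ is the coarsest scale at which $u$ and $v$ (more precisely, the terminals representing them) still lie in different clusters — equivalently, $u,v$ are separated at every scale $\mu^{j'}$ with $\mu^{j'} \lesssim d(u,v)$ up to where they first get merged. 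For $u,v$ to be separated at a scale $\mu^{j} \gtrsim L$, the ball of radius $\asymp d(u,v)$ around $u$ must be cut by the partition at that scale; since that scale has padding parameter $O(\log k)$, the probability of this is $O(\log k)\, d(u,v)/\mu^{j}$. Summing this geometric series over all scales $\mu^j \ge L$ (the sum is dominated by its smallest term) gives $O(\log k)\, d(u,v)/L$, which is the claim. The subtlety is handling the representatives correctly when $u$ or $v$ is not itself a terminal — here one uses the fully-extendable structure from Step 2 and the fact that $F_i$ is expansion-bounded in expectation, or alternatively one proves the bound for terminal pairs first and then transfers it via $F$. A second subtlety is ensuring the $\Theta(\cdot)$ slack in the relation between tree distance and scale, and the boundary scales at the top/bottom of the hierarchy, do not introduce more than a constant factor. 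Once these are pinned down, the theorem follows by combining the three steps.
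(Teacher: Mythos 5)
Your proposal is correct and follows essentially the same route as the paper: the paper proves property (i) via the chain-compression observation (Remark~\ref{remark:tree_size} / Observation~\ref{observ:tree_size}), full extendability via Claim~\ref{extend_HST}, and property (ii) via Claim~\ref{observ:probab_ineq}, which is exactly your argument of applying the per-scale $O(\log k)$ padding bound (extended to non-terminal points through $F_i$) and taking a union bound over all scales $\Delta_j \gtrsim L$, with the geometric sum dominated by the smallest such scale. The subtleties you flag (representatives for non-terminals, boundary scales) are handled in the paper exactly where you anticipate, inside the proof of Claim~\ref{extend_HST}.
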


The first item is proven in Observation \ref{observ:tree_size}, and the second item is proven in Claim \ref{observ:probab_ineq}.

\subsection{Construction}
\label{online_embedding}
We first give some basic definitions, and then proceed to the online constructions.
\subsubsection{Probabilistic Partition and Hierarchical Probabilistic Partition} 
For a metric space $V$, for any $\Delta >0$, a $\Delta$-bounded probabilistic partition is a distribution $\mathcal{P}$ over partitions $P$ of $V$,  such that $P= \dot{\cup}{C_j}$, $C_j \subset V$, and $\diam(C_j) \leq \Delta$.  For a partition $P$, let $P(x)$ denote the cluster that contains $x$.
Probabilistic partition $\mathcal{P}$ has padding parameter $\gamma >0$ if for all $ x\in V$ and for all $ 0< \delta < 1$, 
\[\Pr_{P \sim\mathcal{P}} \left[ B \left(x, {\delta}\Delta /{\gamma}\right) \not\subseteq P(x)  \right] \leq \delta.\]

Probabilistic partitions with a padding parameter have been studied in many works \cite{LS91,KPR93,Bar96,FRT04}.  For our construction we use the (offline) probabilistic partition of \cite{ABN06}, which is based on \cite{Bar96}. They partition a given $V$ into $\Delta$-bounded clusters iteratively: at the step $j\geq 1$, the point $v_j$ is chosen from the still unclustered points, with the minimum local growth rate $\chi_j \geq 2$ (see \cite{ABN06} for the precise definition of the local growth rate). Then, a random radius $r_j$ is chosen from the the distribution given by $p(r)=\frac{\chi_j^2}{1-\chi_j^{-2}}\frac{8 \ln \chi_j}{\Delta} \chi_j^{-\frac{8r}{\Delta}}$, where $r\in [\Delta/4, \Delta/2]$. The cluster $C_j$ is defined to be the intersection of $B(v_j, r_j)$ with the points that are still not clustered. The padding parameter is shown to be $\max_{1\leq j\leq t} \{\log \chi_j\}$, where $t$ is the number of clusters in partition. 

Inspecting the analysis of the above partition more carefully, it can be observed that choosing  $v_j$ {\it arbitrarily} among the unclustered points, and choosing the radius $r_j$ with $\chi_j$ that satisfy $\sum_{ 1\leq j \leq t}\chi_j^{-1} \leq 1$, implies the random partition with the same padding property. 

Let $\mathcal{\hat{P}}$ denote the random partition constructed in that way. For a partition $P \in \mathcal{\hat{P}}$, let $\{C_j\} \subset P$ denote its clusters. The following technical lemma is implicitly proved in \cite{ABN06}(Lemma $5$):  

\begin{lemma}\label{ABN}
For $x \in V$, let $j$ be such that $x \in C_j$. For any $1/2 < \theta \leq 1:$ 
\[\Pr_{P \sim \mathcal{\hat{P}}}\left[ B\left(x, \frac{\ln (1/\theta)}{32 \ln \chi_j}\cdot \Delta\right) \not\subseteq C_j \right] \leq (1- \theta)\left ( 1+ \theta \sum_{l=1}^{t} 1/\chi_l\right),\]
where $\chi_j \geq 2$ is the parameter used to pick the radius $r_j$ of the cluster $C_j$, and $t$ is the number of clusters in $P$.
\end{lemma}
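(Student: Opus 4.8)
The plan is to prove Lemma~\ref{ABN} by extracting the relevant probabilistic estimate from the analysis of the probabilistic partition in \cite{ABN06}, adapting it to the setting where cluster centers are chosen arbitrarily rather than by minimizing local growth rate. First I would set up notation: fix $x \in V$ and let $j$ be the (random) index such that $x \in C_j$ in the partition $P \sim \widehat{\mathcal{P}}$. For a target $\theta \in (1/2,1]$, write $s = s(\theta,j) := \frac{\ln(1/\theta)}{32\ln\chi_j}\cdot\Delta$, the radius of the ball around $x$ we want to be contained in $C_j$. The event ``$B(x,s)\not\subseteq C_j$'' can only happen if the point $v_j$ that created $C_j$ picked a radius $r_j$ satisfying $d(x,v_j) \le r_j < d(x,v_j)+s$ — i.e. $x$ got captured by $v_j$'s ball but some point within distance $s$ of $x$ lies outside it — OR $x$ was captured but one of the earlier balls $C_l$, $l<j$, sliced off a point near $x$. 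I would handle this by conditioning on the order of cluster creation and the identities of the centers, and bounding, for each potential ``cutter'' $C_l$ (including $C_j$ itself), the probability that $C_l$'s random radius falls in the ``dangerous window'' of width $s$ near the boundary, using the explicit density $p(r) = \frac{\chi_l^2}{1-\chi_l^{-2}}\frac{8\ln\chi_l}{\Delta}\chi_l^{-8r/\Delta}$.

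The key calculation is the per-cutter bound. For the cluster $C_l$ with center $v_l$, the probability that its random radius $r_l$ lands in an interval of length $s$ is at most $\max_r p(r)\cdot s$, or more precisely one integrates the density over the window $[\,\rho,\rho+s\,]$ for the appropriate $\rho$. Using the exponential form of $p(r)$, one gets a bound of roughly $\bigl(1-\chi_l^{-8s/\Delta}\bigr)\cdot\frac{\chi_l^2}{1-\chi_l^{-2}}\chi_l^{-8\rho/\Delta+\cdots}$; plugging in $s = \frac{\ln(1/\theta)}{32\ln\chi_l}\Delta$ (note: one wants the bound for the specific $j$, but for cutters $l\neq j$ one uses that $s \le \frac{\ln(1/\theta)}{32\ln\chi_l}\Delta$ whenever $\chi_l \le \chi_j$; the main subtlety is that $\chi_j$ appears in the definition of $s$, so for cutters with $\chi_l < \chi_j$ the window is larger relative to their scale and one must be careful). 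The term $\chi_l^{-8s/\Delta} = \theta^{1/(4\ln\chi_l)\cdot\ln\chi_l}$... more precisely $\chi_l^{-8s/\Delta}$ with $s = \frac{\ln(1/\theta)}{32\ln\chi_j}\Delta$ gives $\exp(-\frac{\ln\chi_l}{4\ln\chi_j}\ln(1/\theta))$; when $l=j$ this is exactly $\theta^{1/4}$, so $1-\chi_j^{-8s/\Delta} = 1-\theta^{1/4} \le (1-\theta)$ up to the constant absorbed in the $32$. I would then sum over all $l$: the ``self-cut'' term $C_j$ contributes something like $(1-\theta)$, and the earlier cutters $C_l$ contribute a geometric-type sum that, after the change of variables in \cite{ABN06}, telescopes/bounds to $(1-\theta)\cdot\theta\sum_{l}1/\chi_l$. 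Combining, $\Pr[B(x,s)\not\subseteq C_j] \le (1-\theta)(1 + \theta\sum_{l=1}^t 1/\chi_l)$, which is exactly the claimed bound.

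The main obstacle I anticipate is faithfully extracting the right inductive/telescoping structure from Lemma~5 of \cite{ABN06} while substituting the arbitrary-center assumption for their minimum-local-growth-rate choice. In their proof, the choice of $v_j$ by minimum local growth rate is used to control the relationship between $\chi_j$ and the number/density of points near $x$ that could be cut; with arbitrary centers, one instead pays for this by the constraint $\sum_l 1/\chi_l \le 1$, and the bookkeeping that shows the earlier-cutter contributions sum to $\theta\sum_l 1/\chi_l$ rather than something larger must be done carefully. In particular I would need to verify that the event decomposition ``$x\in C_j$ and $B(x,s)\not\subseteq C_j$'' can indeed be charged, via a union bound over $l \le j$, to independent radius choices — exploiting that the radii $r_1,\dots,r_t$ are drawn independently and that whether $l$ is the first cluster containing $x$ depends only on $r_1,\dots,r_{l}$ and which points are captured earlier. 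The cleanest route is probably to follow \cite{ABN06}'s argument essentially verbatim, noting at each step that the only property of the $\chi_j$'s they use (beyond $\chi_j \ge 2$) is the summability $\sum_j 1/\chi_j \le 1$, which holds by construction of $\widehat{\mathcal{P}}$, so the stated inequality goes through unchanged.
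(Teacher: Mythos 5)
The paper offers no proof of this lemma at all: it is cited as being implicit in the proof of Lemma~5 of \cite{ABN06}, with the observation that only $\chi_j \geq 2$, the explicit exponential radius density, and (for the downstream application) the summability $\sum_{l} 1/\chi_l \leq 1$ are actually used there, not the minimum-local-growth-rate choice of centers. Your proposal takes exactly this route --- re-running the \cite{ABN06} argument with arbitrary centers, with the correct event decomposition into a ``self-cut'' window of width $s$ for $C_j$ and boundary windows for earlier clusters --- so it matches the paper's treatment (and is in fact more detailed than what the paper provides).
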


We will use this observation to simulate the random partition of \cite{ABN06} in an online way.\\

\noindent
{\bf Hierarchy of probabilistic partitions.}
We define the notion of a \emph{hierarchical} probabilistic partition, which is basically a collection of probabilistic partitions that cover all the distance scales of the current pointset.  

For a given finite metric space $V$, let $A \leq B$ be two integers. Consider the set of distance scales $\{\Delta_j|  A\leq j \leq B\}$, such that $\Delta_A=d_{\max}(V)$, $\Delta_B=d_{\min}(V)$ and for $A \leq  j <  B$, $\Delta_{j+1} =\Delta_j/{\mu}$.  
A  {\it hierarchical probabilistic partition} $\mathcal{H}$ of $V$ with parameter $\mu>1$ is a collection of nested $\Delta_j$-bounded probabilistic partitions $\mathcal{P}_j$ of $V$, for $ A\leq j \leq B$.  The partitions are nested, i.e, each cluster $C$ of each partition $P_j \sim \mathcal{P}_j$ is (randomly) partitioned by clusters ${P}_{j+1}[C]$ (the probabilistic partition $\mathcal{P}_{j+1}$ induced on the cluster $C$) of each partition $P_{j+1} \sim \mathcal{P}_{j+1}$.  If for each cluster $C \in P_j$, for each $P_j \sim \mathcal{P}_j$ the probabilistic partition $\mathcal{P}_{j+1}[C]$   has padding parameter $\gamma_j$ then the padding parameter of the whole hierarchy is defined by $\gamma(\mathcal{H}) = \sum_{A \leq j  \leq B} \gamma_j$.

In \cite{Bar96}, among other results,  Bartal showed that:

\begin{theorem}[Theorem 13, \cite{Bar96}]\label{hst_emb_bar} Given a hierarchical probabilistic partition $\mathcal{H}$ of $V$ with parameter $\mu >1$, one can construct a randomized embedding into an $\mu$-HST tree, with expected distortion $O(\mu \cdot \gamma(\mathcal{H}))$. \end{theorem}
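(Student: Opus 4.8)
The plan is the textbook one: realize the hierarchy $\mathcal{H}$ as a $\mu$-HST in the natural way, observe that the resulting random embedding is non-contractive for \emph{every} choice of the partitions, and then bound its expected expansion on a fixed pair scale-by-scale using the padding parameters $\gamma_j$.

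\textbf{Construction.} Recall that $\mathcal{H}$ consists of nested $\Delta_j$-bounded probabilistic partitions $\mathcal{P}_j$ for $A \le j \le B$, with $\Delta_A = d_{\max}(V)$, $\Delta_B = d_{\min}(V)$ and $\Delta_{j+1} = \Delta_j/\mu$. Draw $P_j \sim \mathcal{P}_j$ for each scale, respecting nesting so that $P_{j+1}$ refines $P_j$. Build the tree $T$ whose nodes are all clusters $\{C : C \in P_j,\ A\le j \le B\}$, where the parent of a level-$(j{+}1)$ cluster is the level-$j$ cluster containing it (the root is the unique level-$A$ cluster, namely $V$); give a level-$j$ cluster label $\Delta_j$, and hang below each level-$B$ cluster, which is a singleton since $\diam \le d_{\min}(V)$, a leaf of label $0$ representing that point of $V$. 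Since $\Delta_{j+1}=\Delta_j/\mu$, the labels decrease by a factor $\mu$ down every root-to-leaf path, so $T$ is a $\mu$-HST with $d_T(x,y)=\Delta(\mathrm{lca}(x,y))$.

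\textbf{Non-contraction.} For $u\ne v$ let $J$ be the largest $j$ with $P_j(u)=P_j(v)$; then the leaves of $u$ and $v$ have their least common ancestor at level $J$, so $d_T(u,v)=\Delta_J$. Since $u$ and $v$ lie in a common level-$J$ cluster, $d_V(u,v)\le\diam(P_J(u))\le\Delta_J=d_T(u,v)$, so the embedding is non-contractive deterministically.

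\textbf{Expected expansion.} Fix $u\ne v$ and put $D=d_V(u,v)$; we estimate $E[d_T(u,v)]=\sum_{j=A}^{B}\Delta_j\Pr[J=j]$ by cutting the scales at a threshold. Let $j_0$ be the smallest index with $\Delta_{j_0+1}\le\gamma_{j_0}D$ (if none exists, every scale below is ``coarse'' and the argument only gets easier); note $\Delta_{j_0}=\mu\Delta_{j_0+1}\le\mu\gamma_{j_0}D\le\mu\gamma(\mathcal{H})D$. For the \emph{fine} scales $j\ge j_0$ use $\Pr[J=j]\le 1$, so their contribution is at most $\sum_{j\ge j_0}\Delta_j\le\frac{\mu}{\mu-1}\Delta_{j_0}=O(\mu\gamma(\mathcal{H})D)$, a geometric series in $1/\mu$. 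For the \emph{coarse} scales $j<j_0$, bound $\Pr[J=j]\le\Pr[P_j(u)=P_j(v)\text{ and }P_{j+1}(u)\ne P_{j+1}(v)]$; conditioning on $u,v$ lying in a common cluster $C$ of $P_j$, the refinement $\mathcal{P}_{j+1}[C]$ is a $\Delta_{j+1}$-bounded probabilistic partition with padding parameter $\gamma_j$, and applying the padding definition at $u$ with $\delta=\gamma_jD/\Delta_{j+1}$ (which is $<1$ exactly because $j<j_0$) and using $v\in B(u,D)$ gives $\Pr[P_{j+1}(u)\ne P_{j+1}(v)\mid C]\le\delta$. Hence $\Delta_j\Pr[J=j]\le\Delta_j\cdot\gamma_jD/\Delta_{j+1}=\mu\gamma_jD$, and summing over $j<j_0$ yields at most $\mu D\sum_j\gamma_j=\mu\gamma(\mathcal{H})D$. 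Adding the two parts gives $E[d_T(u,v)]=O(\mu\gamma(\mathcal{H}))\,d_V(u,v)$, which with non-contraction is the claimed $O(\mu\gamma(\mathcal{H}))$ expected distortion.

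\textbf{Main obstacle.} The one delicate point is the conditioning used for the coarse scales: the padding guarantee is a statement about the probabilistic partition $\mathcal{P}_{j+1}$ \emph{induced on a cluster}, so one must condition on the entire history down to level $j$ (which fixes the cluster $C=P_j(u)$), check that the conditional law of $P_{j+1}$ restricted to $C$ is still a padded $\Delta_{j+1}$-bounded probabilistic partition with parameter $\gamma_j$, and only then invoke the ball-cutting bound; crucially no independence across scales is needed. The remaining bookkeeping — choosing $j_0$ so the coarse sum and the fine geometric tail both land at order $\mu\gamma(\mathcal{H})\,d_V(u,v)$ — is where the harmless factor $\frac{\mu}{\mu-1}$ (absorbed into the $O(\cdot)$ for $\mu$ bounded away from $1$) appears.
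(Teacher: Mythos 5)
Your proof is correct and is exactly the standard argument behind Bartal's Theorem 13, which this paper imports as a black-box citation (the paper itself only restates the cluster-tree construction, which matches yours, and defers the distortion analysis to \cite{Bar96}); your scale-splitting at $j_0$ together with the conditional use of the padding guarantee of $\mathcal{P}_{j+1}[C]$ is the intended analysis. The only caveats are cosmetic: the fine-scale geometric tail carries a $\mu/(\mu-1)$ factor, so the stated $O(\mu\cdot\gamma(\mathcal{H}))$ bound really requires $\mu$ bounded away from $1$ (as in this paper's use, where $\mu>4$), and level-$B$ clusters need not be literal singletons when $\Delta_B=d_{\min}(V)$, though non-contraction still holds with equality in that boundary case.
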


The $\mu$-HST tree is naturally derived from the nested structure of the hierarchy, we describe the construction in the proof of Theorem \ref{thm:dist-ub}.

In Appendix \ref{online_hierarchy} we describe an online randomized algorithm that maintains a hierarchical partition $H$ of the current point set $X_i$, with padding parameter $O(\mu \log_{\mu}\Phi(X_i)\log |X_i|)$. For each relevant scale of distances in the current point set, there is a bounded partition for this scale, which is maintained online as well. The algorithm does not assume a prior knowledge on the underlying metric space, it rather uses only information on the current point set.

\subsubsection{Online $\Delta$-Bounded Probabilistic Partition }
We present the online construction of a $\Delta$-bounded partition of a subspace $Z\subseteq V$, the points of which are revealed to the algorithm one by one. For a given $\Delta$, Algorithm \ref{online-part} maintains a random $\Delta$-bounded partition ${P} \in \mathcal{P}$ (sampled from a distribution of $\Delta$-bounded partitions $\mathcal{P}$) for the current subspace $Z$. The partition $P$ is a collection of clusters of the form $C(v,r)$, each cluster is represented by the pair $(v, r)$, where  $v \in Z$ is the center of the cluster and $r>0$ is its radius. Particularly, $C(v,r)$ is the subset of all points in $Z$ within distance at most $r$ from $v$.  When a new point $z \in V \setminus Z$ is given to the algorithm, ${P}$ is (randomly) updated to be $\Delta$-bounded partition of $Z \cup \{z\}$. At each time step the padding parameter of the partition is $O(\log |Z|)$. 
 At the beginning $X_0=\emptyset$, ${P} = \emptyset$ and $t=0$ ($ 0\leq t \leq |Z| $ counts the current number of clusters in $P$). The argument $P$ identifies a particular partition of a subspace $Z$ that the algorithm maintains.  
\begin{algorithm}\label{one:online:partition}
\caption{$\Delta$-Bounded Online Probabilistic Partition,  $\Delta$-${\bf BOPP}$$\left< P \right>$}
\label{online-part}
Let $P=\dot{\cup}_{j=1}^{t}C(v_j, r_j)$ denote the current set of clusters in $P$.
When a new point $z \in V \setminus Z$ arrives:
\begin{algorithmic}[1]
\FORALL{ $1\leq j \leq t$} 
\IF{$d(v_j, z)\leq r_j$} \STATE{update $C (v_j, r_j)\leftarrow C(v_j, r_j) \cup \{z\}$;} \STATE terminate\ENDIF 
\ENDFOR
\STATE Set $t \leftarrow t+1$ and $\chi_{t}=2 t^2$. Pick $r$ distributed according to $p(r)=\frac{(\chi_{t})^2}{1-(\chi_{t})^{-2}}\frac{8 \ln (\chi_{t})}{\Delta} (\chi_{t})^{-\frac{8r}{\Delta}}$, for $r \in [\Delta/4, \Delta/2]$, independently from the previous iterations. Set $C(v_{t}, r) \leftarrow \{z\}$. Update $P \leftarrow P\cup C(v_{t}, r)$.
\end{algorithmic}
\end{algorithm}

\begin{lemma}\label{prob_part}
Let $\mathcal{P}$ be a random partition of $Z$, constructed by Algorithm \ref{online-part}, starting from $Z= \emptyset$ and revealing the points of $Z$ one by one. The padding parameter of $\mathcal{P}$ is $O(\log |Z|)$.
\end{lemma}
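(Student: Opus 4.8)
\textbf{Proof proposal for Lemma~\ref{prob_part}.}

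The plan is to reduce the online partition produced by Algorithm~\ref{online-part} to the offline random partition $\widehat{\mathcal{P}}$ analyzed in Lemma~\ref{ABN}, and then invoke that lemma with an appropriate choice of $\theta$. The first observation is that, by construction, Algorithm~\ref{online-part} produces a partition that falls into the family $\widehat{\mathcal{P}}$: each point $z$ is assigned to the first existing cluster (in order of creation) that contains it, and otherwise becomes the center of a new cluster with radius drawn from $p(r)$ using parameter $\chi_t = 2t^2$, independently of everything else. This is exactly the process defining $\widehat{\mathcal{P}}$, with the centers chosen arbitrarily (here, in arrival order) and the radii drawn with parameters $\chi_j$; the only thing to check is the side condition $\sum_{1 \le j \le t} 1/\chi_j \le 1$. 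Since $\chi_j = 2j^2$, we have $\sum_{j=1}^{t} 1/\chi_j = \frac{1}{2}\sum_{j=1}^{t} 1/j^2 \le \frac{1}{2}\cdot\frac{\pi^2}{6} < 1$, so the condition holds for every $t \le |Z|$, and in particular the final partition of $Z$ is a sample from $\widehat{\mathcal{P}}$.

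Next I would verify the padding bound. Fix a point $x \in Z$, let $C_j$ be the cluster of the final partition containing $x$, and let $\chi_j = 2j^2$ be the parameter used when $C_j$ was created (if $x$ is itself a center then $j$ is its creation index; otherwise $j$ is the index of the first cluster that absorbed it). We want to show that for a suitable $\delta$-scaled ball the probability of being cut is at most $\delta$; equivalently, in the language of the padding parameter definition, we want $\Pr[B(x, \delta\Delta/\gamma) \not\subseteq C_j] \le \delta$ with $\gamma = O(\log|Z|)$. Apply Lemma~\ref{ABN} with $\theta = 1-\delta$ (valid since $1/2 < \theta \le 1$ requires $0 \le \delta < 1/2$; for $\delta \ge 1/2$ the padding condition is trivial up to adjusting constants, since the ball of radius $\delta\Delta/\gamma \le \Delta/2 \le \Delta$ relative to the $O(1)$ bound is handled directly, or one simply notes the bound is vacuous). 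The lemma gives
\[
\Pr\left[B\left(x, \tfrac{\ln(1/(1-\delta))}{32\ln\chi_j}\,\Delta\right) \not\subseteq C_j\right] \le \delta\left(1 + (1-\delta)\sum_{l=1}^{t} 1/\chi_l\right) \le 2\delta,
\]
using $\sum_l 1/\chi_l \le 1$. Since $\ln(1/(1-\delta)) \ge \delta$ for $\delta \in (0,1)$, the ball $B(x, \delta\Delta/(32\ln\chi_j))$ is contained in the ball above, so $\Pr[B(x, \delta\Delta/(32\ln\chi_j)) \not\subseteq C_j] \le 2\delta$. Finally $\chi_j = 2j^2 \le 2|Z|^2$, so $32\ln\chi_j = O(\log|Z|)$, and after rescaling $\delta$ by a constant factor of $2$ we obtain $\Pr[B(x, \delta\Delta/\gamma) \not\subseteq P(x)] \le \delta$ with $\gamma = O(\log|Z|)$, which is the claimed padding parameter.

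The main obstacle — really the only nontrivial point — is making the reduction to Lemma~\ref{ABN} airtight: one must argue that the online, arrival-order choice of centers and the independence of the radius draws genuinely place Algorithm~\ref{online-part}'s output in the support of $\widehat{\mathcal{P}}$ with the \emph{same} distribution, so that Lemma~\ref{ABN} applies verbatim. This requires checking that the event ``$x \in C_j$'' and the value of $\chi_j$ are determined in a way consistent with the hypotheses of Lemma~\ref{ABN} (in particular that the radii $r_1, \dots, r_t$ are mutually independent and that the $\chi_l$ are fixed parameters, not random), and that the side condition $\sum 1/\chi_l \le 1$ holds for the actual number of clusters $t$ — which it does since $t \le |Z|$ and $\sum_{j\ge 1} 1/(2j^2) < 1$. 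Everything else is a short computation.
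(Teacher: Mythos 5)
Your proposal is correct and follows essentially the same route as the paper's proof: reduce the online construction to the offline partition of Lemma~\ref{ABN} via the observation that the distributions coincide and that $\sum_j 1/(2j^2) < 1$, then invoke that lemma, use $\ln(1/(1-z)) \ge z$, and bound $\chi_j \le 2|Z|^2$. The only (immaterial) difference is the parametrization of $\theta$ in terms of $\delta$ — the paper sets $\delta = \tfrac{4}{3}(1-\theta^2)$ to cover all $\delta \in (0,1]$ in one shot, whereas you take $\theta = 1-\delta$ and handle $\delta \ge 1/2$ by a separate constant adjustment, which is fine.
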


\begin{proof}
The main observation is that the random partition $\mathcal{P}$ of $Z$ constructed by the above online algorithm has the same distribution as the random partition $\mathcal{\hat{P}}$ of $Z$ constructed offline, where the points of $Z$ are given upfront. Therefore, the bound in Lemma \ref{ABN} holds true for the partition $\mathcal{P}$ as well.  

For any $x \in Z$ and any $0<\delta <1$, we show that $\Pr_{P \sim \mathcal{P} }[B(x, \delta \Delta /(c \cdot \log |Z|)) \not\subseteq P(x)] \leq \delta$, for some constant $c$. 
By Lemma \ref{ABN}, for any $1/2 \leq \theta <1$, we have:
\[\Pr_{P \sim \mathcal{P}}\left[ B\left(x, \frac{\ln (1/\theta)}{32 \ln \chi_j}\cdot \Delta\right) \not\subseteq P(x) \right] \leq (1- \theta)\left ( 1+ \theta \sum_{l=1}^{t} 1/\chi_l\right),\]
where $j$ is such that $P(x)=C(v_j, r_j)$, and $t$ is the number of clusters in $P$. Since $t \leq |Z|$ and since for all $ 1\leq l \leq t$, $\chi_l = 2 l^2$, it holds that:
\[\sum_{l=1}^{t} 1/\chi_l = \sum_{l=1}^{t} 1/(2l^2) < \frac{\pi^2}{12}<1.\]

Therefore, \[\Pr_{P \sim \mathcal{P}}\left[ B\left(x, \frac{\ln (1/\theta)}{32 \ln \chi_j}\cdot \Delta\right) \not\subseteq P(x) \right] \leq 1-\theta^2.\]
Let $\delta=\frac{4}{3}(1-\theta^2)$, we have  
\[\Pr_{P \sim \mathcal{P}}\left[ B\left(x, \frac{\ln (1/(1-\frac{3}{4}\delta))}{64 \ln \chi_j}\cdot \Delta\right) \not\subseteq P(x) \right] \leq \frac{3}{4}\delta.\]

Note that $1/2 \leq \theta <1$ and therefore $0<\delta\leq 1$.
Since for any $z>0$, $\ln (1/(1-z)) \geq z$, it holds that $B\left(x, \frac{\frac{3}{4}\delta}{64 \ln \chi_j}\cdot \Delta\right) \subseteq B\left(x, \frac{\ln (1/(1-\frac{3}{4}\delta)}{64 \ln \chi_j}\cdot \Delta \right)$, implying 
\[\Pr_{P \sim \mathcal{P}}\left[B\left(x, \frac{\delta}{\frac{256}{3} \ln \chi_j}\cdot \Delta\right) \not\subseteq P(x)\right]
\leq \Pr_{P \sim \mathcal{P}}\left[ B\left(x, \frac{\ln (1/(1-\frac{3}{4}\delta))}{64 \ln \chi_j}\cdot \Delta\right) \not\subseteq P(x) \right] \leq \frac{3}{4}\delta.\]

Finally, since for all clusters $C_j \in P$, $\chi_j \leq 2|Z|^2$, it holds that for some constant $c>0$
\[\Pr_{P \sim \mathcal{P}}\left[B\left(x, \frac{\delta}{c \cdot \ln (2|Z|^2)}\cdot \Delta\right)\not\subseteq P(x)\right]
\leq \Pr_{P \sim \mathcal{P}}\left[B\left(x, \frac{\delta}{\frac{256}{3} \ln \chi_j}\cdot \Delta\right)\not\subseteq P(x)\right] \leq \delta,\]
which completes the proof.
\end{proof}

\subsubsection{Online Hierarchical Probabilistic  Partition}\label{online_hierarchy}
Algorithm \ref{online_hier_part} maintains the hierarchical partition $N$ of the current terminal set $X_i$, with $\mu>4$.  The hierarchy $N$ is randomly sampled from a distribution $\mathcal{H}$. The scale set of this hierarchical partition is $\{\Delta_j| A\leq j\leq B\}$, where $A < B$ are some integers. At each time step $i$, the scales satisfy the following properties:  $4d_{\max}(X_{i}) \leq \Delta_{A} \leq \mu 4 d_{\max}(X_{i})  $, $4 d_{\min}(X_i)\leq \Delta_B \leq \frac{4 d_{\min}(X_i)}{\mu}$, and for all $A\leq  j < B$, $\Delta_{j+1}=\Delta_{j}/\mu$. The number of scales is $\lceil  \log_{\mu}\Phi(X_i) \rceil +1$. 

Essentially, the algorithm maintains randomly constructed $\Delta_j$-bounded partitions $P_j$, for each scale $\Delta_j$, such that the clusters of the partition $P_{j+1}$ are $\Delta_{j+1}$-bounded random partitions of the clusters of $P_j$. Moreover,  for each cluster $C$ of the partition $P_j$ the random partition $\mathcal{P}_{j+1}[C]$ of $C$ has padding parameter $O(\log |C|)=O(\log i)$.
The algorithm has to maintain the current set of the relevant distance scales: When a new terminal $x_i$ arrives the algorithm checks whether the aspect ratio has increased by at least a constant factor of $\mu$ with respect to the current upper or lower scale level. If so, it adds new, either top or bottom, scales to ${N}$.  While creating new scales, the algorithm keeps the structure of $N$ to be nested. The largest scale is defined such that its only cluster contains all the points of the current set $X_i$.  Next, algorithm adds $x_i$  to the relevant partitions of all scales in ${N}$, using Algorithm \ref{online-part}, while keeping the partitions to be nested as well.  Particularly, for each cluster $C$ of a partition of scale $j$ there is a partition of it in level $j+1$, denoted by $P_{j+1}[C]$, which is maintained online by algorithm $\Delta_{j+1}$-{\bf BOPP}$\left <P_{j+1}[C] \right>$. 

At the beginning $X_2=\{x_1, x_2\}$, $A=0, B=1$, $\Delta_A=4d(x_1, x_2)$, $\Delta_B=\Delta_A/\mu$. Set $P_A, P_B \leftarrow \emptyset$. Then, update these partitions as follows: apply $P_A \leftarrow \Delta_{A}$-{\bf BOPP}$\left <P_A\right>$ on $x_1$ to add it to the partition $P_A$ , and then apply $\Delta_{A}$-{\bf BOPP}$\left <P_A\right>$ on $x_2$ to add it to $P_A$. After these steps $P_A$ contains one cluster, that contains both $x_1$ and $x_2$. Similarly, apply $P_B \leftarrow \Delta_{B}$-{\bf BOPP}$\left <P_B[P_A(x_1)]\right>$ (recall that $P_A(x_1)$ denotes the cluster in partition $P_A$ that contains $x_1$) on $x_1$ to add it the partition of the cluster $P_A(x_1)$ at level $B$, and then apply $\Delta_{B}$-{\bf BOPP}$\left <P_B[P_A(x_2)]\right>$.

\begin{algorithm}[h]
    \caption{Online Hierarchical Probabilistic Partitions}
    \label{online_hier_part}

\begin{algorithmic}[1]
\STATE When a new terminal $x_i$ arrives:
 
    \IF{$d_{\max}(X_i) \geq \frac{\mu}{4} \Delta_A$}  
      \STATE Let $l = \lceil \log_{\mu} (4{d_{\max}(X_i)}/\Delta_A) \rceil$, and update $A \leftarrow A-l$. Define $P_{A-1} =\{X_i\}$.    \FORALL {$ A\leq j  < A+l$} 
			 \STATE create new $\Delta_{j}= \Delta_{A+l} \cdot \mu^{(A+l-j)}$- bounded partition ${P}_j =\emptyset$;
			\FORALL{$ 1\leq s \leq i-1$}
			\STATE apply $\Delta_j$-{\bf BOPP}$\left< P_j[P_{j-1}(x_1)]\right>$ on input $x_s$. \ENDFOR
			\ENDFOR

	\ENDIF
		\IF {${d_{\min}({X_i})} \leq \frac{\mu}{4}\Delta_B$} 
		\STATE Let $l= \lceil\log_{\mu} ( \Delta_{B}/(4{d_{\min}({X_i})}))\rceil + 1$, and update $B \leftarrow B+l$. \FORALL {$ B -l <  j \leq B$} \STATE create new $\Delta_{j}=\Delta_{B-l}/{\mu}^{j-B+l}$ bounded partition ${P}_j=\emptyset$.
		\FORALL{$ 1\leq s \leq i-1$} 
			\STATE apply $\Delta_j$-{\bf BOPP}$\left< P_j(P_{j-1}(x_s))]\right>$ on input $x_s$. \ENDFOR
		\ENDFOR
		\ENDIF \\
	  
		\STATE Create $P_{A-1} =\{X_i\}$. \STATE {\bf For all $ A \leq j \leq  B$:} apply  $\Delta_j$-{\bf BOPP}$\left< P_j[P_{j-1}(x_i)]\right>$  on input $x_i$.
	
\end{algorithmic}	
\end{algorithm}

\begin{observation}\label{observ_online}
We observe that Algorithm \ref{online_hier_part} is a proper online algorithm for maintaining hierarchical partition, i.e., updating the hierarchical partition is made in a way that does not split the previously constructed clusters: at each time step $i$ each cluster that has been already created either stays unchanged or receives the new point.
\end{observation}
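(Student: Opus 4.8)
The plan is to argue by inspection of each of the three places where Algorithm~\ref{online_hier_part} modifies a partition, and to reduce everything to the corresponding property of the single-scale procedure $\Delta$-\textbf{BOPP}. First I would record the key sub-fact about Algorithm~\ref{online-part}: when a new point $z$ arrives, the only possible modifications are (a) $z$ is inserted into an \emph{existing} cluster $C(v_j,r_j)$, leaving all other clusters untouched, or (b) a fresh singleton cluster $C(v_t,r)=\{z\}$ is created; in neither case is any previously created cluster split, shrunk, or merged. This is immediate from the pseudocode: the \texttt{for} loop either finds some $j$ with $d(v_j,z)\le r_j$ and adds $z$ there, or falls through and creates a new cluster. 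So at the single-scale level the online property is manifest.

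Next I would walk through the three update blocks of Algorithm~\ref{online_hier_part} in the order they are executed on the arrival of $x_i$:
\begin{inparaenum}[\bfseries (1)]
\item the ``new top scales'' block (when $d_{\max}(X_i)\ge\frac{\mu}{4}\Delta_A$): here we only \emph{prepend} brand-new empty partitions $P_j=\emptyset$ for the newly created coarser scales $j$ and then populate them by running $\Delta_j$-\textbf{BOPP} on $x_1,\dots,x_{i-1}$; no previously existing partition $P_{j'}$ (for $j'$ in the old range $[A_{\mathrm{old}},B]$) is touched at all, so none of its clusters change. The newly created $P_{A-1}=\{X_i\}$ is a single cluster containing everything, consistent with the nesting.
\item the symmetric ``new bottom scales'' block is handled identically: we \emph{append} new empty finest-scale partitions and populate them from scratch via $\Delta_j$-\textbf{BOPP}; again no pre-existing partition is disturbed.
\item the final line inserts $x_i$ itself into every partition $P_j$ for $A\le j\le B$ using $\Delta_j$-\textbf{BOPP}$\langle P_j[P_{j-1}(x_i)]\rangle$; by the single-scale sub-fact from the previous paragraph, each such insertion either drops $x_i$ into an existing cluster or spawns a new singleton, and in particular never splits an already-created cluster.
\end{inparaenum}
Combining the three cases, every cluster that existed before the arrival of $x_i$ either is left exactly as it was or merely gains the new point $x_i$ (this can happen in at most one cluster per level), which is precisely the claimed online property.

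The one point that needs a little care — and the place I expect a referee to look hardest — is that the nesting/hierarchy invariant is preserved simultaneously with the ``no-split'' property, since the pseudocode maintains, for each cluster $C$ of $P_{j}$, a dedicated instance $\Delta_{j+1}$-\textbf{BOPP}$\langle P_{j+1}[C]\rangle$. I would argue that when $x_i$ is routed down the hierarchy it is fed at level $j+1$ into exactly the sub-instance indexed by $P_j(x_i)$, i.e.\ the level-$j$ cluster it just landed in (or just created); thus either it joins an existing child cluster of $P_j(x_i)$ or starts a fresh singleton child of it, and in both cases the refinement relation $P_{j+1}\preceq P_j$ is maintained without any level-$j$ or level-$j'$ cluster ($j'<j$) being broken apart. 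For the newly created scales in cases (1) and (2), the re-insertion loops over $x_1,\dots,x_{i-1}$ build the new partitions from the empty partition, so nesting holds there by construction as well. Since no step ever removes a point from a cluster or separates two points that were previously co-clustered, the invariant in Observation~\ref{observ_online} holds at every time step, completing the proof. \qed
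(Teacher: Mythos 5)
Your argument is correct and is essentially the justification the paper intends: the paper states this as an unproved observation precisely because it follows by inspecting that the single-scale procedure $\Delta$-\textbf{BOPP} only ever adds the new point to an existing cluster or opens a fresh singleton, and that the hierarchical algorithm only creates brand-new partitions at new scales or routes the new point into existing partitions. Your additional remarks on preservation of the nesting invariant are sound but belong to Lemma~\ref{hier_part} rather than to this observation, which concerns only the no-split property.
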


\begin{lemma}\label{hier_part}
Let $\mathcal{H}$ be a random collection of partitions that was constructed by Algorithm \ref{online_hier_part}, when applied on a terminal set $X_i$ of size $i$. Then, $\mathcal{H}$ is a hierarchical probabilistic partition and its padding parameter is bounded by $O(\log_{\mu}\Phi(X_i) \log i)$.
\end{lemma}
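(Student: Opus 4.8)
The plan is to verify the two assertions of Lemma~\ref{hier_part} separately: first that $\mathcal{H}$ is indeed a hierarchical probabilistic partition in the sense of the definition (i.e.\ the partitions are nested and each is $\Delta_j$-bounded), and second that its padding parameter is $O(\log_\mu \Phi(X_i)\log i)$. The first part is essentially bookkeeping: by Observation~\ref{observ_online}, Algorithm~\ref{online_hier_part} never splits a previously created cluster, so at every time step the partitions $P_A,\dots,P_B$ form a valid nested family, and each $P_{j+1}[C]$ is produced by invoking $\Delta_{j+1}$-${\bf BOPP}$ restricted to the points of the cluster $C\in P_j$, so it is a $\Delta_{j+1}$-bounded partition of $C$. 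I would also check that the scale bookkeeping (the two \textbf{if} blocks that create new top/bottom scales) maintains the invariants $4d_{\max}(X_i)\le \Delta_A\le 4\mu\, d_{\max}(X_i)$ and $4d_{\min}(X_i)\le \Delta_B\le 4d_{\min}(X_i)/\mu$, so that there are $\lceil\log_\mu \Phi(X_i)\rceil + 1$ scales, and that every relevant distance of $X_i$ lies strictly between scales so no pair is ever ``forced'' to be separated at too coarse a scale or merged at too fine a scale.

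For the padding parameter, the key point is that, for each cluster $C$ of each partition $P_j$, the induced probabilistic partition $\mathcal{P}_{j+1}[C]$ of $C$ is \emph{exactly} the distribution produced by running Algorithm~\ref{online-part} on the point set $C$ (revealed in the order the terminals arrive). This is where Observation~\ref{observ_online} is used again: the online insertions into $P_{j+1}[C]$ do not interfere with one another across different clusters $C$ of $P_j$, and within $C$ the algorithm behaves precisely as the single-scale algorithm on the subspace $C$. Hence Lemma~\ref{prob_part} applies verbatim and gives $\gamma_j = O(\log|C|) = O(\log i)$ for the padding parameter at level $j$ (since $|C|\le |X_i| = i$). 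Summing over the $A\le j\le B$ levels, of which there are $O(\log_\mu\Phi(X_i))$ many, yields $\gamma(\mathcal{H}) = \sum_j \gamma_j = O(\log_\mu\Phi(X_i)\cdot\log i)$, as claimed.

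The main obstacle I anticipate is the second part's reliance on the claim that the induced distribution $\mathcal{P}_{j+1}[C]$ coincides with the offline single-scale distribution on $C$ — this needs a careful argument that the \emph{order} in which points enter cluster $C$ (and the fact that points not in $C$ were filtered out by higher-scale clusters) does not change the distribution, because Lemma~\ref{ABN}/Lemma~\ref{prob_part} were stated for an arbitrary arrival order with $\chi_j$ chosen online to satisfy $\sum_j \chi_j^{-1}\le 1$, and $\chi_s = 2s^2$ achieves this regardless of order. A secondary subtlety is that when new scales are inserted at the top or bottom (the \textbf{for all} loops re-inserting $x_1,\dots,x_{i-1}$ into the freshly created partitions), one must argue the resulting distribution on those new levels is still the correct online single-scale distribution — which again follows because ${\bf BOPP}$ is order-oblivious in distribution. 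Once these identifications are in place, the bound is immediate from Lemma~\ref{prob_part} and the count of scales.
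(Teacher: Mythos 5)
Your proposal is correct and follows essentially the same route as the paper's proof: nestedness and the scale-set invariants are immediate from the construction (Observation~\ref{observ_online}), each per-cluster partition $\mathcal{P}_{j+1}[C]$ is an instance of Algorithm~\ref{online-part} on the subspace $C$ so Lemma~\ref{prob_part} gives padding $O(\log|C|)=O(\log i)$ per level, and summing over the $O(\log_\mu\Phi(X_i))$ levels gives the bound. The order-obliviousness subtlety you flag is exactly what the paper disposes of inside Lemma~\ref{prob_part} (via the observation that $\chi_s=2s^2$ satisfies $\sum_s\chi_s^{-1}\le 1$ for any arrival order), so your treatment is just a more explicit version of the paper's terser argument.
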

\begin{proof}
First, note that for every $N \sim \mathcal{H}$ the partitions of all scales are nested by construction. 
Every $N \sim \mathcal{H}$ has a scale set such that $\Delta_A \leq 4\mu d_{\max}(X_i) $ and $\Delta_{B} \leq 4d_{\min}/\mu$, with the scales decreasing exactly by a factor of $\mu$. The number of the scales after $i$ steps is $O(\log_{\mu} \Phi(X_i))$. 

Since each random partition in $\mathcal{H}$ has been constructed using the online algorithm Algorithm \ref{online-part}, it holds that each probabilistic partition $\mathcal{P}_{j+1}[C]$, of each cluster $C \in P_j \sim \mathcal{P}_{j}$ has padding parameter $O(\log |C|)=O(\log i)$. This completes the proof.
\end{proof}

\subsubsection{Online Probabilistic Tree Embedding}
\label{app:constr}
\begin{proofof}{Theorem~\ref{thm:dist-ub}} In what follows we describe the construction of the metric-oblivious tree embedding. The extension of the embedding to $V$ is presented in Claim~\ref{extend_HST}. 

In \cite{Bar96} Bartal gave a natural algorithm to construct an HST tree $T$ from a given hierarchical partition $N$ of a metric space $X$. Let $P_A, \ldots, P_B$ denote the nested partitions of $N$. Recall that we assumed that $P_A$ contains all the points of $X$ as its only cluster. Let $T_1, \ldots, T_s$ be the HST trees, recursively constructed for $s$ clusters of the partition $P_{A}$. Let $r$ be the root of $T$, and define the trees $T_1, \ldots, T_s$ to be its direct children. The length of each edge connecting the root $r$ to each of its children is defined to be $\Delta_A/2$.  Note that the points of $X$ are at the leaves of $T$, and that $T$ is indeed a $\mu$-HST tree. 

Thus, using the above construction on the hierarchical partition $N$ randomly generated by Algorithm~\ref{online_hier_part}, when applied on the current terminal set $X_i$, we obtain the appropriate $\mu$-HST tree $T_i$. It remains to ensure that we can keep the tree updating process in online manner, i.e., the tree of a time step $i$ is a subtree of the tree constructed in the time step $i+1$, which is true by Observation \ref{observ_online}.
To conclude the proof  of the theorem, we apply Lemma~\ref{hier_part} to Theorem~\ref{hst_emb_bar}.
\end{proofof}
\begin{remark}\label{remark:tree_size}
We note that the resulting tree $T_i$, constructed from a hierarchical partition $N$ of $X_i$, can contain ``redundant'' paths:  these are the chain-paths from a node $v$ to $u$, created from partitions that were added to $N$ when the aspect ratio of the current set has been increased. We can compress these chains to one edge without changing the metric defined by the tree. In addition, the compressed tree will be a proper $\mu$-HST tree as well, with the number of nodes bounded by $O(|X_i|)$. 
\end{remark}

Recall the definition of a fully extendable online tree embedding, given in Definition \ref{def_fully_tree_ext}.
We prove the following:
\begin{claim}\label{extend_HST} 
Let $V$ be a metric space and $X=\{x_1, \ldots, x_k\} \subseteq V$ be a sequence of terminals. The online probabilistic metric-oblivious embedding of Theorem~\ref{thm:dist-ub} is a fully extendable online tree embedding. 
\end{claim}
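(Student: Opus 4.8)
\textbf{Proof plan for Claim~\ref{extend_HST}.}

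The plan is to exhibit, for each time step $i$, the two extension maps $F_i \colon V \to T$ and $H_i \colon T_i \to V$ required by Definition~\ref{def_fully_tree_ext}, where $T_i$ is the subtree of the HST $T$ induced by the first $i$ terminals. The natural construction comes directly from the hierarchical partition $N$ maintained by Algorithm~\ref{online_hier_part}: think of $N$ as defining a partition hierarchy not just of $X_i$ but of all of $V$, by running the same random choices (the centers and radii $(v_j,r_j)$ picked so far) on the full metric. Concretely, every non-terminal $v \in V$ falls, at each scale $\Delta_j$, into the first already-created cluster $C(v_\ell,r_\ell)$ whose ball contains it (and into no new cluster, since we do not create clusters for non-terminals); if $v$ lies outside every existing cluster at some scale, we route it down the chain of the nearest terminal. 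This assigns $v$ to a unique leaf of $T$, and we let $F_i(v)$ be that leaf. For $H_i$, each node of $T_i$ corresponds to a cluster $C(v_\ell, r_\ell)$ at some scale whose center $v_\ell$ is a terminal; define $H_i$ to send every point of $T_i$ lying in the subtree below that cluster-node to the center terminal $v_\ell$ — equivalently, $H_i$ maps an internal tree node to the representative terminal of its cluster, and is the identity (i.e.\ $f_i^{-1}$) on the leaves.

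Having defined the maps, I would verify the four properties in order. Property~(\ref{ext:1}), dependence only on $x_1,\dots,x_i$, is immediate: both maps are read off the partition hierarchy after step $i$, which depends only on the terminals revealed and the random bits used. Property~(\ref{ext:4}), that $H_{i+1}$ extends $H_i$, follows from Observation~\ref{observ_online}: inserting $x_{i+1}$ never splits an existing cluster, so every tree node present at step $i$ keeps its representative terminal, and $T_i \subseteq T_{i+1}$. Property~(\ref{ext:2}), that $F_i$ extends $f_i$ with $d_T(F_i(u),F_i(v)) \le \alpha_i\, d_V(u,v)$ in expectation, is where the distortion analysis of Theorem~\ref{thm:dist-ub} is reused: two points $u,v$ are separated in $T$ exactly when some scale-$\Delta_j$ partition cuts the pair, and the padding-parameter bound of Lemma~\ref{hier_part} controls the expected lowest scale at which this happens, giving expected expansion $O(\mu \log_\mu \Phi \log k)$ — the same bound as for terminals, since the argument only uses that the partitions are $\Delta_j$-bounded with padding $O(\log k)$, which holds regardless of whether $u,v$ are terminals. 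Property~(\ref{ext:3}), non-expansiveness of $H_i$, is the cleanest: if $\hat u, \hat v$ lie in the subtree of a common cluster node at scale $\Delta_j$ but in different child subtrees, then $d_T(\hat u,\hat v) \ge \Delta_{j+1}/2$ while $H_i(\hat u), H_i(\hat v)$ are the centers of two scale-$(j{+}1)$ clusters inside a common $\Delta_j$-ball, so $d_V(H_i(\hat u),H_i(\hat v)) \le \Delta_j \le \mu \Delta_{j+1}$, and a suitable choice of the $\mu$-HST edge-length normalization (as in Definition~\ref{HST_Labels}, where edges from a cluster node to its children have length $\Delta_j/2$) makes this $\le d_T(\hat u,\hat v)$; one also checks $H_i$ restricted to leaves equals $f_i^{-1}$ and is non-expansive there because $f_i$ is non-contractive.

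The main obstacle I anticipate is making Property~(\ref{ext:3}) hold \emph{simultaneously} with $F_i$ being a genuine metric extension of $f_i$ into the \emph{same} tree $T$, rather than into some auxiliary refined tree — in other words, ensuring the redundant-chain compression of Remark~\ref{remark:tree_size} and the online scale-growth of Algorithm~\ref{online_hier_part} do not break the telescoping inequality between $d_T$ along a root-to-leaf path and the geometric sequence of scales. Concretely, when the aspect ratio grows and new top scales are prepended, a cluster node that used to be near the root acquires new ancestors, and one must check that $H_i$ mapping below that node to its center terminal is still non-expansive with respect to the \emph{new} tree metric; this works because the new top-scale edges only \emph{increase} $d_T$ between points in different new top-level clusters, and points that $H_i$ identifies all lie within a single old cluster hence a single new leaf-ward subtree. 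I would spell this out by induction on $i$, using Observation~\ref{observ_online} at each step, and handle the compression by noting that contracting a chain of edges of lengths $\Delta_{j}/2, \Delta_{j+1}/2, \dots$ into one edge of length $\Delta_j/2$ only shrinks distances, which preserves non-expansiveness of $H_i$ and can only help the non-contractiveness needed for the distortion bound once we account for the $\mu/(\mu-1)$ factor already absorbed into the $O(\mu)$ in Theorem~\ref{hst_emb_bar}.
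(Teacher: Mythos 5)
Your proposal is correct and follows essentially the same route as the paper: simulate the hierarchical partition on non-terminals (without opening new clusters) to define $F_i$ and reuse the padding analysis for the expected-expansion bound, collapse each subtree of $T_i$ to a representative terminal to define $H_i$, use Observation~\ref{observ_online} for the extension properties, and absorb the constant loss in non-expansiveness by rescaling the edge lengths. The only cosmetic differences are that the paper sends a non-terminal to the internal node of the deepest cluster containing it rather than to a leaf (so your ``nearest terminal'' must be read as a terminal inside that deepest containing cluster, not the globally nearest one, or the expansion bound would fail), and sends an internal node to an arbitrary descendant leaf rather than specifically to the cluster center.
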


\begin{proof}
For each $1\leq i \leq k$, let $f_i\colon X_i \to T_i$ denote the embedding of the step $i$, where $T_i$ is the randomly constructed $\mu$-HST tree by the algorithm of Theorem~\ref{thm:dist-ub}. Let $N$ denote the corresponding random hierarchical partition of $X_i$, from which the tree $T_i$ is obtained.  

The extension $F_i: V \to T_i$ is constructed as follows. 
For all $v\in V \setminus X_i$, go over the hierarchy $N$ to find the clusters which would contain $v$, if $v$ was the next point to arrive. In a more detail, going recursively from the topmost scale of $N$ to the lowest, for each cluster $C(u,r)$ in partition $P_j$, according to the order of construction, check whether $v$ belongs to it, i.e. $d(u,v) \leq r$. Let $ A\leq j \leq B$ be the first level such that $v$ does not belong to any cluster of the partition $P_j$. If $j=A$, then define $F_i(v)$ be the root of $T_i$. For $j>A$,  
consider the cluster $C=P_{j-1}(v)$ of partition $P_{j-1}$ that contains $v$. By construction of $T_i$ from the hierarchy $N$, there is an internal node $z_C$ on $T_i$ that corresponds to the cluster $C$. Define $F_i(v)=z_C$.  

To analyze the expansion of $F_i$, note that the construction of $F_i$ exactly simulates the online random partition of $X_i \cup v$, as if $v$ was given upfront, except that the construction does not open up new clusters, but instead matches the point $v$ to the higher cluster it belongs to in the hierarchy. This implies that at each partition level $j$ that has a cluster containing $v$, the padding property holds for the point $v$, with the same padding parameter as for the terminal points, and the distance on the tree $T_i$ between any two images $F_i(v) \neq F_i(w)$ is up to a constant factor as the distance on the tree that would have been constructed if $v$ and $w$ were given to the online embedding upfront.  

More formally, for any scale level $j\geq A$ such that the partition $P_j$ of $N$ has a cluster $C(u, r)$ with $d(u, v)\leq r$, it holds that for any point $w \in V$, the probability that the ball $B_V(v, d(v, w))$ cut by the cluster $C(u,r)$ is bounded by $O(\log i)d(v,w)/\Delta_j$, where $\Delta_j$ is the diameter of $C(u,r)$. Therefore, the expected expansion of a pair of points that were not  mapped to the same node is bounded by $O(\log i \log \Phi(X_i))$. The same holds true for pairs $(v, x_s)$, where $v\in V\setminus X_i$ and $x_s \in X_i$. For the points that were mapped to the same node on the tree by $F_i$, the expansion is $0$.

The extension $H_i: T_i \to V$ of $f_i^{-1}$ is defined as follows. Going from the leaves of $T_i$ up to the root, for each node $v \in T_i$: if $H_{i-1}(v)$ is already defined then let $H_i(v)=H_{i-1}(v)$, otherwise, if $v$ is a leaf then let $H_i(v)=f_i^{-1}(v)$ and for each internal non-leaf node $v$ let $H_i(v)=H_i(u)$ where $u$ is any child of $v$. Note that by the construction $H_{i+1}$ extends $H_{i}$, since $T_{i}$ is a subtree of $T_{i+1}$. Also, since  $f_i$ is non-contractive on the terminal set $X_i$, for any  two terminals $x_j \neq x_l$ we have $ d(x_j, x_l)\leq d_{T_i}(f_i(x_j), f_i(x_l))$, i.e., $d(H_i(f_i(x_j)), H_i(f_i(x_l))) \leq d_{T_i}(f_i(x_j), f_i(x_l))$. Next we show that $H_i$ is non-expansive on any two internal nodes $u \neq v \in T_i$. Let $f_i(x_j)$ be any leaf in the subtree rooted on $v$, and $f_i(x_l)$ be any leaf in the subtree rooted on $u$. Since $T_i$ is a $\mu$-HST it holds that $ (1/4) d_{T_{i}}(f_i(x_j), f_i(x_l)) \leq d_{T_i}(u, v)$, implying that $H_i$ is non-expansive up to a factor of $4$. Scaling up the edge costs of $T_i$ by $4$ gives  a non-expansive embedding. 
\end{proof}

By the above construction and Remark~\ref{remark:tree_size} we have:
\begin{observation}\label{observ:tree_size}
The extension $F_k$ of the online embedding $f_k: X_k \to T_k$ maps the points of $V$ into an $HST$ tree with $O(k)$ nodes.
\end{observation}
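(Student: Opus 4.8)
The plan is to derive the observation directly from two ingredients that are already in place: the explicit description of $F_k$ in the proof of Claim~\ref{extend_HST}, and the chain-compression of Remark~\ref{remark:tree_size}. First I would record that $F_k$ takes values in the node set of $T_k$: by construction, for $v \in X_k$ we have $F_k(v) = f_k(v)$, a leaf of $T_k$; and for $v \in V \setminus X_k$, the point $F_k(v)$ is either the root of $T_k$ or an internal node $z_C$ associated with some cluster $C$ of the hierarchy $N$. Thus $\mathrm{Im}(F_k)$ is contained in $V(T_k)$, and it suffices to bound $|V(T_k)|$.

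Next I would bound $|V(T_k)|$. The tree $T_k$ is built from the hierarchical partition $N$ of $X_k$: its leaves are exactly the $k$ terminals, and each internal node corresponds to a cluster appearing at some scale $\Delta_j$, $A \le j \le B$. Before compression this gives $\Theta(k\log_\mu \Phi(X_k))$ nodes in the worst case, since a cluster that is not refined between scale $j$ and $j+1$ contributes a degree-two node. By Remark~\ref{remark:tree_size}, every maximal chain of such degree-two nodes can be contracted to a single edge without changing the metric defined by the tree, and the result remains a $\mu$-HST. After this contraction every internal node has at least two children, so a tree with $k$ leaves has at most $k-1$ internal nodes; hence the compressed $T_k$ has at most $2k-1 = O(k)$ nodes.

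Finally I would address the only point requiring care: $F_k$ may a priori send some $v \in V\setminus X_k$ to a node $z_C$ lying in the interior of a contracted chain, which no longer exists as a separate vertex. Since contraction preserves all pairwise distances among the surviving nodes, we may simply redefine $F_k(v)$ to be the (unique) surviving endpoint of the contracted chain that lies on the root side of $z_C$; this changes distances from $F_k(v)$ to other images only by a factor at most $\mu/(\mu-1) = O(1)$ (an HST chain contributes a geometric series), so the expected-expansion bound established in Claim~\ref{extend_HST} is unaffected up to constants. With this adjustment, $F_k$ maps $V$ into the compressed tree $T_k$, which has $O(k)$ nodes, as claimed.

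This statement is essentially a bookkeeping corollary of the construction; the only mild subtlety — and the place I would be most careful — is checking that the chain-compression of Remark~\ref{remark:tree_size} is compatible with the extension $F_k$, i.e. that remapping interior-of-chain images to surviving nodes does not degrade the distortion guarantee.
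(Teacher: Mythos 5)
Your proof is correct and follows essentially the same route as the paper, which derives the observation directly from the tree construction together with the chain-compression of Remark~\ref{remark:tree_size}; your leaf/internal-node count after compression is exactly the intended argument. The one detail you add beyond the paper --- remapping images of $F_k$ that land inside a contracted chain to a surviving endpoint and checking this costs only a constant factor in expansion --- is a legitimate loose end that the paper leaves implicit, and your geometric-series argument handles it correctly.
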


We prove the following claim, which is used in the analysis of our main theorem:
\begin{claim}\label{observ:probab_ineq}
The extension $F_k$ of the online embedding $f_k: X_k \to T_k$ is such that for any $v \neq u \in V$, and $L>0$, we have $Pr_{T_k}[d_{T_k}(u,v) \geq L] \leq O(\log k)\frac{d(u,v)}{L}$.
\end{claim}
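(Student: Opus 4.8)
The plan is to bound the probability by a union bound over the distance scales of the hierarchical partition $N$ from which $T_k$ is built, invoking the per‑scale padding guarantee already established in the proof of Claim~\ref{extend_HST}. Recall that $N$ consists of nested $\Delta_j$‑bounded probabilistic partitions $P_A,\dots,P_B$ with $\Delta_{j+1}=\Delta_j/\mu$, and that (after compressing redundant chains as in Remark~\ref{remark:tree_size}) $T_k$ has an internal node $z_C$ for every cluster $C$ appearing at some scale, with the edges from $z_C$ to its children having length $\Theta(\Delta_{j+1})$ when $C$ sits at scale $j$. For $u,v\in V$, let $m=m(u,v)$ be the largest scale index at which $u$ and $v$ are assigned to a common cluster $C$ under the greedy rule defining $F_k$ (with the convention $m=A-1$ and $\Delta_{A-1}:=\mu\Delta_A$ if no common cluster exists at any scale $\ge A$). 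Then $\mathrm{lca}(F_k(u),F_k(v))=z_C$: both images lie in the subtree of $z_C$, and at scale $m+1$ the points are either in different children‑subtrees of $z_C$ or one of them equals $z_C$ (having dropped out of all finer clusters). Since $T_k$ is a $\mu$‑HST with $\mu>4$, a geometric‑series estimate then gives $d_{T_k}(F_k(u),F_k(v))\le c_\mu\,\Delta_m$ for an absolute constant $c_\mu=O(1)$.

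First I would convert this into a statement about scales: if $d_{T_k}(F_k(u),F_k(v))\ge L$ then $\Delta_m=\Omega(L)$, hence $u$ and $v$ are already separated (placed in different parts, or with one of them dropped out of all clusters) at some scale $j$ with $\Delta_j=\Omega(L)$. Letting $j_0$ be the largest such index (if there is none, the event is empty and we are done), and using that once two points are separated at a scale they remain separated at all finer scales, the event is contained in $\bigcup_{j=A}^{j_0}\{u,v\text{ separated at scale }j\}$, so
\[
 \Pr_{T_k}\!\big[d_{T_k}(F_k(u),F_k(v))\ge L\big]\;\le\;\sum_{j=A}^{j_0}\Pr_{T_k}\!\big[u,v\text{ separated at scale }j\big].
\]

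The next step is to bound each term by $O(\log k)\,d(u,v)/\Delta_j$. For terminals this follows from the padding parameter of the hierarchy (Lemma~\ref{hier_part}, via Lemma~\ref{ABN}): conditioned on the scale‑$(j{-}1)$ cluster $C$ containing both points, $\mathcal{P}_j[C]$ has padding $O(\log|C|)=O(\log k)$, so the probability that $B_V(u,d(u,v))$ — which contains $v$ — fails to lie inside $u$'s part of $\mathcal{P}_j[C]$ is at most $O(\log k)\,d(u,v)/\Delta_j$, and such a cut is precisely what must occur for the pair to be separated at scale $j$; averaging over $C$ preserves the bound. For arbitrary (possibly non‑terminal) $u,v$ the same bound holds because, as shown in the proof of Claim~\ref{extend_HST}, the construction of $F_k$ simulates the online partition run on $X_k$ together with $u$ and $v$, so the identical padding guarantee (with $O(\log(k{+}2))=O(\log k)$) applies to $u$ and $v$ at every scale still containing them, and the event that a point has dropped out of all clusters by scale $j$ is itself such a cut event and is absorbed into the same estimate. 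Finally I would sum the geometric series: since $\Delta_j=\Delta_{j_0}\mu^{\,j_0-j}$ for $j\le j_0$, we have $\sum_{j=A}^{j_0}1/\Delta_j\le\frac{\mu}{(\mu-1)\Delta_{j_0}}\le\frac{2}{\Delta_{j_0}}$, and $\Delta_{j_0}=\Omega(L)$, giving
\[
 \Pr_{T_k}\!\big[d_{T_k}(F_k(u),F_k(v))\ge L\big]\;\le\;O(\log k)\,d(u,v)\cdot\frac{O(1)}{L}\;=\;O(\log k)\,\frac{d(u,v)}{L}.
\]

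The main obstacle is the per‑scale bound of the third paragraph: the padding statement of Lemma~\ref{ABN}/Lemma~\ref{hier_part} is phrased for a fixed point set and for the partition of a single cluster, and turning it into a clean bound on the ``separated at scale $j$'' event for the general pair $u,v$ under the extension $F_k$ — correctly handling the cases where $F_k(u)$ or $F_k(v)$ is an internal node because the point fell out of the clustering — is exactly where one must lean on the simulation argument of Claim~\ref{extend_HST} rather than re‑deriving padding from scratch. A minor secondary point is fixing the constant $c_\mu$ in the $\mu$‑HST distance estimate (the place where $\mu>4$ is used), but that is routine.
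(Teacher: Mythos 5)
Your proposal is correct and follows essentially the same route as the paper's proof: the per-scale padding bound $O(\log k)\,d(u,v)/\Delta_j$ (inherited from the simulation argument in Claim~\ref{extend_HST}), a union bound over the scales at which separation would force $d_{T_k}(u,v)\ge L$, and a geometric sum dominated by the smallest relevant scale $\Delta_{j_0}=\Omega(L)$. Your write-up merely makes explicit the geometric-series step and the handling of points that drop out of all clusters, which the paper leaves implicit.
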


\begin{proof}
As we have shown in the proof of Claim \ref{extend_HST}, at each level $A\leq j \leq B$ such that there is a cluster $C \in P_j$ that contains $v \in V$, it holds that $Pr[B(v, d(v, w)) \not \subseteq C] \leq O(\log k) d(v,w)/\Delta_j$, for any $w \in V$. Note that $B(v, d(v, w)) \not \subseteq C$ implies that $v$ and $w$ will be separated at the level $j$, and will be mapped to two nodes on the tree $T_k$ with the distance $d_{T_k}(u,v)=O(\Delta_j)$. Therefore, by the union bound, and since $T_k$ is an HST tree, we have $Pr_{T_k}[d_{T_k}(u,v) \geq L] \leq O(\log k)\frac{d(u,v)}{L}$.
\end{proof}

\section{Lower Bounds on Online Embedding into Trees}\label{sec:lowerbound}
In this section we prove lower bounds on the expected distortion of probabilistic online embeddings into trees (Theorems~\ref{thm:dist-lb} and \ref{thm:dist-HST-lb}). Using the standard approach for proving lower bounds against probabilistic constructions (Yao's Principle), it suffices to construct an underlying metric space $(V_\ell, d_{V_\ell})$ together with a distribution $\Dcal_\ell$ of terminal sequences $\sigma$ and vertex pairs $(u,v)$ such that for any deterministic online embedding into any HST (general tree) $T$, we have that
$E_{(\sigma, (u,v)) \in \Dcal_\ell}[d_T(u,v)/d_{V_\ell}(u,v)]$ 
satisfies the desired lower bound.

Both lower bounds follow the same general framework which we present below.

\subsection{Lower Bound Framework}
We will need the following result of Rabinovich and Raz~\cite{RabinovichR98}:
\begin{theorem}[\cite{RabinovichR98}]
  \label{thm:RR}
  Any embedding of the $n$-vertex cycle into a tree has distortion at least $n/3-1$.
\end{theorem}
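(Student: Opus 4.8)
By rescaling we may assume the embedding $f$ of the $n$-cycle (with unit edge lengths, vertices $v_1,\dots,v_n$ around the cycle so that $d_{C_n}(v_i,v_{i+1})=1$) into the tree $T$ is non-contractive with expansion exactly $D:=\mathrm{distortion}(f)$. We may also replace $T$ by the minimal subtree $T^{*}$ spanning $f(v_1),\dots,f(v_n)$; then every edge of $T^{*}$ separates two images, and moreover every edge $e$ of $T^{*}$ has weight at most $D$, since $e$ lies on the $T$-path between $f(v_i)$ and $f(v_{i+1})$ for some cycle-edge $(v_i,v_{i+1})$, and that path has length $d_T(f(v_i),f(v_{i+1}))\le D$. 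It therefore suffices to exhibit a single cycle-edge $(v_i,v_{i+1})$ with $d_T(f(v_i),f(v_{i+1}))\ge n/3-1$.

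\textbf{The plan: a balanced separator.} The main device is a balanced separator of $T^{*}$ with respect to the $n$ images. Orienting each edge of $T^{*}$ toward the side containing the majority of the images, one of two things happens: either some edge $e^{*}$ has at least $n/3$ images on each side, or else following the orientations leads to a vertex $c$ such that every component of $T^{*}-c$ carries fewer than $n/3$ images. Now walk around the cycle $v_1,\dots,v_n,v_1$: the walk crosses the separator and is thereby cut into consecutive \emph{arcs}, each arc having all of its images on a single side (resp.\ in a single component of $T^{*}-c$). With a short case analysis on the tree's shape one can choose the separator so that some arc $v_a,v_{a+1},\dots,v_b$ contains $\Omega(n)$ of the $n$ cycle vertices, confined to one side. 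By non-contraction, $d_T(f(v_a),f(v_b))\ge d_{C_n}(v_a,v_b)=b-a$ is $\Omega(n)$, and this tree-distance is realized inside that one side of the separator; chasing how this forces the ``entry'' crossing of the arc --- a single cycle-edge $(v_{a-1},v_a)$ --- to have tree-length $\Omega(n)$, and optimizing the threshold, pins down the bound $n/3-1$.

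\textbf{The hard part.} The crux is precisely the last step: showing a crossing cycle-edge is genuinely \emph{long}, not merely of length $\ge w(e)$ (which only gives $\le D$). A priori the cycle can ``descend'' toward the separator gradually --- the distances $d_T(f(v_i),f(v_{i+1}))$ being at most $D$, successive images can drift toward the separator over $\Theta(n/D)$ steps and cross it at a shallow point. Ruling this out requires using the global cyclic-distance constraints $d_T(f(v_i),f(v_j))\ge d_{C_n}(v_i,v_j)$, not only the local bounds on consecutive images, together with a choice of separator for which the endpoint of the long arc is itself deep on its side. Making this bookkeeping go through uniformly over all tree shapes --- and extracting the clean constant $\tfrac13$ rather than something weaker --- is the technical heart; this is the Rabinovich--Raz argument that Theorem~\ref{thm:RR} packages, and it is the step I expect to be the main obstacle.
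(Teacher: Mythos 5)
This statement is not proved in the paper at all: it is imported verbatim from Rabinovich and Raz~\cite{RabinovichR98} and used as a black box (its only role here is to feed Lemma~\ref{lem:girth-dist}). So the relevant question is whether your attempt stands on its own, and it does not: it is a plan, not a proof. You say so yourself in the final paragraph --- the step of ``showing a crossing cycle-edge is genuinely long \dots is the step I expect to be the main obstacle'' --- but that step \emph{is} the theorem; everything before it (rescaling to a non-contractive embedding, restricting to the Steiner subtree $T^{*}$, observing each edge of $T^{*}$ has weight at most $D$, the existence of a balanced edge or vertex separator) is routine and carries none of the content of the $n/3-1$ bound.

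Beyond the admitted gap, there is a second, unflagged one: your claim that ``some arc $v_a,\dots,v_b$ contains $\Omega(n)$ of the $n$ cycle vertices, confined to one side'' is not justified by the balanced-separator choice. A priori the cyclic walk may cross the separator many times, in which case every maximal one-sided arc is short, and no single arc gives you an $\Omega(n)$ tree-distance to work with; handling the many-crossings regime is exactly where the Rabinovich--Raz argument has to do real work. (Conversely, the step you identify as the hard one is actually the easier of the two once a long arc is granted: if the arc $v_a,\dots,v_b$ lies on one side of a separating tree edge $e^{*}$ with endpoint $x$ on that side, then the entry and exit cycle-edges have tree-paths through $e^{*}$, so $d_T(f(v_{a-1}),f(v_a)) \ge d_T(x,f(v_a))$ and $d_T(f(v_b),f(v_{b+1})) \ge d_T(x,f(v_b))$, whence their sum is at least $d_T(f(v_a),f(v_b)) \ge b-a$ by non-contraction, and one of the two edges is long.) As it stands the proposal neither reproduces the cited proof nor supplies a complete alternative; you should either cite \cite{RabinovichR98} as the paper does or supply the missing case analysis.
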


\begin{definition}[Base graph]
  A \emph{base graph} $B$ of \emph{width}
 $\phi$ is an unweighted connected graph with distinguished vertices $s,t$ and can be decomposed into edge-disjoint $(s,t)$-paths such that each path is an $(s,t)$-shortest path in $B$ and has exactly $\phi$ edges. The vertex $s$ is called the source and $t$ the sink.
\end{definition}

\paragraph{Constructing the underlying metric space.}The metric space $(V_\ell,d_{V_\ell})$ will be the shortest-path metric of a weighted graph $\Ghat_\ell = (V_\ell,E_\ell)$ which is constructed recursively using some suitable base graph $B$. Each graph $\Ghat_\ell$ will consist of a source vertex $s_\ell$ and a sink vertex $t_\ell$. For the base case, the graph $\Ghat_0$ consists of a single edge $(s_0, t_0)$. The graph $\Ghat_\ell$ is constructed by replacing each edge of ${\Ghat}_{\ell-1}$ with a scaled-down copy of $B$ in which each edge has length $\phi^{-\ell}$. More precisely, for each edge $(u,v)$ of $\Ghat_{\ell-1}$, we remove $(u,v)$, add a scaled-down copy of $B$ in which each edge has length $\phi^{-\ell}$, and contract $u$ with the copy's source and $v$ with the copy's sink.

\paragraph{Constructing $\Dcal_\ell$.}
The distribution $\Dcal_\ell$ is over pairs $(\sigma, e)$ where $\sigma$ is a sequence of vertices in $V_\ell$ and $e$ is an edge of $\Ghat_\ell$. Consider the following process for constructing a sequence of random edge-weighted graphs $G_0, \ldots, G_\ell$. For each $G_i$, a random subset of the edges are called \emph{active}. Initially, $G_0$ consists of a single active edge. For $i > 0$, $G_i$ is constructed by taking $G_{i-1}$ and replacing a uniformly random active edge $(u,v)$ with a scaled-down copy of $B$ in which each edge has length $\phi^{-\ell}$; the newly-introduced edges become active and all others become inactive. The sequence $\sigma$ is the vertices of $G_\ell$ in the order in which they were introduced: the vertices of $G_0$ is presented first, then the new vertices of $G_1$, etc. The random edge $e$ is a uniformly random edge among the active edges of $G_\ell$, i.e.~among the ones introduced last.\\

\begin{figure}
  \centering
  \begin{subfigure}[b]{0.15\textwidth}
    \centering
    \includegraphics[scale=0.4]{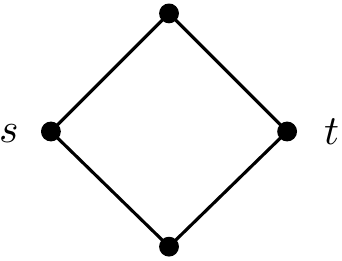}
    \caption{$B$}
  \end{subfigure}
\hfill
  \begin{subfigure}[b]{0.15\textwidth}
    \centering
    \includegraphics[scale=0.4]{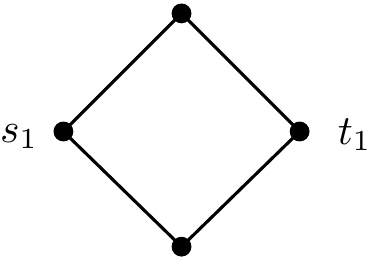}
    \caption{$\Ghat_1$}
  \end{subfigure}
\hfill
  \begin{subfigure}[b]{0.15\textwidth}
    \centering
    \includegraphics[scale=0.4]{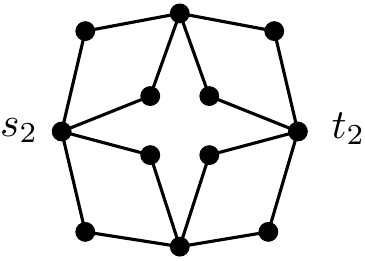}
    \caption{$\Ghat_2$}
  \end{subfigure}
\hfill
  \begin{subfigure}[b]{0.15\textwidth}
    \centering
      \includegraphics[scale=0.4]{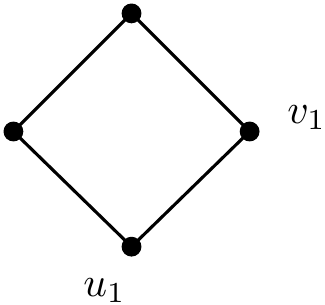}
      \caption{$G_1$}
  \end{subfigure}
\hfill
  \begin{subfigure}[b]{0.15\textwidth}
    \centering
      \includegraphics[scale=0.4]{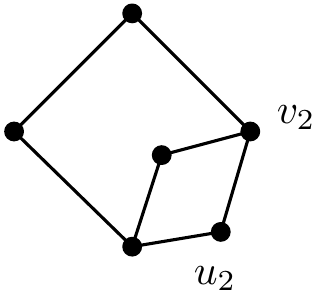}    
      \caption{$G_2$}
  \end{subfigure}
\hfill
  \begin{subfigure}[b]{0.15\textwidth}
    \centering
      \includegraphics[scale=0.4]{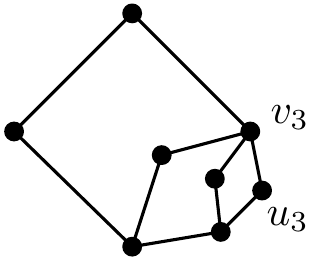}    
      \caption{$G_3$}
  \end{subfigure}
  \caption{\small Example of $\Ghat_\ell$ and $G_\ell$ with a $4$-cycle as base
    graph $B$. Here, $\phi = 2$ so each edge of $\Ghat_\ell$ has length $2^{-\ell}$.}
  \label{fig:diamond-underlying}
\end{figure}

\paragraph{Constructing the base graph.}
In the following, we use the convention that an unweighted graph $G$ induces a metric $d_G$ over its vertex set where $d_G(u,v)$ is the length of the shortest $(u,v)$-path in
$G$ assuming that each edge of $G$ has unit length. In particular, a non-contractive embedding of $G$ into a tree $T$ is one with $d_T(e) \geq 1$ for every edge $e$ of $G$.

\begin{lemma}
  \label{lem:girth-dist}
Let $G = (V,E)$ be an unweighted $n$-vertex graph with girth $g$. Consider a non-contractive embedding of $G$ into a tree $T$. If we choose an edge $e$ in $E$ uniformly at random, then $\Pr[d_T(e) \geq  (g/3)-1] \geq (|E| - (n-1))/|E|$.
\end{lemma}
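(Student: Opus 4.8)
The plan is to prove the equivalent statement that the set $F := \{e \in E : d_T(e) < g/3 - 1\}$ of ``short'' edges satisfies $|F| \le n-1$: then a uniformly random edge lies outside $F$ with probability at least $(|E|-(n-1))/|E|$, and every such edge has $d_T(e) \ge g/3-1$. In fact I would establish the stronger claim that $F$ is acyclic (a forest), which gives $|F| \le n-1$ at once. So assume toward a contradiction that $F$ contains a cycle, and let $C$ be a shortest one, with $\ell := |C|$; since $G$ has girth $g$, we have $\ell \ge g$, and by minimality $C$ is chordless in $(V,F)$.

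Next I would view the restriction $f|_{V(C)}$ as an embedding of the $\ell$-vertex cycle graph — equipped with its own shortest-path (cycle) metric $d_C$ — into the tree $T$, and argue that this embedding is non-contractive with respect to $d_C$: for adjacent vertices of $C$ this is immediate since $d_C=1$ and $d_T(f(u),f(v)) \ge d_G(u,v) \ge 1$, and for the remaining pairs one uses the girth of $G$ and the minimality of $\ell$ to rule out a ``shortcut'' making $d_C(u,v) > d_G(u,v)$. Granting this, Theorem~\ref{thm:RR} says $f|_{V(C)}$ has distortion at least $\ell/3 - 1 \ge g/3 - 1$.

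The last step pushes this distortion bound onto a single edge of $C$, contradicting $C \subseteq F$. Since $f|_{V(C)}$ is non-contractive its contraction is at most $1$, so its expansion is at least $\ell/3-1$; hence there are $u,v \in V(C)$ with $d_T(f(u),f(v)) \ge (\ell/3-1)\, d_C(u,v)$. The shorter of the two arcs of $C$ between $u$ and $v$ has exactly $d_C(u,v)$ edges, and the triangle inequality along it gives $d_T(f(u),f(v)) \le \sum_{e \text{ on the arc}} d_T(e)$; averaging, some edge $e$ on the arc has $d_T(e) \ge \ell/3 - 1 \ge g/3 - 1$. But $e \in C \subseteq F$, so $d_T(e) < g/3-1$ — a contradiction. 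Hence $F$ is a forest and the lemma follows.

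The step I expect to be the main obstacle is justifying that $f|_{V(C)}$ is non-contractive as a cycle embedding: non-contractiveness of $f$ only yields $d_T(f(u),f(v)) \ge d_G(u,v)$, and $d_G(u,v)$ could a priori be strictly smaller than $d_C(u,v)$ if the rest of $G$ supplies a shorter $u$–$v$ path. The girth hypothesis is exactly what controls this — such a shortcut of length $q$ together with the arc of $C$ of length $d_C(u,v)$ it bypasses forms a closed walk of length $q + d_C(u,v) < 2\, d_C(u,v) \le \ell$, hence a cycle strictly shorter than $C$; since that arc lies in $F$ while $C$ is the shortest cycle in $F$, this forces the shortcut to use an edge outside $F$, and the bookkeeping needed to turn this dichotomy into a full contradiction (or, alternatively, to work with an isometrically embedded shortest cycle of all of $G$ and argue it cannot lie entirely in $F$) is the delicate part of the argument.
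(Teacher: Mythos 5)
Your overall strategy coincides with the paper's: the paper defines $F$ to be the set of edges with $d_T(e)\ge g/3-1$ and argues that $F$ meets every cycle of $G$, which is exactly your claim that the complementary set of short edges is acyclic; the counting step and the arc-averaging step that converts a stretched pair into a stretched edge are both correct. The problem is the step you yourself flag as delicate, and it is a genuine gap rather than bookkeeping. Theorem~\ref{thm:RR} concerns embeddings of the cycle \emph{with its intrinsic cycle metric} $d_C$; to extract a stretched edge of $C$ from it you need $f|_{V(C)}$ to be non-contractive with respect to $d_C$, i.e.\ $d_T(f(u),f(v))\ge d_C(u,v)$ for all $u,v\in V(C)$. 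Non-contractiveness of $f$ only gives $d_T(f(u),f(v))\ge d_G(u,v)$, and the girth hypothesis forces $d_G(u,v)=d_C(u,v)$ only for pairs at cycle-distance up to roughly $g/2$; for a long cycle $C$ of $(V,F)$ there may be pairs with $d_G(u,v)$ much smaller than $d_C(u,v)$, and then the distortion bound of Theorem~\ref{thm:RR} can be realized entirely by contraction of such a pair, yielding no stretched edge at all. Neither of your proposed repairs closes this. (a) A shortcut $P$ between $u,v\in V(C)$ together with an arc of $C$ does produce a cycle shorter than $\ell$, but that cycle uses the edges of $P$, about which nothing is known: it neither contradicts the minimality of $C$ among cycles of $(V,F)$, nor, if one instead inducts on cycle length, guarantees that the long edges promised in the two shorter cycles $P\cup A_1$ and $P\cup A_2$ lie on $C$ rather than both lying on $P$. (b) Passing to a shortest cycle of all of $G$, which is indeed isometrically embedded so that Rabinovich--Raz applies cleanly, shows only that \emph{one} cycle of $G$ contains a long edge, whereas the counting argument needs \emph{every} cycle of $(V,F)$ to contain one.

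For what it is worth, the paper's own proof is equally terse at exactly this point: it asserts that Theorem~\ref{thm:RR} implies $F$ intersects all cycles of $G$, without addressing non-isometric cycles. So you have correctly located the crux of the lemma; but the missing argument is not supplied, and as written the proof is incomplete.
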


\begin{proof}
Let $F = \{e \in E : d_T(e) \geq ((g/3)-1) \cdot d_V(e)\}$. Since we pick an edge $e \in E$ uniformly at random, we have that $\Pr[d_T(e) \geq (g/3)-1] = |F|/|E|$. On the one hand, Theorem~\ref{thm:RR} implies that $F$ intersects all cycles of $G$---i.e. $F$ is a feedback edge set---and so $E \setminus C$ has no cycles; thus, $|E| - |F| \leq n - 1$. Therefore, $|F| / |E| \geq (|E| - (n-1))/|E|$, as desired.
\end{proof}

\begin{lemma}
  \label{lem:base-HST}
  For some constant $c > 0$, there exists an infinite family $\{B_m\}$ of graphs that satisfies the following properties:
  \begin{inparaenum}[\bfseries (1)]
  \item $B_m$ is a base graph with $m$ vertices and $O(m)$ edges;
  \item \label{base-dist} $B_m$ has width $\phi_m = \Theta(\log m)$;
  \item $B_m$ has at least $3m$ edges;
  \item $B_m$ has girth $\Theta(\log m)$.
  \end{inparaenum}
\end{lemma}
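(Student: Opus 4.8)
The plan is to build $B_m$ from a high\text{-}girth, almost\text{-}regular graph produced by the probabilistic method, and then "thicken'' a bundle of parallel $(s,t)$\text{-}geodesics with that graph so as to inflate the edge count without killing the girth.

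\emph{Step 1: the raw material.} Fix a large constant $d$. A uniformly random $d$\text{-}regular graph $\Gamma$ on $N$ vertices has expected number of cycles of length at most $\tfrac13\log_{d-1}N$ equal to $\sum_{\ell \le \frac13\log_{d-1}N}\frac{(d-1)^\ell}{2\ell}=O(N^{1/3})=o(N)$. By Markov's inequality and an alteration argument, there is a choice of $\Gamma$ from which we may delete one edge per short cycle, removing only $o(N)$ edges and leaving a graph $\Gamma'$ with girth $\Theta(\log N)$, minimum degree $\ge d-1$, $\Theta(N)$ edges, which (being a slight perturbation of a random regular graph) is connected and has diameter $O(\log N)$. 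Taking $d$ large enough gives $|E(\Gamma')|\ge 7|V(\Gamma')|$; alternatively one may quote explicit Ramanujan graphs. This is the origin of properties (1)–(4); the remaining work is to impose the base\text{-}graph structure without losing any of them.

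\emph{Step 2: the base graph.} Set $\phi_m := \Theta(\log m)$ and $F:=\Theta(m/\log m)$, and take $F$ internally vertex\text{-}disjoint paths $P_1,\dots,P_F$ from $s$ to $t$, each of length $\phi_m$; write $(j,a)$ for the vertex of $P_a$ at distance $j$ from $s$. For each level $j$ in a window $[j_0,\phi_m-j_0]$ with $j_0=\Theta(\log m)$, join the $F$ vertices $\{(j,a):a\in[F]\}$ according to a copy $\Gamma'_j$ of the graph from Step~1 on $F$ vertices, the copies used at distinct levels being pairwise edge\text{-}disjoint (as subgraphs of the complete graph on $[F]$). Every edge is either a \emph{path edge} (changing the level coordinate by exactly $1$) or a \emph{level edge} (fixing it), so $d_{B_m}(s,t)=\phi_m$, each $P_a$ is an $(s,t)$\text{-}shortest path with $\phi_m$ edges, and the $P_a$ are edge\text{-}disjoint; hence $B_m$ is a base graph of width $\phi_m=\Theta(\log m)$, giving (2), and it carries $F=\Theta(m/\log m)$ edge\text{-}disjoint $(s,t)$\text{-}geodesics as needed downstream. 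Counting, $|V(B_m)|=\Theta(F\phi_m)=\Theta(m)$; the $\Theta(\log m)$ level graphs contribute $\Theta(F)$ extra edges each, so $|E(B_m)|=F\phi_m+\Theta(F\log m)$, which with the constants tuned lies in $[3|V(B_m)|,\ O(|V(B_m)|)]$, giving (1) and (3).

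\emph{Step 3: the girth, and the main obstacle.} The upper bound girth$(B_m)=O(\log m)$ is immediate from a shortest cycle inside a single $\Gamma'_j$. For the lower bound take a cycle $C$. If $C$ meets $s$ or $t$, it must run out along some $P_a$ to level $\ge j_0$ before it can change paths and back along some $P_b$, so $|C|\ge 2j_0=\Omega(\log m)$. Otherwise project $C$ to the path\text{-}index coordinate: its path edges become loops and its level edges trace a closed trail $W$ in the simple graph $H^\ast=\bigcup_j\Gamma'_j$ (edge\text{-}distinctness of $W$ uses pairwise edge\text{-}disjointness of the $\Gamma'_j$ together with $C$ having distinct edges), so $W$ contains a cycle and its length is at least girth$(H^\ast)$; moreover each time $W$ repeats a path\text{-}index edge across two different levels, $C$ must pay the intervening path edges, so $|C|\ \ge\ (\#\text{ level edges of }C)+(\#\text{ path edges of }C)\ \ge\ \mathrm{girth}(H^\ast)+\bigl(j_{\max}(C)-j_{\min}(C)\bigr)$. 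The delicate point — and the step I expect to be the real obstacle — is that property (3) forces $H^\ast$ to be fairly dense (degree growing with the number of active levels), which caps girth$(H^\ast)$ by the Moore bound, while the lower bound above wants girth$(H^\ast)$ large; the two are reconciled by choosing the per\text{-}level degree, the \emph{spacing} of the active levels, and $\phi_m$ together, so that any cycle either lives essentially inside one (constant\text{-}degree, hence $\Theta(\log m)$\text{-}girth) level graph, or spans levels that are $\Omega(\log m)$ apart and so pays $\Omega(\log m)$ path edges. Carrying out this bookkeeping — and verifying that it is consistent with $\phi_m=\Theta(\log m)$ and $|E(B_m)|\ge 3|V(B_m)|$ — is exactly where the only nontrivial calculation of the lemma lies; everything else is routine counting.
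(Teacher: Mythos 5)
Your architecture differs from the paper's in a way that matters. The paper builds a \emph{layered} $\Theta(\log n)$-partite random graph in which every edge advances the level by one, so the $(D-1)n$ edge-disjoint geodesics share vertices, every vertex has constant degree $2D$, and the girth bound is a single first-moment count over cycles of that constant-degree layered graph (followed by attaching $s,t$ via pendant paths of length $g_n/2$, which cannot create new short cycles). You instead take $F=\Theta(m/\log m)$ \emph{vertex-disjoint} geodesics and put the extra edges \emph{within} levels. The crux of the lemma is the $\Omega(\log m)$ girth lower bound, and that is exactly the step you leave open in Step 3 --- and the bound you propose there, $|C|\ge \mathrm{girth}(H^{*}_{W})+2R$ for a cycle whose level edges span a window $W$ of width $R$, provably cannot yield $\Omega(\log m)$. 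Property (3) forces the level graphs to contribute $\Theta(F\log m)$ edges in total (since $|V|\approx F\phi_m$ and each level has only $F$ vertices), so however you tune the per-level degree $d$ and the spacing of active levels, a window of width $R$ containing $k$ active levels has union of average degree $\Theta(kd)$ and hence, by the Moore bound, girth at most $O(\log F/\log(kd))$, while $2R=\Theta(kd)$ up to the forced normalization $\sum_j d_j=\Theta(\log m)$. Optimizing $x:=kd$ in $O(\log m/\log x)+O(x)$ at $x\approx \log m/\log^{2}\log m$ gives $\mathrm{girth}(H^{*}_{W})+2R=O(\log m/\log\log m)=o(\log m)$. So your dichotomy (``either the cycle lives in one constant-degree level graph or it spans $\Omega(\log m)$ levels'') is false: the projection necessarily contains cycles of length $o(\log m)$ spanning only $o(\log m)$ levels. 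This does not by itself produce a short cycle in $B_m$ --- the lift of such a trail may pay far more than $2R$ path edges if its levels alternate --- but establishing the girth would then require bounding, over the random choice of the $\Gamma'_j$, the number of closed trails $W$ in $\bigcup_j\Gamma'_j$ with $|W|+\sum_i|\lambda_{i+1}-\lambda_i|<c\log m$ (where $\lambda_i$ is the level of the $i$-th edge of $W$). That weighted trail count is a genuinely different and harder computation than the single-graph cycle count of your Step 1, and you have not set it up. As written the proof is incomplete at its only nontrivial point, and the repair you sketch does not work.

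Two smaller points. In Step 1, deleting one edge per short cycle need not preserve minimum degree $d-1$ (a vertex may lie on many short cycles); the paper avoids this by only claiming that a constant fraction of the edge-disjoint geodesics survive the deletion, which is all that is needed. Finally, if you want to keep your general plan, the natural fix is to give up vertex-disjointness and route the density \emph{between} consecutive levels --- at which point you have essentially reconstructed the paper's layered base graph, whose constant maximum degree is what makes the $\Theta(\log m)$ girth compatible with $|E|\ge 3|V|$.
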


We also remark that the natural choice of diamond graphs~\cite{NewmanR02} does not satisfy the second property as it has $\phi_m = \Theta(\sqrt{m})$.

\begin{proof}
  Our approach is to construct a multi-partite graph with logarithmic girth that contains many edge-disjoint paths from the first part to the last, and then add vertices $s$ and $t$ and connect them to the first and last parts, respectively, using paths of logarithmic length.

  \begin{claim}
  For all sufficiently large $n$ and $3 \leq D \leq 7$, there exists an $L_n$-partite graph $H_n$ with $L_n = \lfloor \log n \rfloor + 1$ parts, whose number of vertices $|V(H_n)|$ satisfies $\frac{D-1}{D}nL_n \leq |V(H_n)| \leq nL_n$, and whose edge set comprises $(D-1)n$ edge-disjoint paths from the first part to the last, and has girth at least $g_n = (L_n - 1)/4 = \lfloor \log n \rfloor/4$.
\end{claim}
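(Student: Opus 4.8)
The plan is a probabilistic construction: build $H_n$ as a union of random perfect matchings (``bundles'') stacked across the $L_n$ layers, then clean it up by deleting a few vertices to kill short cycles and discarding a few paths to resolve edge-conflicts. Throughout write $L=L_n$, take $\log=\log_2$, and fix $D$ bundles (the range $3\le D\le 7$ is exactly what the first-moment bound will afford). Identify each layer $V_1,\dots,V_L$ with $[n]$; for every gap $i\in[L-1]$ and bundle $j\in[D]$, let $\pi^{(j)}_i$ be an independent uniformly random permutation of $[n]$, and place the edges $\{v,\pi^{(j)}_i(v)\}$ between $V_i$ and $V_{i+1}$. Bundle $j$ defines $n$ \emph{bundle-paths}: the one starting at $v_0\in V_1$ follows $v_0\mapsto\pi^{(j)}_1(v_0)\mapsto\pi^{(j)}_2(\pi^{(j)}_1(v_0))\mapsto\cdots$. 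Paths within one bundle are vertex-disjoint (hence edge-disjoint), so in particular bundle $1$'s $n$ paths hit all $n$ vertices of every layer, one each; two paths from distinct bundles share an edge precisely when a \emph{coincidence} $\pi^{(j)}_i(v)=\pi^{(j')}_i(v)$ occurs.

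\textbf{First-moment girth bound.} The graph is bipartite (colour layers by parity), so all cycles have even length $\ge4$. I will bound $E[\#\,2\ell\text{-cycles}]$ by summing over ``shapes'': a start vertex, a $\pm1$ step-direction sequence of length $2\ell$ with sum $0$, and a bundle-label sequence, of which there are at most $nL\cdot\binom{2\ell}{\ell}D^{2\ell}\le nL(2D)^{2\ell}$. The key lemma, proved by revealing permutation entries lazily along the walk, is that along any shape whose realization traverses $2\ell$ \emph{distinct} edges every step — including the last — is a fresh query, uniform over $\ge n-2\ell$ values; hence such a shape ``closes up'' with probability $\le 1/(n-2\ell)$. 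Thus $E[\#\,2\ell\text{-cycles}]\le nL(2D)^{2\ell}/(n-2\ell)\le 2L(2D)^{2\ell}$ once $2\ell\le n/2$. Summing over $2\le\ell$ with $2\ell<g_n=(L-1)/4$, and using $2D\le 14<16$ and $L-1\le\log n$, gives $E[\#\{\text{cycles of length}<g_n\}]=O(L)\,(2D)^{g_n}=O\big(\log n\cdot n^{(\log_2 14)/4}\big)=o(n)$, since $\log_2 14<4$. Similarly $E[\#\text{coincidences}]=(L-1)\binom{D}{2}=O(\log n)=o(n)$. By Markov, with probability $1-o(1)$ both quantities are $o(n)$; fix, for all large $n$, an outcome where this holds.

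\textbf{Clean-up and verification.} Delete one vertex from each cycle of length $<g_n$ ($o(n)$ deletions total), destroying all such cycles; and for each coincidence discard one of the two conflicting bundle-paths. A bundle-path is discarded only if it meets a deleted vertex or loses a coincidence, so at most $o(n)$ bundle-paths are discarded; in particular each bundle retains $\ge n-o(n)$ of its $n$ paths, and all surviving paths are pairwise edge-disjoint and avoid every deleted vertex. Since $D$ bundles now give $\ge Dn-o(n)\ge(D-1)n$ surviving paths for $n$ large, select exactly $(D-1)n$ of them including every surviving path of bundle $1$, and let $H_n$ be their union, with the induced $L$-partition. Then $H_n$ is $L_n$-partite; its edge set is the disjoint union of $(D-1)n$ paths from $V_1$ to $V_L$; it is a subgraph of the layered graph minus the deleted vertices, so its girth is $\ge g_n$; $|V(H_n)|\le nL_n$ as each layer has $\le n$ vertices; and each of bundle $1$'s $\ge n-o(n)$ surviving (chosen) paths hits a distinct vertex of every layer, so each layer contributes $\ge n-o(n)\ge\frac{D-1}{D}n$ vertices, giving $|V(H_n)|\ge\frac{D-1}{D}nL_n$ for $n$ large.

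\textbf{Main obstacle.} The delicate part is keeping the clean-up essentially lossless. Both the girth fix and the coincidence fix cost only $o(n)$ of the $Dn$ bundle-paths, which is why one starts with $D$ rather than $D-1$ bundles (leaving slack to extract exactly $(D-1)n$ edge-disjoint paths and still cover $\ge n-o(n)$ vertices per layer via bundle $1$), and why the hypothesis $D\le 7$ is essential — it is precisely the condition under which the first-moment sum $(2D)^{g_n}$ is $o(n)$ when $\log=\log_2$ and $g_n\approx(\log n)/4$. The other point requiring care is the lazy-revelation lemma: one must check that along a walk using pairwise distinct edges no permutation value is ever forced by an earlier query (a short case analysis distinguishing ``up'' from ``down'' steps, using injectivity of each $\pi^{(j)}_i$), so that the closing step is genuinely uniform over $\ge n-2\ell$ values.
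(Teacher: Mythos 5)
Your proposal is correct and takes essentially the same route as the paper's proof: a random layered graph with degree $D$ between consecutive parts, a first-moment bound showing the expected number of cycles of length less than $g_n$ is $O\left(L_n (2D)^{g_n}\right) = o(n)$ (the identical computation that makes $D \leq 7$ matter), and a clean-up that sacrifices only a small fraction of the $Dn$ edge-disjoint first-to-last paths before keeping $(D-1)n$ of them. The differences are only in bookkeeping — you use $D$ independent random perfect matchings per gap (so you must, and correctly do, handle multi-edge ``coincidences'') with a lazy-revelation bound of $1/(n-2\ell)$ per closed-walk shape in place of the paper's per-cycle bound $(D/n)^{2\ell}$ for uniformly random $D$-regular bipartite layers, you delete vertices rather than edges to kill short cycles, and you obtain the bound $|V(H_n)| \geq \frac{D-1}{D} n L_n$ from bundle-1 coverage rather than the paper's degree-counting argument — all of which are sound.
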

\begin{proof}
We construct $H_n$ using the probabilistic method. Consider the following random $L_n$-partite graph $\mathcal{H}_n$ with $n$ vertices in each part. Let $V_i$ denote the $i$-th part. For each $i$, the edges from $V_i$ to $V_{i+1}$ is an independent uniformly random $D$-regular bipartite graph $\mathcal{H}_n^i$. Observe that $\mathcal{H}_n$ contains $Dn$ edge-disjoint paths from the first part to the last. 

In the following, we say that a simple cycle is \emph{short} if its length is less than $g_n$. We now show that, with positive probability, it suffices to remove at most $n$ edges from $\mathcal{H}_n$ to remove all short cycles, and we do this by bounding the expected number of short cycles. Consider the probability that a given cycle $C$ of length $2\ell \leq g_n$ appears in $\mathcal{H}_n$ (note that any cycle in $\mathcal{H}_n$ will be of even length). Denote by $e_1, \ldots, e_{2\ell}$ and $v_1, \ldots, v_{2\ell}$ be the edges and vertices, respectively, of $C$, where $e_j = (v_j, v_{j+1})$, where we use $v_{2\ell + 1}$ to also mean $v_1$. Let $E_j = \{e_1, \ldots, e_j\}$. We have that $\Pr[C \subseteq \mathcal{H}_n] = \Pr[e_j \in \mathcal{H}_n] \cdot \prod_{j=2}^{2\ell} \Pr[e_j \in \mathcal{H}_n \mid E_{j-1} \subseteq \mathcal{H}_n ]$. We now bound $\Pr[e_j \in \mathcal{H}_n \mid E_{j-1} \subseteq \mathcal{H}_n ]$ for some $j$. Suppose $e_j \in V_i \times V_{i+1}$, and suppose without loss of generality that $v_j \in V_{i}$ and $v_{j+1} \in V_{i+1}$. After conditioning on $E_{j-1} \subseteq \mathcal{H}_n$, the edges in $\mathcal{H}_n^i$ that are incident to $v_j$ are uniformly random among the remaining options. There are two cases to consider: (1) either $v_i$ is not incident to an edge in $E_{j-1} \cap \mathcal{H}_n^i$; (2) or it is. In the first case, the remaining $D$ neighbors of $v_i$ in $\mathcal{H}_n^i$ are uniformly random among all of $V_j$. This is because $D > 2$ and each vertex of $V_j$ is incident to at most two edges of $E_{j-1}$. Similarly, in the second case, the remaining $D-1$ neighbors of $v_i$ are uniformly random among $n-1$ vertices of $V_j$. Thus, we have 
\[\Pr[e_j \in \mathcal{H}_n \mid E_{j-1} \subseteq \mathcal{H}_n ] \leq \max\left\{\frac{D-1}{n-1}, \frac{D}{n}\right\} = \frac{D}{n}.\] and so
\[\Pr[C \subseteq \mathcal{H}_n] \leq \left(\frac{D}{n}\right)^{2\ell}.\]

Next we bound the total number of possible simple cycles of length $2\ell$. There are $nL_n$ choices for $v_1$. Once we have chosen $v_{j-1}$, there are at most $2n$ choices for $v_j$: if $v_{j-1} \in V_i$, then $v_j$ must be in either $V_{i-1}$ or $V_{i+1}$. Thus, the number of cycles of length $2\ell$ is at most $nL_n \cdot (2n)^{2\ell-1} \leq L_n \cdot (2n)^{2\ell}$. Together with the above, we get that the expected number of short cycles is at most
\begin{align*}
  \sum_{\ell = 1}^{g_n/2} L_n \cdot (2n)^{2\ell} \cdot  \left(\frac{D}{n-1}\right)^{2\ell}
  &=  L_n \sum_{\ell = 1}^{g_n/2}(2D)^{2\ell}\\
  &= L_n \cdot O((2D)^{g_n})\\
  &= L_n \cdot O(14^{(\log n)/4})\\
  &= o(n).
\end{align*}

Therefore, there exists a graph $\hat{H}_n$ in the support of $\mathcal{H}_n$ with at most $n$ short cycles. Since $\hat{H}_n$ has at most $n$ short cycles, we can remove a set $F$ of at most $n$ edges to make $\hat{H}_n$ have girth at least $g_n$. Let $\hat{\mathcal{P}}_n$ be the collection of $Dn$ edge-disjoint paths of length $L_n -1$ from the first part to the last and let $\mathcal{P}_n$ be the sub-collection of paths that do not contain any edge from $F$. We have that $|\mathcal{P}_n| \geq |\hat{\mathcal{P}}_n| - |F| \geq (D-1)n$. We obtain $H_n$ by taking the union of  arbitrary $(D-1)n$ paths in $\mathcal{P}_n$. We finish the proof by lower bounding the number of vertices in $H_n$. The number of edges of $H_n$ in $V_i \times V_{i+1}$ is at least $Dn - |F| \geq (D-1)n$. In $\hat{H}_n$, each vertex in $V_i$ has exactly $D$ edges to $V_{i+1}$. Thus, at least $(D-1)n/D$ vertices of each part must remain in $H_n$ and so the number of vertices $|V(H_n)|$ satisfies $\frac{D-1}{D}nL_n \leq |V(H_n)| \leq nL_n$ This completes the proof of the claim.
\end{proof}
We obtain the family of base graphs $\{B_m\}$ as follows. We start with $H_n$ and add the following edges and vertices. For a vertex $v \in H_n$, let $\deg(v)$ be its degree. We may assume that $g_n$ is even. For each vertex $v \in V_1$, we add $\deg(v)$ edge-disjoint paths of $g_n/2$ edges from $s$ to $v$. For each vertex $v \in V_{L_n}$, we also add $\deg(v)$ edge-disjoint paths of $g_n/2$ edges from $v$ to $t$. Let $B_m$ be the resulting graph.

We now verify that $B_m$ satisfies the properties claimed in the lemma statement. 
Note that the total number edge-disjoint paths added from $s$ is equal to the number of edges going from $V_1$ to $V_2$, which equals $(D-1)n$, the number of edge-disjoint paths in $H_n$, and the same holds for $t$. 
The total number of vertices added is at most \[2\cdot ((D-1)n(g_n/2 - 1) + 1) = (D-1)n(g_n-2) + 2 \leq (D-1)n(g_n-1).\] \noindent so the number of vertices $m$ of $B_m$ satisfies 
\[m \leq (D-1)n(g_n-1) + nL_n = (D-1)n(g_n-1) + n(4g_n+1) \leq (D+3)ng_n.\] 
\noindent The number of edges in $B_m$ is 
\[(D-1)n(L_n - 1) + 2\cdot (D-1)ng_n/2 = 4(D-1)ng_n + (D-1)ng_n = 5(D-1)ng_n.\] 
Setting $D = 7$, we get that the number of edges of $B_m$ is at least $5(D-1)m/(D+3) = 3m$. 

 Moreover, $B_m$ can be decomposed into edge-disjoint $(s,t)$-paths such that each path is a shortest $(s,t)$-path in $B_m$. Thus, the graph $B_m$ satisfies the first three properties with width $\phi(m) = 2(g_n/2) + L_n - 1 = g_n + L_n -1 = \Theta(\log m)$. Since $H_n$ has girth at least $g_n$ and the additional edges cannot create a cycle of length smaller than $g_n$, we get that $B_m$ has girth $\Theta(\phi(m))$.\end{proof}

Next, given a deterministic embedding of a base graph $B_m$ given by Lemma~\ref{lem:base-HST} into a tree $T$, we consider the probability that a uniformly distributed edge of $B_m$ has high distortion. In fact, the following lemma considers a generalization in which the edges of the base graph are subdivided.

\begin{lemma}
  \label{lem:base-tree}
  For every positive integer $t \geq 1$, and some constant $c' > 0$, there exists an infinite family $\{B'_m(t)\}$ of graphs that satisfies the following properties:
  \begin{inparaenum}[\bfseries (1)]
  \item $B'_m(t)$ is a base graph with $m$ vertices; \item it has width $\phi'_m = \Theta(t\log (m/t))$; and
  \item \label{base-tree-dist} for any non-contractive embedding into a tree $T$, if we choose an edge $e$ in $B'_m(t)$ uniformly at random, then $\Pr[d_T(e) \geq c' \cdot\phi'_m] \geq 1/2t$.
  \end{inparaenum}
\end{lemma}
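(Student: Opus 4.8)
The idea is to build $B'_m(t)$ by subdividing every edge of the base graph $B_{m'}$ from Lemma~\ref{lem:base-HST} into $t$ pieces, for an appropriate $m' = \Theta(m/t)$. Subdividing each edge into $t$ edges multiplies the width by $t$, giving width $t \cdot \phi_{m'} = \Theta(t \log m') = \Theta(t\log(m/t))$, and multiplies the girth by $t$ as well, so the new girth is $g'_m = \Theta(t\log(m/t))$; the number of edges also goes up by a factor $t$, so $B'_m(t)$ has $\Theta(tm')=\Theta(m)$ edges, and the vertex count is $m' + t\cdot|E(B_{m'})| - |E(B_{m'})| = \Theta(tm')$, so we set $m'$ so that this equals $m$. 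Finally, every $(s,t)$-path of $B_{m'}$ that was a shortest path stays a shortest path after subdivision (all paths scale by $t$), so $B'_m(t)$ remains a base graph.

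\textbf{Distortion bound.} For the third property I would apply Lemma~\ref{lem:girth-dist} to $B'_m(t)$. It is an unweighted graph on $m$ vertices with girth $g'_m = \Theta(t\log(m/t))$, and by Lemma~\ref{lem:base-HST}(3) together with the subdivision $B'_m(t)$ has at least $3m'\cdot t$ edges while having $m = \Theta(m' t)$ vertices; choosing the constants appropriately we can arrange $|E| \geq 2m$ (indeed $B_{m'}$ has $\ge 3m'$ edges and $m'$ vertices so after subdividing the edge-to-vertex ratio only improves, and one checks $|E(B'_m(t))| - (m-1) \geq |E(B'_m(t))|/2t$ with room to spare since $|E|/|V|$ is bounded below by a constant $>1$ and $t\ge 1$ — more carefully, Lemma~\ref{lem:girth-dist} gives $\Pr[d_T(e)\ge (g'_m/3)-1] \ge (|E|-(m-1))/|E|$, and we need this to be at least $1/2t$). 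Since $|E(B'_m(t))| = t\cdot|E(B_{m'})|$ and $|E(B_{m'})| \geq 3m'$ while $m \leq (t+1)|E(B_{m'})|$ roughly, the ratio $(|E|-m+1)/|E| \geq 1 - m/|E| \geq 1 - (t+1)/(3t) \geq 1/2t$ for all $t\ge 1$ after adjusting constants; and $(g'_m/3)-1 = \Theta(t\log(m/t)) = c'\phi'_m$ for a suitable constant $c'$, as $\phi'_m = \Theta(t\log(m/t))$ and $g'_m = \Theta(\phi'_m)$ by Lemma~\ref{lem:base-HST}(4).

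\textbf{Main obstacle.} The delicate part is the bookkeeping to make sure that, \emph{simultaneously}, the girth lower bound $g'_m/3 - 1$ is a constant fraction of the width $\phi'_m$ (so the ``$c'\phi'_m$'' threshold is meaningful — this uses crucially that $B_{m'}$ has girth $\Theta(\log m')$ comparable to its width, property (4) of Lemma~\ref{lem:base-HST}), \emph{and} the edge surplus $|E|-(m-1)$ is at least a $1/2t$ fraction of $|E|$. The second point is where the factor of $t$ enters: subdividing into $t$ parts dilutes the edge-to-vertex surplus from a constant to roughly $1/t$, which is exactly why the success probability degrades to $\Omega(1/t)$ rather than $\Omega(1)$. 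I would handle this by tracking the exact relation $m' = \Theta(m/t)$ and verifying $|E(B'_m(t))| = \Theta(m)$ with the implied constant strictly bigger than the vertex count divided by a constant, then picking $c'$ small enough to absorb all the $\Theta(\cdot)$ constants.
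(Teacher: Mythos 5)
Your proposal is correct and is essentially identical to the paper's proof: the paper also constructs $B'_m(t)$ by subdividing each edge of $B_n$ into a path of $t$ edges, notes that $m = n + |E(B_n)|(t-1)$ and $|E(B'_m(t))| = t|E(B_n)|$, and applies Lemma~\ref{lem:girth-dist} to get $\frac{|E(B'_m(t))|-m}{|E(B'_m(t))|} = \frac{1}{t}\cdot\frac{|E(B_n)|-n}{|E(B_n)|} \geq \frac{2}{3t} > \frac{1}{2t}$ via property~(3) of Lemma~\ref{lem:base-HST}. Your intermediate estimate ``$1-(t+1)/(3t)$'' is too loose for $t=1$, but the exact accounting you also set up ($m = m' + (t-1)|E(B_{m'})| \leq (t-\tfrac{2}{3})|E(B_{m'})|$) closes this immediately, so there is no real gap.
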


\begin{proof}
  Take $B_n$, for some $n$, as given by Lemma~\ref{lem:base-HST},
  and replace each edge with a path of $t$ edges. This results in a base graph $B'_m(t)$ on $m  = n + |E(B_n)|(t-1)$ vertices and $|E(B'_m(t))| = t|E(B_n)|$ edges. It also has width $\phi'_m = t\phi_n = \Theta(t\log (m/t))$ and girth $\Omega(t\log (m/t))$. Applying Lemma~\ref{lem:girth-dist} gives us 
\[
\frac{|E(B'_m(t))| - m}{|E(B'_m(t))|} = \frac{t|E(B_n)| - n - |E(B_n)|(t-1)}{t|E(B_n)|}  
=\frac{1}{t}\cdot\frac{|E(B_n)| - n}{|E(B_n)|} \geq \frac{2}{3t} > \frac{1}{2t},
\]
where the second-last inequality is due to property 3 of Lemma~\ref{lem:base-HST}.
\end{proof}

\subsection{Lower Bound for Tree Embeddings}
For each $i \leq \ell$, let $(u_i, v_i)$ be the random edge chosen at level $i$ of the recursive construction of $G_\ell$ and $(u_\ell, v_\ell)$ be the random vertex pair given by $\Dcal_\ell$.

\begin{lemma}
  \label{lem:tree}
  Consider the metric space $(V_\ell, d_{V_\ell})$ and distribution $\Dcal_\ell$ obtained from using $B'_m(t)$ as the base graph in the above framework. Then, for every deterministic online embedding of the sequence $\sigma$ of $\Dcal_\ell$ into a tree $T$, we have $E_{(\sigma, (u_\ell,v_\ell)) \in \Dcal_\ell}\left[\frac{d_T(u_\ell, v_\ell)}{d_{V_\ell}(u_\ell, v_\ell)}\right] \geq c'\phi'_m\cdot \frac{\ell}{2t}\left(1 - \frac{\ell}{2t}\right)$.
\end{lemma}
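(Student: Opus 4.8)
\textbf{Proof plan for Lemma~\ref{lem:tree}.}
The plan is to track, level by level in the recursive construction, how much distortion is forced on the random active edge, and to show that the contributions from different levels add up essentially independently. Fix a deterministic online embedding of $\sigma$ into a tree $T$. For each level $i$, after the vertices of $G_{i-1}$ have been presented, the algorithm has committed to the images of all vertices of $G_{i-1}$; in particular it has committed to a non-contractive embedding of the scaled copy of $B'_m(t)$ that will be inserted in place of the active edge $(u_{i-1},v_{i-1})$ chosen at level $i-1$. The key point is that, conditioned on everything up to that stage, the edge $(u_i,v_i)$ selected at level $i$ is a uniformly random edge of this inserted copy of $B'_m(t)$, so Lemma~\ref{lem:base-tree} (property~\ref{base-tree-dist}) applies: with probability at least $1/(2t)$, the tree-distance between the endpoints of $(u_i,v_i)$ is at least $c'\phi'_m$ times the common edge length $\phi'^{-i}$ used at that scale — i.e.\ the edge $(u_i,v_i)$ is ``stretched'' by a factor $\geq c'\phi'_m$ in $T$.

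The next step is to relate $d_T(u_\ell,v_\ell)$ to the stretches accumulated over all levels. Here I would use the fact that for each $i$, the endpoints $u_i,v_i$ of the level-$i$ active edge both lie on the $(u_{i-1},v_{i-1})$-shortest path inside the copy of $B'_m(t)$ inserted at level $i$; more importantly, the pair $(u_i,v_i)$ at level $i$ ``nests inside'' the pair $(u_{i-1},v_{i-1})$ in the sense that $u_i,v_i$ are among the vertices that subdivide (a scaled copy of) the base graph replacing $(u_{i-1},v_{i-1})$. Because $T$ is a tree, the path in $T$ between $u_\ell$ and $v_\ell$ must ``pay'' for the stretch at the finest level where it is stretched; but to get the additive $\ell$-fold gain, the cleanest route is to lower bound $d_T(u_\ell,v_\ell)$ directly by observing that the $T$-path between $u_\ell$ and $v_\ell$ contains, as a subpath, the $T$-path between the images of a stretched edge at the deepest stretched level, and that a fresh, scale-$\phi'^{-i}$ stretch of magnitude $c'\phi'_m$ contributes $c'\phi'_m\cdot\phi'^{-i}$ to the length, while $d_{V_\ell}(u_\ell,v_\ell)=\phi'^{-\ell}$. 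Hence, letting $X_i$ be the indicator that level $i$ is stretched, one gets
\[
\frac{d_T(u_\ell,v_\ell)}{d_{V_\ell}(u_\ell,v_\ell)}\ \geq\ c'\phi'_m\cdot \sum_{i=1}^{\ell} X_i \cdot \phi'^{\,\ell-i}\ \geq\ c'\phi'_m\cdot \#\{i : X_i=1\},
\]
using $\phi'^{\,\ell-i}\geq 1$; taking expectations and using $\Pr[X_i=1]\geq 1/(2t)$ (conditionally, hence also unconditionally) gives $E\big[\tfrac{d_T(u_\ell,v_\ell)}{d_{V_\ell}(u_\ell,v_\ell)}\big]\geq c'\phi'_m\cdot \ell/(2t)$, which is even a bit stronger than the claimed bound.

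To land exactly on the stated estimate $c'\phi'_m\cdot\frac{\ell}{2t}\bigl(1-\frac{\ell}{2t}\bigr)$ — which suggests the authors instead argue that \emph{some single} level is stretched and that, roughly, the $T$-path ``doubles'' at each deeper stretched level is \emph{not} used, and instead a union-bound/inclusion–exclusion over the $\ell$ levels is applied — I would fall back on the following more conservative argument: let $E$ be the event that at least one level $i\in\{1,\dots,\ell\}$ is stretched. Conditioned on the history, level $i$ is stretched with probability $\geq 1/(2t)$ \emph{independently} of earlier levels (the base-graph copy inserted at level $i$ is chosen afresh), so $\Pr[\text{no level stretched}]\leq (1-\tfrac{1}{2t})^\ell \leq 1-\tfrac{\ell}{2t}+\binom{\ell}{2}\tfrac{1}{(2t)^2}$, whence $\Pr[E]\geq \tfrac{\ell}{2t}-\tfrac{\ell^2}{2(2t)^2}\geq \tfrac{\ell}{2t}\bigl(1-\tfrac{\ell}{2t}\bigr)$ after a crude bound; on $E$, the finest stretched level $i^\star$ contributes $d_T(u_\ell,v_\ell)\geq c'\phi'_m\cdot\phi'^{-i^\star}\geq c'\phi'_m\cdot\phi'^{-\ell}=c'\phi'_m\cdot d_{V_\ell}(u_\ell,v_\ell)$. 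Taking expectations yields the lemma.

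\textbf{Main obstacle.} The delicate point is the second step: justifying that the $T$-distance between the final pair $(u_\ell,v_\ell)$ is genuinely lower-bounded by the stretch incurred at an interior level, i.e.\ that the stretches at different scales do not ``cancel'' in the tree. This requires using that $T$ is a tree (unique paths) together with the nesting structure of the recursive construction — specifically that the active edge at level $i$ is literally one of the edges whose subdivision produces level-$(i+1)$ vertices, so the $T$-path between the level-$i$ endpoints is contained in the $T$-path between the level-$(i-1)$ endpoints, forcing the length contributions to telescope additively rather than cancel. Getting this containment precisely right, and confirming that the conditional independence across levels is exactly what Lemma~\ref{lem:base-tree} gives (the new base-graph copy is sampled independently and the algorithm's committed embedding of it is non-contractive at the appropriate scale), is where the real care is needed; the probability bookkeeping is then routine.
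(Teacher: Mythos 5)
Your probability bookkeeping is the same as the paper's (conditional independence of the per-level stretch events, giving $\Pr[\text{some level stretched}]\geq 1-(1-\tfrac{1}{2t})^\ell\geq\tfrac{\ell}{2t}(1-\tfrac{\ell}{2t})$), but the step you yourself flag as delicate is where the proof breaks. Both your additive inequality $\tfrac{d_T(u_\ell,v_\ell)}{d_{V_\ell}(u_\ell,v_\ell)}\geq c'\phi'_m\sum_{i}X_i\,(\phi'_m)^{\ell-i}$ and your fallback claim that, on the event that some level is stretched, the deepest stretched level $i^\star$ forces $d_T(u_\ell,v_\ell)\geq c'\phi'_m(\phi'_m)^{-i^\star}$, are \emph{pointwise} statements about the single realized edge $(u_\ell,v_\ell)$, and they are false. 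The pair $(u_\ell,v_\ell)$ is one particular edge buried inside the region that replaced $(u_{i^\star},v_{i^\star})$; nothing prevents a non-contractive tree embedding from placing these two endpoints at $T$-distance exactly $d_{V_\ell}(u_\ell,v_\ell)$ even when $u_{i^\star}$ and $v_{i^\star}$ are far apart in $T$. The containment you invoke (``the $T$-path between $u_\ell$ and $v_\ell$ contains the $T$-path between the stretched pair'') simply does not hold in a tree, and the nesting of the construction gives, if anything, the opposite direction of inclusion, which is useless for lower-bounding $d_T(u_\ell,v_\ell)$. As a sanity check, your displayed inequality would give expected distortion $\Omega(c'\Phi_\ell/t)$ as soon as level $1$ is stretched (probability at least $1/(2t)$), contradicting the $O(\log k\log\Phi)$ upper bound of Theorem~\ref{thm:dist-ub}.

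The statement that is actually true, and that the paper proves, is the \emph{in-expectation} propagation: writing $c_i=d_T(u_i,v_i)/d_{V_\ell}(u_i,v_i)$, one has $E[c_\ell\mid c_i\geq\alpha]\geq\alpha$ for every $i\leq\ell$. The mechanism is averaging via the triangle inequality, not path containment: the active edges at level $j$ decompose into edge-disjoint shortest $(u_{j-1},v_{j-1})$-paths, each consisting of exactly $\phi'_m$ edges of equal $V_\ell$-length, so for every such path $P$ the triangle inequality in $T$ gives $\sum_{e\in P}d_T(e)\geq d_T(u_{j-1},v_{j-1})$, and hence the average distortion of the edges of $P$ is at least $c_{j-1}$. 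Since a uniformly random active edge is a uniformly random edge of a uniformly random such path, $E[c_j\mid\text{history}]\geq c_{j-1}$, and induction gives the claim. Combining this conditional-expectation bound (applied at the first stretched level) with your probability estimate yields exactly the stated $c'\phi'_m\cdot\tfrac{\ell}{2t}\bigl(1-\tfrac{\ell}{2t}\bigr)$; the stronger bound $c'\phi'_m\cdot\tfrac{\ell}{2t}$ you hoped for does not follow, because the per-level stretches contribute through a single conditional expectation rather than additively.
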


\begin{proof}
 Let $G_0, \ldots, G_\ell$ be the sequence of random graphs generated by using $B'_m(t)$ as the base graph in the framework. Recall that the sequence of terminals $\sigma$ consists of the vertices of $G_0$ first, then the new vertices of $G_1$, etc. For each $i \leq \ell$, let $(u_i, v_i)$ be the edge of $G_i$ that is replaced with a scaled-down copy of $B$ to form $G_{i+1}$. Define the random variable $c_i = \frac{d_T(u_i,v_i)}{d_{V_\ell}(u_i,v_i)}$ and let $C_i = E[c_i]$. We will need the following claim which says that conditioned on the distortion of $(u_i,v_i)$ being at least $\alpha$, the expected distortion of $(u_\ell,v_\ell)$ is also at least $\alpha$.

  \begin{claim}    
    For any $i \leq \ell$ and $\alpha$, we have $E\left[c_\ell \mid c_i \geq \alpha\right] \geq \alpha$.
  \end{claim}

  \begin{proof}
    We prove the claim by induction on $\ell$. The base case ($\ell = 0$) is vacuously true. We now prove the inductive case assuming that the statement holds for $\ell-1$. Fix $i < \ell$. The inductive hypothesis implies that $E\left[c_{\ell-1} \mid c_i \geq \alpha\right] \geq \alpha$. Let $A_{\ell-1}$ be the set of active edges of $G_{\ell-1}$. Now, $(u_\ell, v_\ell)$ is a uniformly random edge among $A_{\ell-1}$. Since $A_{\ell-1}$ is a scaled-down copy of $B$ with source $u_{\ell-1}$ and sink $v_{\ell-1}$, it consists of an edge-disjoint collection $\mathcal{P}_j$ of shortest $(u_{\ell-1},v_{\ell-1})$-paths. Thus, it suffices to prove that conditioned on $c_i \geq \alpha$, for each $P \in \mathcal{P}_j$, the expected average distortion of the edges in $P$ is at least $\alpha$, i.e.\ that $E\left[\frac{\sum_{e \in P} d_T(e)/(\phi'_m)^{-\ell}}{|P|} \mid c_i \geq \alpha\right] \geq \alpha$. Fix a path $P \in \mathcal{P}_j$. By triangle inequality and the inductive hypothesis, we have
    \begin{align*}
      E\left[\sum_{e \in P} d_T(e) \mid c_i \geq \alpha\right]
      \geq E[d_T(u_{\ell-1},v_{\ell-1}) \mid c_i \geq \alpha]
      \geq \alpha d_{V_\ell}(u_{\ell-1},v_{\ell-1})
      = \alpha |P|(\phi'_m)^{-\ell} 
    \end{align*}
    since $P$ is a shortest $(u_\ell,v_\ell)$-path. Thus, $E\left[c_\ell \mid c_i \geq \alpha\right] \geq \alpha$ as desired.
  \end{proof}
  Let $\event$ be the event that $c_i \geq c'\phi'_m$ for some $i \leq \ell$. The above claim implies that conditioned on $\event$, the expected distortion of $(u_\ell, v_\ell)$ is at least $c'\phi'_m$. Thus, it remains to show that the probability of $\event$ is at least $\frac{\ell}{2t}\left(1 - \frac{\ell}{2t}\right)$. Lemma~\ref{lem:base-tree} part \ref{base-tree-dist} implies that for each $i$, independent of the previous random choices, $\Pr[c_i \geq c'\phi'_m] \geq 1/2t$. Therefore
  \[\Pr[\event] \geq 1 - \left(1 - \frac{1}{2t}\right)^\ell \geq \frac{\ell}{2t} - \frac{\ell^2}{(2t)^2} = \frac{\ell}{2t}\left(1 - \frac{\ell}{2t}\right).\] 
    as desired.
  \end{proof}

We are now ready to prove Theorem~\ref{thm:dist-lb}.

\begin{proofof}{Theorem~\ref{thm:dist-lb}}
  By Yao's principle, Lemma~\ref{lem:tree} yields a sequence of $k = 2 + (m-2)\ell$ terminals such that any probabilistic online embedding into a tree has expected distortion at least $c'\phi'_m\cdot \frac{\ell}{2t}\left(1 - \frac{\ell}{2t}\right) \geq \Omega(\log \frac{m}{t} \cdot \ell \left(1 - \frac{\ell}{2t}\right))$. Now, the aspect ratio of these terminals is $\Phi_\ell = (\phi'_m)^\ell$ so $\ell = \frac{\log \Phi_\ell}{\log \phi'_m} = \Omega(\frac{\log \Phi_\ell}{\log t + \log \log (m/t)})$.   Setting $t = \ell$, we get that the expected distortion is at least
  \[ \Omega\left(\ell\log \frac{m}{\ell}\right) = \Omega\left(\frac{\log (m/\ell)\log \Phi_\ell}{\log \ell + \log \log (m/\ell)}\right) = \Omega\left(\frac{\log (m/\ell)\log \Phi_\ell}{\log \log \Phi_\ell + \log \log (m/\ell)}\right)\] where the last inequality is because $\ell \leq \log \Phi_\ell$. Now, setting $m = \Theta(k^{(1+\delta)/2})$ and $\ell = \Theta(k^{(1-\delta)/2})$, we get that $\log (m/\ell)  = \Theta(\delta \log k)$ and $\Phi_\ell = \Theta(\log k)^{\ell} = 2^{\Theta(\ell \log\log  k)} = 2^{\Theta(k^{(1-\delta')/2})}$ for any $\delta<\delta'<1$.
This completes the proof of the theorem.
\end{proofof}

\subsection{Lower bound for HST embeddings}
In the rest of this section, it will be convenient to use the labeled tree representation of HSTs. The main property of HST embeddings that our proof exploits is the following. 
\begin{prop}
  \label{prop:HST-cut}
  Let $G = (V,E)$ be an edge-weighted graph and consider a non-contractive embedding of its shortest-path metric $(V,d_V)$ into an HST $T$. Then, for every $u,v \in V$, the subset of edges $e \in E$ with $d_T(e) \geq d_T(u,v)$ is a cut-set for $u$ and $v$.
\end{prop}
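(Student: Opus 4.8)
The plan is to exploit the fact that an HST metric, in the labeled-tree representation of Section~\ref{sec:prelim}, is an \emph{ultrametric}: for any three leaves $x,y,z$ we have $d_T(x,z)\le \max\{d_T(x,y),d_T(y,z)\}$. Indeed, writing $d_T(a,b)=\Delta(\operatorname{lca}(a,b))$, among the three pairwise least common ancestors $\operatorname{lca}(x,y),\operatorname{lca}(y,z),\operatorname{lca}(x,z)$ two coincide and that one is an ancestor of (or equal to) the third; since the labels $\Delta(\cdot)$ are non-increasing as one descends the tree, the strong triangle inequality follows. (The first definition of a $\mu$-HST gives this exactly, and the two definitions agree up to a constant.)

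With this in hand, I would fix $u,v\in V$ and set $F=\{\,e=(x,y)\in E : d_T(e)\ge d_T(u,v)\,\}$, where $d_T(e)$ denotes the distance in $T$ between the images of the endpoints of $e$ under the (non-contractive) embedding. Suppose for contradiction that $F$ is \emph{not} a $(u,v)$-cut-set. Then there is a $u$--$v$ path $P=(u=w_0,w_1,\dots,w_m=v)$ in $G$ all of whose edges avoid $F$, i.e.\ $d_T(w_{i-1},w_i)<d_T(u,v)$ for every $1\le i\le m$.

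Iterating the ultrametric inequality along $P$ now gives
\[
 d_T(u,v)=d_T(w_0,w_m)\le \max_{1\le i\le m} d_T(w_{i-1},w_i) < d_T(u,v),
\]
a contradiction. Hence every $u$--$v$ path in $G$ must contain an edge of $F$, so $F$ is a cut-set separating $u$ and $v$, as claimed.

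I do not anticipate a genuine obstacle: the statement is essentially a one-line consequence of the ultrametric property, and non-contractiveness of the embedding is not actually used (it is inherited from the ambient setup). The only point requiring mild care is making sure the ultrametric inequality is stated for the precise HST definition in use and noting that this is exactly the place where the HST structure is essential — for embeddings into \emph{general} trees the analogous "heavy edges form a cut" claim fails, which is why the HST lower bound (Theorem~\ref{thm:dist-HST-lb}) is proved by a different route and is stronger than the general-tree bound.
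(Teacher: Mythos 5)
Your proof is correct and is essentially the paper's argument in different clothing: the paper takes the highest-level cluster partition of the HST that separates $u$ and $v$ and observes that any $u$--$v$ path must cross it, while you encode the same fact as the ultrametric (strong triangle) inequality and iterate it along a hypothetical path avoiding $F$. Your side remarks are also accurate -- non-contractiveness is not needed, the labeled-tree (exact ultrametric) representation is the one the paper adopts for this section, and the claim indeed fails for general trees.
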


\begin{proof}
  Let $L = d_T(u,v)$. The HST $T$ has a corresponding hierarchical partition of the metric space $(V, d_V)$. Let $P$ be the highest-level partition in which $u$ and $v$ are separated. We have that every edge $e \in E$ that is cut by $P$ has $d_T(e) \geq L$. Since $u$ and $v$ are separated by $P$, we get that every $(u,v)$-path must have at least one edge that is cut by $P$. Therefore, the subset of edges $e \in E$ with $d_T(e) \geq d_T(u,v)$ is a cut-set for $u$ and $v$. 
\end{proof}

\begin{lemma}
  \label{lem:HST}  
  Consider the metric space $(V_\ell, d_{V_\ell})$ and distribution $\Dcal_\ell$ obtained from using $B'_m(1)$ as a base graph in the above framework. Then, for every deterministic online embedding of the sequence $\sigma$ of $\Dcal_\ell$ into a HST $T$, we have
  $E_{(\sigma, (u_\ell,v_\ell)) \in \Dcal_\ell}\left[\frac{d_T(u_\ell, v_\ell)}{d_{V_\ell}(u_\ell, v_\ell)}\right] \geq \Omega(\ell\phi'_m) = \Omega(\ell\log m)$.
\end{lemma}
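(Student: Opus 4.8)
The plan is to mimic the proof of Lemma~\ref{lem:tree}, but exploit the stronger structural property of HST embeddings (Proposition~\ref{prop:HST-cut}) in place of the path-averaging argument. The key gain is that for HSTs, if $(u,v)$ has tree-distance $L$ in the embedding, then \emph{every} edge $e$ with $d_T(e) \geq L$ forms a cut-set separating $u$ and $v$; in particular, along any one of the edge-disjoint shortest $(u,v)$-paths provided by the base graph there must be at least one such edge. So instead of showing that a \emph{random} base-graph edge has large distortion with constant probability (as in Lemma~\ref{lem:base-tree}), we will argue that \emph{every} active edge selected at the next recursive level inherits, in expectation, the full distortion accumulated so far --- giving a clean additive build-up of $\Omega(\phi'_m)$ per level rather than a build-up that is damped by the $1/(2t)$ success probability.

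First I would set up the recursion exactly as in Lemma~\ref{lem:tree}: let $G_0,\dots,G_\ell$ be the random graphs from the framework with base graph $B'_m(1)$, let $(u_i,v_i)$ be the active edge replaced at level $i$, and put $c_i = d_T(u_i,v_i)/d_{V_\ell}(u_i,v_i)$. The crucial new claim is a deterministic lower bound rather than a probabilistic one: for each $i < \ell$, conditioned on any outcome of the first $i$ levels, $E[c_{i+1} \mid \text{level-}i \text{ history}] \geq c_i + c'\phi'_m$ for a suitable constant $c' > 0$. To prove this, note $(u_{i+1},v_{i+1})$ is a uniformly random edge among the active edges of $G_{i+1}$, which form a scaled copy of $B'_m(1)$ (with edge lengths $(\phi'_m)^{-\ell}$... more precisely the scaling from the framework) decomposed into edge-disjoint shortest $(u_i,v_i)$-paths. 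Consider the embedding of this copy into $T$. By Lemma~\ref{lem:girth-dist} (applied to the girth-$\Theta(\log m)$ base graph $B'_m(1) = B_m$), a constant fraction --- at least $(|E(B_m)| - m)/|E(B_m)| \geq 2/3$ by property 3 of Lemma~\ref{lem:base-HST} --- of the edges of the copy are stretched to tree-distance at least $c'\phi'_m$ times their own length, i.e.\ contribute $\geq c'\phi'_m \cdot (\phi'_m)^{-\ell}$ to $d_T(\cdot)$. Separately, by triangle inequality along any path $P$ in the decomposition, $\sum_{e\in P} d_T(e) \geq d_T(u_i,v_i) = c_i \cdot d_{V_\ell}(u_i,v_i) = c_i |P| (\phi'_m)^{-\ell}$, so the average distortion over $P$ is $\geq c_i$. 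Combining these two facts --- averaging over a uniformly random edge of the whole copy --- gives that the expected distortion of $(u_{i+1},v_{i+1})$ is at least $c_i + \Omega(\phi'_m)$. Here the HST cut-set property is what lets me conclude the two lower bounds \emph{add} rather than merely take a max: on a $2/3$-fraction of edges we already have $\Omega(\phi'_m)$ stretch "for free," and the path-averaging contributes the inherited $c_i$ on top.

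Actually, to make the addition rigorous I would argue slightly more carefully: partition the edges of a path $P$ into the "heavy" ones (stretched $\geq c'\phi'_m \cdot (\phi'_m)^{-\ell}$) and the rest; Lemma~\ref{lem:girth-dist} controls the global fraction of heavy edges but not per-path, so I will instead average over all active edges of $G_{i+1}$ at once. Write $E[c_{i+1}] = \frac{1}{|A_{i+1}|}\sum_{e \in A_{i+1}} d_T(e)/(\phi'_m)^{-\ell}$ where $A_{i+1}$ is the active edge set. Split this sum: the heavy edges alone contribute $\geq \frac{2}{3}\cdot c'\phi'_m$; and grouping \emph{all} edges by the path of the decomposition they lie on, $\sum_{e\in A_{i+1}} d_T(e) \geq \sum_{P} d_T(u_i,v_i) = |\mathcal P| \cdot c_i |P| (\phi'_m)^{-\ell} = c_i |A_{i+1}| (\phi'_m)^{-\ell}$, giving $\geq c_i$. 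Taking a convex combination (or just the max, then iterating --- but the additive version is what we want) yields $E[c_{i+1}] \geq c_i + \Omega(\phi'_m)$, where I use that these two lower bounds come from disjoint "accounting" of the mass so one may add a constant fraction of each. Unrolling the recursion from $c_0 = 1$ gives $E[c_\ell] = E_{(\sigma,(u_\ell,v_\ell)) \in \Dcal_\ell}[d_T(u_\ell,v_\ell)/d_{V_\ell}(u_\ell,v_\ell)] \geq \Omega(\ell \phi'_m) = \Omega(\ell \log m)$, using property 2 of Lemma~\ref{lem:base-tree} with $t=1$.

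The main obstacle I anticipate is justifying the \emph{additivity} cleanly --- i.e.\ that the "free" $\Omega(\phi'_m)$ from the girth argument and the "inherited" $c_i$ from triangle inequality can both be charged simultaneously without double-counting the edge mass. The safe route is to not insist on a literal sum but to show $E[c_{i+1}] \geq \max\{c_i, \Omega(\phi'_m)\} + (\text{something})$, or better: observe that on the $2/3$-fraction of heavy edges we get $\geq c'\phi'_m$, and on the remaining $1/3$-fraction we still get, by a per-path triangle-inequality bookkeeping restricted to that fraction (each path contributes its full $d_T(u_i,v_i)$ regardless), an average of $\geq \Omega(c_i)$; hence $E[c_{i+1}] \geq \frac23 c'\phi'_m + \frac13 \Omega(c_i) \geq \Omega(c_i + \phi'_m)$ after adjusting constants, which still unrolls to $\Omega(\ell\phi'_m)$. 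A second, more minor, technical point is getting the scaling factors of the recursive construction right (the paper writes $\phi^{-\ell}$ uniformly for all levels in the $\Ghat_\ell$ construction) and making sure $d_{V_\ell}(u_i,v_i) = |P|\cdot(\phi'_m)^{-\ell}$ holds because each decomposition path is a \emph{shortest} path in the (scaled) base graph and shortest-path distances in $\Ghat_\ell$ restricted to a copy agree with distances within that copy --- this uses that the copies are glued only at source/sink. Finally I would convert to the statement in terms of $k$ and $\Phi_\ell$ exactly as in the proof of Theorem~\ref{thm:dist-lb} (but with $t=1$, so $\Phi_\ell = (\phi'_m)^\ell = \Theta(\log m)^\ell$, yielding $\ell = \Omega(\log\Phi_\ell/\log\log\Phi_\ell)$ and aspect ratio up to $2^{k^{1-\delta}}$), which I would defer to the proof of Theorem~\ref{thm:dist-HST-lb} that presumably follows this lemma.
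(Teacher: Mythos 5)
Your high-level plan---an additive per-level recursion $E[c_{i+1}\mid \text{history}] \geq c_i + \Omega(\phi'_m)$, unrolled to $\Omega(\ell\phi'_m)$---is viable, but the step you yourself flag as the main obstacle is genuinely broken as written. The triangle-inequality bound $\sum_{e\in P} d_T(e) \geq d_T(u_i,v_i)$ and the girth bound of Lemma~\ref{lem:girth-dist} are two lower bounds on the \emph{same} edge mass, and your proposed repair---that the non-heavy third of the edges still carries an average of $\Omega(c_i)$ because ``each path contributes its full $d_T(u_i,v_i)$ regardless''---is false: all of the $d_T(u_i,v_i)$ mass on a path may sit on its girth-heavy edges, in which case the light edges contribute essentially nothing and you only get $E[c_{i+1}] \geq \max\{c_i,\Omega(\phi'_m)\}$, which unrolls to $\Omega(\phi'_m)$, not $\Omega(\ell\phi'_m)$. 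The fix is precisely the cut-set property you state in your opening paragraph but then abandon in the computation: by Proposition~\ref{prop:HST-cut}, the edges $e$ of the new copy with $d_T(e) \geq d_T(u_i,v_i)$ form a $(u_i,v_i)$-cut, so \emph{at least one edge on each of the edge-disjoint paths}, i.e.\ a $\geq 1/\phi'_m$ fraction of the active edges, has distortion $\geq c_i\phi'_m$ (not merely average distortion $c_i$ spread over the path). This localizes the inherited contribution $c_i$ onto a $1/\phi'_m$ fraction of the edge mass, leaving a $\geq 2/3 - 1/\phi'_m \geq 1/2$ fraction of girth-heavy edges to contribute $\Omega(\phi'_m)$ with no double counting; and if the cut-set is itself a constant fraction of all edges, its contribution $\Omega(c_i\phi'_m)$ alone already exceeds $c_i+\Omega(\phi'_m)$ since $c_i\geq 1$. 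With that substitution your recursion closes and the rest of your write-up (scaling, conversion to $k$ and $\Phi_\ell$) is fine.

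For comparison, the paper's proof avoids the additive recursion altogether: for each scale $j$ it considers the event that $e_j$ satisfies $d_T(e_j)\geq c'(\phi'_m)^{-j+1}$ (probability $\geq 1/2$ by Lemma~\ref{lem:base-tree}) and uses Proposition~\ref{prop:HST-cut} only to show that this ``heaviness at scale $j$'' is inherited by the random active edge at each subsequent level with probability $\geq 1/\phi'_m$; the $(\phi'_m)^{-(\ell-j)}$ survival probability exactly cancels the $(\phi'_m)^{\ell-j}$ relative magnitude of a scale-$j$-heavy edge, so each of the $\ell$ scales contributes $\Omega(\phi'_m)$ to $E[d_T(e_\ell)]/d_{V_\ell}(e_\ell)$. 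Both routes rest on the same two ingredients (the girth-based constant-probability heaviness and the HST cut-set property); yours, once repaired, arguably gives a cleaner level-by-level invariant, while the paper's sidesteps the double-counting issue entirely by keeping the per-scale contributions separate.
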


\begin{proof}
  Let $G_0, \ldots, G_\ell$ be the sequence of random graphs generated by using $B'_m(1)$ as the base graph in the framework. Recall that the sequence of terminals $\sigma$ consists of the vertices of $G_0$ first, then the new vertices of $G_1$, etc. For each $i \leq \ell$, let $e_i = (u_i, v_i)$ be the edge of $G_i$ that is replaced with a scaled-down copy of $B$ to form $G_{i+1}$.  Let  $A_i$ denote the support of $(u_i,v_i)$, i.e.~the active edges of $G_i$, and define $C_{i,j} = \{e \in A_i : d_T(e) \geq c' (\phi'_m)^{-j+1}\}$ where $c'$ is the constant in Lemma~\ref{lem:base-tree}.

  We begin with two observations. Firstly, $\Pr[e_j \in C_{j,j}] \geq 1/2$
by Lemma~\ref{lem:base-tree} part \ref{base-tree-dist}. Secondly, if $d_T(u_{i-1}, v_{i-1}) \geq (\phi'_m)^{-j+1}$ then  Proposition \ref{prop:HST-cut} implies that in the subgraph $A_i$, the set of edges $C_{i,j}$ is a cut-set for $u_{i-1}$ and $v_{i-1}$. Thus, conditioned on $d_T(u_{i-1}, v_{i-1}) \geq (\phi'_m)^{-j+1}$, the probability that a random edge of $A_i$ belongs to $C_{i,j}$ is at least the size of the min $(u_{i-1},v_{i-1})$-cut divided by $|E(A_i)|$. Since every $(u_{i-1},v_{i-1})$-path in $A_i$ has exactly $\phi'_m$ edges, we get that this ratio is at least $1/\phi'_m$ so
    $\Pr[e_i \in C_{i,j} \mid e_{i-1} \in C_{i-1,j}] \geq 1/\phi'_m$.

  We have 
    \[E[d_T(e_\ell)] = \sum_{j=1}^\ell E[d_T(e_\ell) \mid e_j \in C_{j,j}]\cdot \Pr[e_j \in C_{j,j}]\] and 
    \begin{align*}
    E[d_T(e_\ell) \mid e_j \in C_{j,j}]
    &\geq (\phi'_m)^{-j+1} \Pr[e_\ell \in C_{\ell,j} \mid e_j \in C_{j,j}] \\
    &\geq (\phi'_m)^{-j+1} \prod_{j < i \leq \ell} \Pr[e_i \in C_{i,j} \mid e_{i-1} \in C_{i-1,j}]\\
    &\geq (\phi'_m)^{-\ell+1}.
    \end{align*}

  Putting the above together, we get $E[d_T(e_\ell)] \geq \sum_{j = 1}^\ell (\phi'_m)^{-\ell+1} \cdot \frac{1}{2} = \frac{\ell(\phi'_m)}{2} d_V(e_\ell)$, as $d_V      (e_\ell) = (\phi'_m)^{-\ell}$.
\end{proof}
We are now ready to prove Theorem \ref{thm:dist-HST-lb}. 

\begin{proofof}{Theorem \ref{thm:dist-HST-lb}}
  By Yao's principle, Lemma~\ref{lem:HST} yields a sequence of $k = 2 + (m-2)\ell$ terminals such that any probabilistic online embedding into a HST has expected distortion at least $\Omega(\ell \log m)$. The aspect ratio of the terminals is $\Phi_\ell = (\phi'_m)^{\ell}$ so $\ell = \frac{\log \Phi_\ell}{\log \phi'_m} = \Omega(\frac{\log \Phi_\ell}{\log \log m})$ so the expected distortion is at least $\Omega\left(\frac{\log \Phi_\ell \log m}{\log \log m}\right)$. Setting $\ell = \Theta(k^{1-\delta})$ and $m = \Theta(k^\delta)$ gives the desired lower bound on expected distortion and aspect ratio $\Phi_\ell = \Theta(\log k)^{\ell} = 2^{\Theta(\ell \log\log  k)} = 2^{\Theta(k^{1-\delta'})}$ for any $\delta<\delta'<1$.
\end{proofof}

\section{A General Framework for Bypassing the Lower Bound}\label{sec:framework}
In this section we prove Theorem~\ref{thm:subadd-ub}. In the construction we use the property of abstract network design problems of admitting a min operator. We first prove the following theorem:

\begin{theorem}
  \label{thm:combine}
  Let $P$ be an Abstract Network Design problem with load function $\rho$. Suppose that $P$ admits a min operator, and suppose that there exists an algorithm $\Talg$ that is $\alpha$-competitive on instances defined on HST metrics. Let $\Balg$ denote a $\beta$-competitive algorithm (possibly randomized) on general metric inputs. Then,
 there exists a randomized algorithm $\combalg$ that, on every instance, has expected competitive ratio
  \[O(\alpha \cdot \min\{ \log k \cdot \log (\beta k \lambda_\rho), \log^2 k \cdot \log (\beta k)\} ),\]
  where $\lambda_\rho = \min\{\lceil\max_S \rho(S)/\min_{S \neq \emptyset}\rho(S)\rceil, r\}$. 

If the algorithms $\Talg$ and $\Balg$ have additive terms in their competitive ratio, denoted by $a$ and $b$ respectively, then $\combalg$ has an additive term $O(1)\cdot\max\{a, b\}$.
	
\end{theorem}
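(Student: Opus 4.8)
The plan is to build $\combalg$ by applying the min operator (a constant number of times, so that the additive-term bound $O(1)\max\{a,b\}$ is immediate from Definition~\ref{def:min_operator-inf}) to the baseline $\Balg$ and the ``tree-embedding'' algorithm $A'$ obtained by feeding the fully extendable online $\mu$-HST embedding of Theorem~\ref{thm:summary} into the reduction of Theorem~\ref{thm:reduction-trees} (with $\Talg$ as the tree algorithm). On every realization of the random tree $T$ we then have $\cost(\combalg)\le O(1)\min\{\cost(\Balg),\cost(A')\}\le O(1)\min\{\beta\cdot\OPT,\ \alpha\cdot\OPT(T)+a\}$, where $\OPT$ is the optimum in $(V,d_V)$ and $\OPT(T)$ the optimum of the instance induced on $T$. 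Ignoring the harmless additive term, the whole task reduces to showing $E_T\bigl[\min\{\beta\OPT,\ \alpha\,\OPT(T)\}\bigr]=O\!\bigl(\alpha\cdot\min\{\log k\log(\beta k\lambda_\rho),\ \log^2 k\log(\beta k)\}\bigr)\cdot\OPT$; note we may assume $\beta\ge\alpha$, as otherwise $\beta\OPT\le\alpha\OPT$ already beats the target.

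To bound the expectation I would first, exactly as in Claim~\ref{clm:2}, pass from $\OPT(T)$ to the cost of embedding a fixed optimal solution $\Scal^*$ of $(V,d_V)$ into $T$: this is feasible on $T$, so per realization $\OPT(T)\le Y:=\sum_{e} d_T(e)\,\rho(I^*_e)$ where $I^*_e$ is the set of time steps in which $\Scal^*$ uses $e$, and $\sum_e d(e)\rho(I^*_e)=\OPT$. Since the minimum of a sum of nonnegatives is at most the sum of the minimums, $\min\{M,Y\}\le\sum_e\min\{M,d_T(e)\rho(I^*_e)\}$ for $M:=\beta\OPT/\alpha\ (\ge\OPT)$. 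Now the scale-free tail bound of Theorem~\ref{thm:summary}(ii), $\Pr_T[d_T(e)\ge L]\le O(\log k)\,d(e)/L$, together with non-contractiveness $d_T(e)\ge d(e)$, yields $E_T[\min\{N,d_T(e)\}]\le O(\log k)\,d(e)\log(N/d(e))$ for $N\ge d(e)$, hence $E_T[\min\{M,d_T(e)\rho(I^*_e)\}]\le O(\log k)\,x_e\log(M/x_e)$ with $x_e:=d(e)\rho(I^*_e)\le\OPT\le M$. Summing and using that $t\mapsto t\log(M/t)$ is concave, $\sum_e x_e\log(M/x_e)\le \OPT\cdot\log\!\bigl(M\cdot(\#\text{terms})/\OPT\bigr)$, so $E_T[\min\{\beta\OPT,\alpha\OPT(T)\}]\le O(\alpha\log k)\cdot\OPT\cdot\log\!\bigl(\beta\cdot(\#\text{terms})\bigr)$.

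The two stated forms then come from how one controls $\#\text{terms}$, i.e.\ the number of $(\text{edge},\text{occupied-time-step})$ contributions that actually matter. Grouping the edges of $\Scal^*$ by the value $\rho(I^*_e)\in[\rho_{\min},\lambda_\rho\rho_{\min}]$ and discarding the negligibly cheap ones lets one argue the relevant contributions number $O(k\lambda_\rho)$ (each relevant edge is one of $O(k)$ ``tree positions'' carrying load at most $\lambda_\rho$ times the minimum), giving the $O(\alpha\log k\log(\beta k\lambda_\rho))$ regime. Alternatively, organizing the sum scale-by-scale on $T$ — there are only $O(\log k)$ scales that carry an edge of the compressed tree after one caps the contribution of edges that are much shorter or much longer than $\OPT$ — costs an extra $\log k$ factor outside but lets one replace $\log\lambda_\rho$ inside by $\log k$, giving the $O(\alpha\log^2 k\log(\beta k))$ regime. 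Taking the minimum of the two finishes the bound; the additive term $O(1)\max\{a,b\}$ is carried along by the $O(1)$ invocations of the min operator and the pass-through of $\Talg$'s additive term in Theorem~\ref{thm:reduction-trees}.

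The main obstacle is precisely the ``$\#\text{terms}$'' step, because there is genuine tension between the two natural ways of writing $Y$: summing over the edges used by $\Scal^*$ in $V$ gives the clean identity $\sum_e x_e=\OPT$ but the number of such edges is only bounded by $|V|^2$, which is not a problem parameter; summing over the $O(k)$ edges of $T$ keeps the count small but loses the clean $\OPT$ normalization (the total expected tree length of $\Scal^*$ is $\Theta(\log k\log\Phi)\OPT$, reintroducing $\Phi$). Reconciling these — essentially, showing that only a $\poly(k,\lambda_\rho)$-sized set of contributions governs the clipped cost $\min\{\beta\OPT,\alpha\OPT(T)\}$ — while being forced to keep the number of min-operator applications constant (so one cannot simply run one truncated tree-embedding algorithm per distance scale and combine them), is the technical heart of the proof. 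I would expect the delicate bookkeeping there, together with verifying that the capping of too-short/too-long edges of $\Scal^*$ only loses an $O(\OPT)$ additive amount, to be where most of the effort goes.
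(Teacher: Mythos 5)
Your high-level architecture coincides with the paper's: combine the baseline $\Balg$ with the tree-embedding algorithm via the min operator, reduce to bounding $E_T[\min\{\OPT(T),\beta\OPT\}]$ by translating a fixed solution $\Scal^*$ into $T$, and use the tail bound $\Pr_T[d_T(e)\ge L]\le O(\log k)\,d(e)/L$ with a truncation at $\beta\OPT$. The additive-term bookkeeping is also fine. However, the step you yourself flag as the ``technical heart'' --- bounding the number of contributing terms --- is genuinely missing, and your sketched resolutions do not work. After you pass to $Y=\sum_{e\in G}d_T(e)\rho(I^*_e)$ over \emph{all} edges used by $\Scal^*$ and apply concavity, your bound is $O(\alpha\log k)\OPT\log(\beta n)$ where $n$ is the number of edges of $\Scal^*$; this can be $\Theta(|V|)$ (e.g.\ an optimal solution that is a long path of equally-cheap edges through Steiner vertices), and ``discarding the negligibly cheap ones'' fails because many edges of equal, individually small cost jointly carry all of $\OPT$ and none can be discarded. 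The quantity $\#\mathrm{terms}$ is simply not controlled by $k$ and $\lambda_\rho$ on the $G$-side.

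The paper resolves this with a charging argument on the \emph{tree} side (``tree sparsity'', Definition~\ref{def:sparsity}): for each of the $O(k)$ edges $e'$ of the compressed HST, subadditivity (Proposition~\ref{prop:subadd-ar}) lets one cover the load $\rho\bigl(\bigcup_{e:e'\in P_T(e)}I_e\bigr)$ by at most $\lambda_\rho$ of the $G$-edges routed through $e'$, and the double-counting over tree edges is absorbed because the tree edges charged to a fixed $G$-edge $e$ all lie on $P_T(e)$, whose total length is exactly $d_T(e)$ (Lemma~\ref{lem:OPT-T}). This yields a sum over only $O(k\lambda_\rho)$ $G$-edges, which is what licenses the lower truncation at $\OPT/\eta$ in Lemma~\ref{lem:translation-cost} at an additive cost of $\OPT$, and hence the $\log(\beta k\lambda_\rho)$ factor. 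The second regime does not come from a scale-by-scale reorganization as you suggest (the compressed tree's $O(k)$ edges can occupy up to $O(k)$ distinct scales, so that count gives nothing new); it comes from a structurally different input to the same lemma, namely an $O(\log k)$-approximate solution supported on at most $k-1$ edges obtained from an offline FRT-type embedding of the terminals followed by Steiner-point removal (Lemma~\ref{lem:OPT-embed}). Both of these ingredients are absent from your proposal, so the proof as written does not go through.
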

\noindent Since the dependency on the competitive ratio of the baseline algorithm decreases exponentially, this idea can be applied repeatedly, to completely remove the dependency on the original baseline algorithm as well. The additive term in the resulting algorithm will be $c^{\log^*(\beta)}$, for a constant $c=O(\max\{a,b\})$. This yields Theorem~\ref{thm:subadd-ub}.

\paragraph{Overview of $\combalg$.} We begin by describing the high-level intuition behind $\combalg$ and its analysis. Typically, tree embeddings are used to give a reduction from problems on general graphs to trees. This reduction consists of three steps: (1) compute a tree embedding $T$ of the input graph $G$; (2) solve the problem on $T$; (3) translate the solution on $T$ back to $G$. The analysis relies on the fact that the tree embedding is non-contractive to argue that the cost of the translated solution is at most the cost of the tree solution. Then, it bounds the cost of the optimal solution in $T$ in terms of the optimal solution in $G$. This is done by considering the translation of the optimal solution in $G$ into $T$ and upper bounding the blow up in cost by the distortion of the embedding.

Our approach uses the combining scheme of the min operator property to combine the usual tree-embedding-based algorithm (using the embedding of Theorem~\ref{thm:summary}) and the baseline algorithm. At a high level, this allows us to fall back on the baseline algorithm in the case that the optimal solution in $T$ becomes very expensive due to some edges of $G$ being distorted badly. In the analysis, we first show (Lemma~\ref{lem:cost}) that the resulting algorithm has cost at most
$O(\alpha) \min\{\OPT(T), \beta\OPT\}$, where $\OPT(T)$ is the cost of the optimal solution in $T$ and $\OPT$ is the cost of the optimal solution in $G$. We then bound $\min\{\OPT(T),\beta\OPT\}$ by considering the translation of the optimal solution in $G$ into $T$ using a more refined analysis to bound the overhead of the translation.

At a high level, the embedding of Theorem~\ref{thm:summary} uses a collection of $O(\log \Phi)$ probabilistic partitions, one per scale. Each partition contributes a $O(\log k)$ factor to the distortion, resulting in an overall distortion of $O(\log k \log \Phi)$. The key idea is that to bound $\min\{\OPT(T), \beta \OPT\}$, it suffices to only consider much fewer than $O(\log \Phi)$ scales. It is straightforward to see that we can ignore scales above $\beta\OPT$. Arguing that scales much smaller than $\OPT$ do not contribute much to the cost of the translated solution is more difficult. If there exists a near-optimal solution in $G$ that uses at most $\poly(k)$ edges, then we can ignore scales that are smaller than $\OPT/\poly(k)$ and so only $\log (\beta k)$ scales are relevant. However, in general it is unclear that such a bound is possible. Instead, we use a more subtle argument (Lemma~\ref{lem:translation-cost})
based on a notion of sparsity that we call ``tree sparsity'' (Definition~\ref{def:sparsity}). We apply this argument in two different ways: one based on the parameter $\lambda_\rho$ and the fact that $T$ has $O(k)$ edges; the other based on the fact that one can use offline tree embeddings to obtain a $O(\log k)$-approximate solution in $G$ that uses at most $O(k)$ edges. The minimum of the resulting bounds yields the bound in Theorem~\ref{thm:combine}. 

{\paragraph{Description of $\combalg$.} Let $\Talg$ be an algorithm that is $\alpha$-competitive with an additive term $a$ on HST instances,  and $\Balg$ to be a baseline algorithm that is $\beta$-competitive with an additive term $b$ on arbitrary instances. Let $\Tembed$ be the online embedding algorithm of Theorem~\ref{thm:summary} and $\alg$ be the randomized algorithm of Theorem~\ref{thm:reduction-trees} that uses $\Tembed$ as its online tree embedding algorithm. The algorithm $\combalg$ is the combination of $\Balg$ and $\alg$.

\subsection{Analysis}

Let $\alg[T]$ denote the algorithm $\alg$ with the fixed choice of $\mu$-HST metric $(T, d_T)$ used by $\Tembed$, and let $\combalg[T]$ denote the combining algorithm of $\alg[T]$ and $\Balg$, as in Definition ~\ref{def:min_operator-inf}.
Observe that for the problem on $T$, the subgraphs of any feasible solution must be a subgraph of $T$. 

\begin{lemma}\label{lem:cost}
We have $\cost(\combalg[T])\leq O(\alpha)\min\{\OPT(T), \beta \OPT\} +O(1)\max\{a, b\}$.

\noindent
When $\Talg$ and/or $\Balg$ are randomized algorithms, then the expected value of the cost of the $\combalg[T]$ is bounded by the above,  
where the expectation is over the internal randomness of $\Talg$ and/or $\Balg$ (but not over the random choice of $T$).
\end{lemma}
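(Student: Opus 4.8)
The goal is to bound $\cost(\combalg[T])$ by $O(\alpha)\min\{\OPT(T),\beta\OPT\} + O(1)\max\{a,b\}$. By the min operator property (Definition~\ref{def:min_operator-inf}), it suffices to exhibit \emph{two} online algorithms whose costs are respectively $O(\alpha)\OPT(T)+O(1)a$ and $\beta\OPT+b$, and then invoke the combining scheme on them, paying only an $O(1)$ factor $\eta$. The second algorithm is simply $\Balg$ itself, which is $\beta$-competitive with additive term $b$ by assumption. So the real work is producing the first algorithm: an online algorithm for the original problem on $G$ whose cost is $O(\alpha)\OPT(T) + O(1)a$, where $\OPT(T)$ denotes the optimum of the \emph{induced} instance on the fixed HST $T$.

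\textbf{Step 1: the tree-based algorithm on the fixed $T$.} Take $\alg[T]$ to be the algorithm of Theorem~\ref{thm:reduction-trees} (equivalently the construction in its proof), but with the random HST frozen to the particular tree $T$ that $\Tembed$ produced. Concretely, $\alg[T]$ feeds each request $f(Z_i)$ and the induced feasibility function $\Fcal'_i$ to $\Talg$, obtains a response $(R^T_i,C^T_i)$ with $R^T_i\subseteq T_i$ (Proposition~\ref{prop:tree-containment}), and translates it back to $G$ via the extension $H_i$ of $f_i^{-1}$, i.e. $R_i = H_i(R^T_i)$, $C_i = f_i^{-1}(C^T_i)$. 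Feasibility in $G$ follows exactly as in the proof of Theorem~\ref{thm:reduction-trees}: connectivity is preserved under $H_i$ and $\Fcal_i(C_1,\dots,C_i)=\Fcal'_i(C^T_1,\dots,C^T_i)=1$. For the cost: Claim~\ref{clm:1} in that proof, which only uses subadditivity and monotonicity of $\rho$ together with non-expansiveness of $H$, gives $\cost(\alg[T]) \le \cost(\Talg \text{ on the induced instance over } T)$. Since $\Talg$ is $\alpha$-competitive with additive term $a$ on HST instances, this is at most $\alpha\,\OPT(T) + a$. Crucially, all of this holds for the \emph{fixed} tree $T$, with no expectation over the embedding's randomness — only over the internal randomness of $\Talg$ if it is randomized, in which case we get the bound in expectation by linearity.

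\textbf{Step 2: combine and finish.} Now apply the min operator to $\alg[T]$ and $\Balg$ to obtain $\combalg[T]$. By Definition~\ref{def:min_operator-inf} with $\eta = O(1)$,
\[
\cost(\combalg[T]) \;\le\; \eta\cdot\min\{\cost(\alg[T]),\ \cost(\Balg)\} \;\le\; O(1)\cdot\min\{\alpha\,\OPT(T)+a,\ \beta\,\OPT+b\}.
\]
Distributing the minimum over the sum, $\min\{\alpha\OPT(T)+a,\ \beta\OPT+b\} \le \min\{\alpha\OPT(T),\beta\OPT\} + \max\{a,b\}$, which yields $\cost(\combalg[T]) \le O(\alpha)\min\{\OPT(T),\beta\OPT\} + O(1)\max\{a,b\}$, as claimed. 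When $\Talg$ or $\Balg$ is randomized, the min operator definition gives $E[\cost(\combalg[T])] \le \eta\, E[\min\{\cost(\alg[T]),\cost(\Balg)\}] \le \eta\,\min\{E[\cost(\alg[T])], E[\cost(\Balg)]\}$, and the same manipulation goes through in expectation over the internal randomness of $\Talg$ and $\Balg$ only, since $T$ is held fixed throughout.

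\textbf{Expected main obstacle.} The subtle point is purely bookkeeping about \emph{which} randomness is being averaged over: we must make sure that freezing $T$ does not break either the feasibility argument or Claim~\ref{clm:1}. Feasibility is fine because, for any fixed tree in the support, $f$ is a genuine non-contractive online embedding with extensions $H_i$ satisfying properties (\ref{ext:1}), (\ref{ext:3}), (\ref{ext:4}) (this is exactly what "fully extendable" guarantees pointwise, per Definition~\ref{extendability}); and Claim~\ref{clm:1}'s inequality never used the expansion bound (\ref{ext:2}) at all, only non-expansiveness of $H$ and subadditivity/monotonicity of $\rho$, so it holds deterministically for each fixed $T$. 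The only place expectation over $T$ would be needed is in bounding $\OPT(T)$ itself in terms of $\OPT$ — but that is deliberately \emph{not} part of this lemma; it is deferred to the later analysis (Lemma~\ref{lem:translation-cost} and the tree-sparsity argument). Hence this lemma is essentially the "easy half" of the reduction, and the write-up mainly needs to cite the right pieces of the Theorem~\ref{thm:reduction-trees} proof and of Definition~\ref{def:min_operator-inf} carefully, being explicit that the expectation in the randomized case is over algorithmic coins and not over the embedding.
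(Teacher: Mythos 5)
Your proposal is correct and follows essentially the same route as the paper's proof: bound $\cost(\alg[T])$ via Claim~\ref{clm:1} and the $\alpha$-competitiveness of $\Talg$ on the induced tree instance, bound $\Balg$ by its $\beta$-competitiveness, apply the min-operator guarantee with $\eta=O(1)$, and distribute the minimum over the additive terms. Your extra remarks about which source of randomness is averaged over (algorithmic coins versus the choice of $T$) match the paper's intent and do not change the argument.
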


\begin{proof}
By the definition, 
\begin{align*}
    E[\cost(\combalg[T])] 
    &\leq O(1) E[\min\{ \cost(\alg[T]), \cost(\Balg)\}] \\
    &\leq O(1) \min\{E[\cost(\alg[T])], E[\cost(\Balg)]\} \\
    &\leq O(1) \min \{E[\cost(\alg[T])], \beta \OPT +b\}.
\end{align*}
By Claim~\ref{clm:1} it holds that 
$E[\cost(\alg[T])] \leq E[\cost(\Talg[T])] \leq \alpha \OPT(T)+a$. Therefore, 
\begin{align*}
    E[\cost(\combalg[T])] 
    &\leq O(1)\min\{\alpha \OPT(T)+a, \beta \OPT +b\} \\
    &\leq O(1) \min\{\alpha \OPT(T), \beta \OPT\} +O(1)\max\{a, b\},
\end{align*} which completes the proof. 
\end{proof}

We now bound $\OPT(T)$ in terms of any feasible solution $\Scal$ in $G$. Let $\R_i = (R_i, C_i)$ be its $i$-th response. Consider the solution $\Scal'$ in $T$ obtained by translating $\Scal$ into $T$ using the extension function $F := F_r$ as follows. For an edge $e = (u,v) \in G$, let $I_e = \{i : e \in R_i\}$ and $P_T(e)$ denote the path in $T$ between $F(u)$ and $F(v)$. The $i$-th response of $\Scal'$ is $(R'_i, C_i)$ where $R'_i = \bigcup_{e \in R_i} P_T(e)$. Since any terminal pair that is connected in $R_i$ is also connected in $R'_i$, the solution $\Scal'$ is a feasible solution on $T$. Thus, $\OPT(T) \leq \cost(\Scal')$. Since each tree edge $e' \in T$ is used by $\Scal'$ in the time steps $i$ such that $R_i$ contains an edge $e \in G$ with $e' \in P_T(e)$, we have
\begin{equation}
  \OPT(T) \leq \sum_{e' \in T} d_T(e') \rho\left(\bigcup_{e \in G: e' \in P_T(e)} I_e\right).\label{eq:OPT(T)}
\end{equation}
While one can use subadditivity to upper bound the RHS of the inequality by $\sum_{e \in G} d_T(e) \rho(I_e)$, we prove a more refined bound based on the following notion of sparsity.

\begin{definition}[Tree sparsity]
  \label{def:sparsity}
 Consider a solution $\Scal$ in $G$. For a HST embedding $T$ of $G$, let $\Ehat_T$ be the smallest edge subset of $G$ such that $\sum_{e' \in T} d_T(e') \rho\left(\bigcup_{e \in G: e' \in P_T(e)} I_e\right) \leq \sum_{e \in \Ehat_T} d_T(e)\rho(I_e)$. The \emph{tree sparsity} of $\Scal$ is the maximum size of $\Ehat_T$ over non-contracting HST embeddings of $G$. 
\end{definition}
Intuitively, if a solution $\Scal$ has small tree sparsity, then for any HST embedding $T$, the cost of its translation into $T$ can be charged to a small subset of edges in $G$. Note that subadditivity of $\rho$ immediately implies that $|\cup_i R_i|$ is an upper bound on the tree sparsity of $\Scal$ .

\begin{lemma}
  \label{lem:translation-cost}
  Let $\Scal$ be a solution in $G$ with tree sparsity $\eta$. Then,
  $E_T[\min\{\OPT(T), \beta \OPT\}] \leq O(\log k \log (\beta \eta))\cost(\Scal)$.
\end{lemma}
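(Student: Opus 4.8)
The plan is to bound the right-hand side of \eqref{eq:OPT(T)} by restricting attention to a small band of distance scales. Fix the solution $\Scal$ with tree sparsity $\eta$ and write $D = \cost(\Scal)$. For a fixed $\mu$-HST $T$ produced by $\Tembed$, I would first use the definition of tree sparsity to pick the edge subset $\Ehat_T \subseteq G$ with $|\Ehat_T| \leq \eta$ and $\OPT(T) \leq \sum_{e \in \Ehat_T} d_T(e)\rho(I_e)$. So it suffices to bound $E_T[\min\{\sum_{e \in \Ehat_T} d_T(e)\rho(I_e),\,\beta\OPT\}]$. The key point is that $\OPT \geq \OPT(G) \geq \Omega(D/\text{something})$ — actually we should be careful: $\Scal$ is an \emph{arbitrary} feasible solution, so $\OPT \le \cost(\Scal) = D$, and we cannot lower bound $\OPT$ by $D$. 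Instead I would phrase everything relative to $D$: since $\OPT \le D$ we have $\beta\OPT \le \beta D$, and since $\min\{X,\beta\OPT\}$ is what we want, it is enough to show $E_T\big[\min\{\sum_{e\in\Ehat_T} d_T(e)\rho(I_e),\ \beta D\}\big] = O(\log k\log(\beta\eta))\,D$.

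The heart of the argument is a scale decomposition. For each edge $e\in\Ehat_T$, split its contribution $d_T(e)\rho(I_e)$ according to dyadic (or $\mu$-adic) ranges of $d_T(e)$ relative to $d_G(e)$: write $d_T(e)\rho(I_e) = \sum_{j\ge 0} d_T(e)\rho(I_e)\cdot \mathbf{1}[d_T(e)\in(\mu^{j-1}d_G(e),\mu^j d_G(e)]]$, plus the trivial term where $d_T(e) \le d_G(e)$ which contributes at most $\sum_e d_G(e)\rho(I_e)\le D$ using non-contraction never helps here — rather, since $T$ is non-contracting $d_T(e)\ge d_G(e)$, the $j=0$ band already gives $\le \mu\sum_e d_G(e)\rho(I_e)=\mu D$. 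For $j\ge 1$, Claim~\ref{observ:probab_ineq} (item (ii) of Theorem~\ref{thm:summary}) gives $\Pr_T[d_T(e)\ge \mu^{j} d_G(e)] \le O(\log k)\mu^{-j}$, so the expected contribution of band $j$ is at most $O(\log k)\mu^{-j}\cdot \mu^{j} d_G(e)\rho(I_e) = O(\log k)\,d_G(e)\rho(I_e)$ per edge, and summing over $e\in\Ehat_T$ gives $O(\log k)\,D$ per band. The problem is that there are $\Theta(\log\Phi)$ bands, which would reproduce the bad $\log\Phi$ dependence. This is where the truncation $\min\{\cdot,\beta D\}$ and the sparsity bound $\eta$ enter: on the high end, if $d_T(e) > \beta D$ for some $e$, then $d_T(e)\rho(I_e) > \beta D\cdot\rho_{\min}$, but we only care about the contribution up to the cap $\beta D$, so effectively band $j$ with $\mu^j d_G(e) > \beta D$, i.e. $j > \log_\mu(\beta D/d_G(e))$, contributes at most $\beta D$ total after capping, and combined with the $O(\log k)\mu^{-j}$ tail these high bands sum to $O(\log k)\cdot \beta D\cdot \sum_{j>\log_\mu(\beta D/d_G(e))}\mu^{-j} \cdot (\text{number surviving}) $ — I'd need to sum the geometric tail carefully, but it collapses to $O(\log k)\,D$ because the $\mu^{-j}$ factor beats the cap. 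On the low end, the point is that only $O(\log(\beta\eta))$ bands can actually matter: bands with $\mu^j d_G(e)$ much smaller than $D/\eta$ contribute, across all $\eta$ edges of $\Ehat_T$, a total of at most $\eta\cdot (D/\eta) = D$ even without the probabilistic saving. So the ``middle'' bands number $O(\log(\beta\eta/1)) = O(\log(\beta\eta))$, each contributing $O(\log k)\,D$ in expectation, while the low and high bands together contribute $O((\mu+\text{const})\,D)$.

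The main obstacle I anticipate is making the low-end truncation rigorous: the quantity $d_G(e)$ varies over edges of $\Ehat_T$, so ``bands much below $D/\eta$'' is edge-dependent, and the clean way to handle it is to reorganize the sum by the \emph{absolute} tree scale $\Delta$ rather than the relative band $j$ — i.e. write $\OPT(T)\le \sum_{\Delta}\sum_{e\in\Ehat_T: d_T(e)\approx\Delta} \Delta\rho(I_e)$, observe that for $\Delta > \beta D$ the capped contribution is $0$ (a feasible tree solution of cost $>\beta D$ is never what $\combalg$ pays, by Lemma~\ref{lem:cost}, so morally we can drop it — more precisely, $\min\{\OPT(T),\beta\OPT\}\le\beta\OPT\le\beta D$ handles the whole high tail in one stroke), and for $\Delta < D/(\mu\eta)$ the total over all $\eta$ edges is at most $\eta\cdot\Delta/d_{\min}\cdot\ldots$ — here one uses that $\rho(I_e)/\rho(I_e) $ trivially and that $\sum_{e\in\Ehat_T}\Delta\rho(I_e)\le \eta\Delta\rho_{\max}$ is too weak, so instead one charges $\Delta\rho(I_e)\le d_G(e)\rho(I_e)\cdot(\Delta/d_G(e))$ against $D$ with the small multiplier $\Delta/d_G(e)$, whose sum over the low bands is a convergent geometric series $\le O(1)$. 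Getting the interaction between the $\eta$-many edges, the $\rho$-values, and the scale truncation exactly right — so that precisely $O(\log(\beta\eta))$ scales survive and each costs $O(\log k)\,D$ — is the delicate bookkeeping step; everything else (linearity of expectation, Claim~\ref{observ:probab_ineq}, geometric summation) is routine.
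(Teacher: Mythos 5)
Your skeleton is the same as the paper's: restrict to the sparse set $\Ehat_T$ of at most $\eta$ edges, cap the high end at $\beta\OPT$ (or $\beta\cost(\Scal)$, which is fine since $\OPT\le\cost(\Scal)$), discard a low end whose total is $\le\eta\cdot(\cost(\Scal)/\eta)=\cost(\Scal)$, and pay $O(\log k)\cdot d(e)\rho(I_e)$ per surviving scale via the tail bound of Theorem~\ref{thm:summary}, with $O(\log(\beta\eta))$ surviving scales. But there is a genuine gap in your low-end truncation, and it is exactly the step you flag as ``delicate bookkeeping'' and leave unresolved. You truncate on the distance scale --- either the relative band $d_T(e)\approx\mu^j d_G(e)$ or the absolute scale $\Delta=d_T(e)$ --- and claim that edges realized at scales below $D/\eta$ contribute at most $D/\eta$ each. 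That is false in general: the contribution of such an edge is $d_T(e)\rho(I_e)<(D/\eta)\,\rho(I_e)$, and $\rho(I_e)$ can be arbitrarily large, so summing over the $\le\eta$ edges does not give $D$. You notice this ($\eta\Delta\rho_{\max}$ ``is too weak'') and propose charging $\Delta\rho(I_e)\le d_G(e)\rho(I_e)\cdot(\Delta/d_G(e))$ with a convergent geometric series, but that fails too: by non-contraction $d_T(e)\ge d_G(e)$, so on every realized scale the multiplier $\Delta/d_G(e)$ is at least $1$ and there is no geometric decay; and the number of scales between $d_G(e)$ and $D/(\eta\,\rho(I_e))$ can be $\Theta(\log\Phi)$, each costing $\Omega(\log k)\,d_G(e)\rho(I_e)$ in expectation, which is precisely the dependence you are trying to avoid.

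The fix --- which is the one move the paper makes that your plan is missing --- is to truncate on the \emph{product} $\phi_T(e)=d_T(e)\rho(I_e)$ rather than on $d_T(e)$: keep only edges with $\phi_T(e)\ge\OPT/\eta$, so the discarded edges of $\Ehat_T$ contribute at most $\eta\cdot(\OPT/\eta)=\OPT\le\cost(\Scal)$ deterministically, with no appeal to $\rho_{\max}$. Then push the min inside the sum, $\min\bigl\{\sum_e\phi_T(e),\beta\OPT\bigr\}\le\sum_e\min\{\phi_T(e),\beta\OPT\}$, and bound each summand by the truncated random variable that is $0$ below $\OPT/\eta$, equals $\phi_T(e)$ in the middle, and is capped at $\beta\OPT$ above. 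Applying the tail bound with the scaling $\sigma=\rho(I_e)$ built into the truncation thresholds $\delta=\OPT/\eta$ and $\gamma=\beta\OPT$ (Claim~\ref{claim:tdist}), the number of relevant scales is $O(\log(\gamma/\delta))=O(\log(\beta\eta))$ \emph{uniformly in $\rho(I_e)$}, and each contributes $O(\log k)\,\rho(I_e)d(e)$ in expectation; summing over $e$ gives $O(\log k\log(\beta\eta))\cost(\Scal)$. Your high-end and middle-band arguments are essentially correct once the thresholds are stated in terms of $\sigma d_T(e)$ rather than $d_T(e)$ alone.
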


We will need the following claim in the proof of the lemma. The claim bounds the expected value of the truncated distance on the random tree $T$ of the embedding. Particularly, for constants $\sigma$, $\gamma$ and $\delta$ the random variable $\min\{\sigma d_T(e), \gamma\}$ is considered, in the distance scales at least $\delta$.
\begin{claim}
  \label{claim:tdist}
  Let $(V,d)$ be a metric space and $t_1, \ldots, t_k$ be a sequence of $k$ terminals. Consider the $\mu$-HST embedding $T$ of Theorem~\ref{thm:summary}. Then, for any $e=(u,v)$ with $u,v \in V$ and $\gamma,\sigma,\delta > 0$ such that $\gamma > \delta$, the  random variable
  \[\dhat_T(e) =
    \begin{cases}
      \gamma & \mbox{for } \sigma d_T(e) \geq \gamma \\
      \sigma  d_T(e) & \mbox{for } \delta \leq \sigma  d_T(e) < \gamma \\
      0 & \mbox{for } \sigma  d_T(e) < \delta
    \end{cases}
  \]
  has expectation (over the choice of $T$) $E_T[\dhat_T(e)] \leq O(\log k \cdot \log \frac{\gamma}{\delta}) \cdot \sigma d(e)$.
\end{claim}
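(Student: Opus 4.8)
The plan is to decompose the truncated random variable $\dhat_T(e)$ as an integral of its tail and then apply the tail bound on $d_T(e)$ from Theorem~\ref{thm:summary}(ii), namely $\Pr_T[d_T(e) \geq L] \leq O(\log k)\, d(e)/L$. Concretely, I would write
\[
\dhat_T(e) = \int_0^{\gamma} \mathbf{1}[\dhat_T(e) \geq x]\, dx.
\]
For $x \in (0,\delta]$ the event $\dhat_T(e) \geq x$ is implied only by $\sigma d_T(e) \geq \delta$ (since $\dhat_T$ is $0$ below the threshold $\delta$), while for $x \in (\delta,\gamma]$ the event $\dhat_T(e) \geq x$ is exactly the event $\sigma d_T(e) \geq x$. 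Taking expectations and applying Theorem~\ref{thm:summary}(ii) with $L = x/\sigma$ (and $L = \delta/\sigma$ on the first piece), we get
\[
E_T[\dhat_T(e)] \leq \delta \cdot O(\log k)\frac{\sigma d(e)}{\delta} + \int_{\delta}^{\gamma} O(\log k)\frac{\sigma d(e)}{x}\, dx
= O(\log k)\,\sigma d(e)\left(1 + \ln\frac{\gamma}{\delta}\right).
\]
Since $\gamma > \delta$, we have $1 + \ln(\gamma/\delta) = O(\log(\gamma/\delta))$ whenever $\gamma/\delta$ is bounded away from $1$; in general $1 + \ln(\gamma/\delta) = O(\log\frac{\gamma}{\delta})$ can be read as the intended asymptotic bound (or one absorbs the additive constant into the $O(\cdot)$), yielding $E_T[\dhat_T(e)] \leq O(\log k \cdot \log\frac{\gamma}{\delta})\cdot \sigma d(e)$, as claimed.

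The only mild subtlety—and the step I would be most careful about—is the first piece of the integral, $x \in (0,\delta]$: here one must not bound $\Pr[\dhat_T(e)\geq x]$ by $\Pr[\sigma d_T(e)\geq x]$, since that would give $\int_0^\delta O(\log k)\sigma d(e)/x\,dx$, which diverges. Instead, one uses that $\dhat_T(e) > 0$ forces $\sigma d_T(e) \geq \delta$, so $\Pr[\dhat_T(e) \geq x] \leq \Pr[\sigma d_T(e) \geq \delta] \leq O(\log k)\sigma d(e)/\delta$ for all $x \leq \delta$, contributing only an additional $O(\log k)\sigma d(e)$ term. This is exactly why the truncation at the low end (the $\delta$ threshold) is essential; without it the integral of the tail would not converge. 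Everything else is a routine calculation, and no step here should present a real obstacle.

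Alternatively, one could avoid the boundary discussion entirely by writing $\dhat_T(e) \le \delta\cdot\mathbf{1}[\sigma d_T(e)\ge \delta] + \int_\delta^\gamma \mathbf{1}[\sigma d_T(e)\ge x]\,dx$ pointwise (checking the three cases in the definition of $\dhat_T(e)$ directly), taking expectations, and invoking the tail bound as above; this is the cleanest route and is the one I would write up.
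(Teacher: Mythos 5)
Your proof is correct, and it takes a genuinely different (and arguably cleaner) route than the paper's. The paper exploits the $\mu$-HST structure of $T$: since $d_T(e)$ is supported on geometric scales, it bounds $\dhat_T(e)$ pointwise by $\gamma\cdot 1\{d_T(e)\ge\gamma/\sigma\} + O(1)\sum_{j:\mu^j\in[\delta,\gamma]}\mu^j\cdot 1\{d_T(e)\ge\mu^j/\sigma\}$, a sum of $O(\log\frac{\gamma}{\delta})$ indicator terms, and applies the tail bound of Theorem~\ref{thm:summary}(ii) to each term so that every scale contributes $O(\log k)\sigma d(e)$. Your layer-cake decomposition $\dhat_T(e)=\int_0^\gamma \mathbf{1}[\dhat_T(e)\ge x]\,dx$, split at $\delta$, integrates the same tail bound continuously and never uses that $T$ is an HST --- so your argument applies verbatim to any random embedding satisfying the tail estimate, which is a mild generalization. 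Your identification of the events is right: for $x\in(\delta,\gamma]$ one has $\{\dhat_T(e)\ge x\}=\{\sigma d_T(e)\ge x\}$ exactly, and for $x\le\delta$ positivity of $\dhat_T(e)$ forces $\sigma d_T(e)\ge\delta$, which is precisely the observation that makes the integral converge and mirrors the paper's restriction of its sum to scales $\ge\delta$. Both arguments land on $O(\log k)\,\sigma d(e)\,(1+\log\frac{\gamma}{\delta})$; the residual $+1$ (which is not literally $O(\log\frac{\gamma}{\delta})$ when $\gamma/\delta$ is near $1$) is present in the paper's proof as well, via the cap term $\gamma\cdot\Pr_T[d_T(e)\ge\gamma/\sigma]$, and is harmless in the application where $\gamma/\delta=\beta\eta$ is large. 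No gap.
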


\begin{proof}
  For an event $\mathcal{E}$, let $1\{\mathcal{E}\}$ be its indicator variable. Since $T$ is a $\mu$-HST, we have
  \[\dhat_T(e) \leq \gamma \cdot 1\{d_T(e) \geq \gamma/\sigma\} + O(1) \sum_{j : \mu^j \in [\delta, \gamma]} \mu^j \cdot 1\{d_T(e) \geq \mu^j/\sigma\}.\]
and so
\[  E_T[\dhat_T(e)]
  \leq \gamma \cdot \Pr_T[d_T(e) \geq \gamma/\sigma] + O(1) \sum_{j : \mu^j \in [\delta, \gamma]} \mu^j \cdot \Pr_T[d_T(e) \geq \mu^j/\sigma].\]
By Theorem~\ref{thm:summary}, $\Pr_T[d_T(e) \geq \mu^j/\sigma] \leq O(\log k)\frac{d(e)}{\mu^j/\sigma}$ for each $j$ and $\Pr_T[d_T(e) \geq \gamma/\sigma] \leq O(\log k)\frac{d(e)}{\gamma/\sigma}$. As there are at most $O(\log \frac{\gamma}{\delta})$ terms in the sum, we have $E_T[\dhat_T(e)] \leq O(\log k \cdot \log \frac{\gamma}{\delta})\cdot \sigma d(e)$, as desired.
\end{proof}

\begin{proof}[Proof of Lemma \ref{lem:translation-cost}]
   Fix an HST embedding $T$. Let $\phi_T(e) = d_T(e) \rho(I_e)$. Since $\Scal$ has tree sparsity $\eta$, there exists $\Ehat_T \subseteq G$ of size at most $\eta$ with $\sum_{e' \in T} d_T(e') \rho\left(\bigcup_{e \in G: e' \in P_T(e)} I_e\right) \leq \sum_{e \in \Ehat_T} \phi_T(e)$. Using Inequality~\eqref{eq:OPT(T)}, we get 
\[    \OPT(T)
    \leq \sum_{e \in \Ehat_T} \phi_T(e) 
    \leq \sum_{e \in \Ehat_T : \phi_T(e) \geq \frac{\OPT}{\eta}} \phi_T(e) + \OPT 
    \leq \sum_{e \in G : \phi_T(e) \geq \frac{\OPT}{\eta}}  \phi_T(e) + \OPT,\]
  where the last inequality uses the fact that $\Ehat_T$ is a subset of edges in $G$. 
  Thus, it suffices to bound 
  $E_T[\min\{\sum_{e \in G : \phi_T(e) \geq \frac{\OPT}{\eta}}  \phi_T(e), \beta\OPT\}]$.
  Now, observe that
  \begin{align*}
  \min\left\{\sum_{e \in G : \phi_T(e) \geq \frac{\OPT}{\eta}} \phi_T(e), \beta\OPT\right\} 
   &\leq \sum_{e \in G : \phi_T(e) \geq \frac{\OPT}{\eta}} \min\{ \phi_T(e), \beta\OPT\} \\
    &= \sum_{e \in G} 1\{\phi_T(e) \geq \frac{\OPT}{\eta}\} \min\{ \phi_T(e), \beta\OPT\},
  \end{align*}
  where $1\{\phi_T(e) \geq \frac{\OPT}{\eta}\}$ is the indicator variable for the event $\phi_T(e) \geq \frac{\OPT}{\eta}$. Consider the random variable $\dhat_T(e)$ where
  \begin{align*}
   \dhat_T(e) = 
    \begin{cases}
      \beta \OPT & \mbox{for } d_T(e) \rho(I_e) \geq \beta\OPT \\
      d_T(e) \rho(I_e) & \mbox{for } \frac{\OPT}{\eta} \leq  d_T(e) \rho(I_e) < \beta\OPT \\
      0 & \mbox{for }  d_T(e)\rho(I_e) < \frac{\OPT}{\eta}
    \end{cases}
  \end{align*}
  We have that \[\dhat_T(e) = 1\{\phi_T(e) \geq \frac{\OPT}{\eta}\} \min\{ \phi_T(e), \beta\OPT\}.\] So,
    \[E_T\left[\min\left\{\sum_{e \in G : \phi_T(e) \geq \frac{\OPT}{\eta}}  \phi_T(e), \beta\OPT\right\}\right]
    \leq \sum_{e \in G} E_T[\dhat_T(e)] \leq O(\log k \cdot \log (\beta \eta))\sum_{e \in G} d(e)\rho(I_e),\]
  where the last inequality follows from applying Proposition~\ref{claim:tdist} to each $e \in G$ with $\gamma = \beta \OPT, \delta = \OPT/\eta, \sigma = \rho(I_e)$. Since $\cost(\Scal) = \sum_{e \in G} d(e)\rho(I_e)$, this completes the proof of the lemma.
\end{proof}

With Lemma~\ref{lem:translation-cost} in hand, to complete the proof of Theorem~\ref{thm:combine}, it remains to show that: (1) any optimal solution has tree sparsity $O(k\lambda_\rho)$ (Lemma~\ref{lem:OPT-T}); (2) there exists an $O(\log k)$-approximate solution with tree sparsity $k-1$ (Lemma~\ref{lem:OPT-embed}). We first show (1). 
\begin{prop}
  \label{prop:subadd-ar}
  For any collection $\Acal$ of nonempty subsets of $\{1, \ldots, r\}$, there exists a subcollection $\Acalhat \subseteq \Acal$ of size at most $\lambda_\rho$ such that $\rho\left(\bigcup_{A \in \Acal} A\right) \leq \sum_{A \in \Acalhat} \rho(A)$.
\end{prop}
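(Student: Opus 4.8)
The plan is to exhibit two candidate subcollections witnessing the inequality: one of size at most $r$, obtained by a greedy covering argument, and one of size at most $\lceil \max_S\rho(S)/\min_{S\neq\emptyset}\rho(S)\rceil$, obtained by a counting argument. Since $\lambda_\rho$ is the minimum of these two quantities, at least one of the two candidates has size at most $\lambda_\rho$, which proves the proposition.

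First I would set $U = \bigcup_{A\in\Acal}A$, and dispose of the trivial case: if $U=\emptyset$ then $\Acal=\emptyset$ (every member of $\Acal$ is nonempty) and $\Acalhat=\emptyset$ works. So assume $U\neq\emptyset$, and write $M = \max_S\rho(S)$, $\mu = \min_{S\neq\emptyset}\rho(S)$; since $\rho$ is monotone, $\rho(U)\le M$, and $\mu>0$ by the property $\rho(S)=0\iff S=\emptyset$. For the bound of $r$: build $\Acalhat$ greedily — starting from the empty collection, repeatedly add to $\Acalhat$ any $A\in\Acal$ that contains an element of $U$ not yet covered by $\bigcup_{A'\in\Acalhat}A'$, halting once this union equals $U$. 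As long as the union is a proper subset of $U=\bigcup_{A\in\Acal}A$ such an $A$ exists, and each step covers at least one new element of $U$, so the process stops after at most $|U|\le r$ steps. Subadditivity then gives $\rho(U)=\rho\big(\bigcup_{A\in\Acalhat}A\big)\le\sum_{A\in\Acalhat}\rho(A)$.

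For the bound of $\lceil M/\mu\rceil$: let $n=\lceil\rho(U)/\mu\rceil$, so $n\le\lceil M/\mu\rceil$. If $|\Acal|\ge n$, take $\Acalhat$ to be any $n$ distinct members of $\Acal$; since $\rho(A)\ge\mu$ for each nonempty $A$, we get $\sum_{A\in\Acalhat}\rho(A)\ge n\mu\ge\rho(U)$. If instead $|\Acal|<n$, take $\Acalhat=\Acal$, whose size is then at most $n-1<\lceil M/\mu\rceil$, and conclude $\rho(U)\le\sum_{A\in\Acal}\rho(A)$ by subadditivity. In all cases $|\Acalhat|\le\lceil M/\mu\rceil$ and the inequality holds. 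Finally, since $\lambda_\rho=\min\{\lceil M/\mu\rceil,r\}$, whichever of these two quantities realizes the minimum, the corresponding candidate has size at most $\lambda_\rho$, completing the proof.

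There is no genuinely hard step here; the only point requiring care is the degenerate regime — when $\Acal$ contains fewer than $n$ sets (so one cannot select $n$ distinct ones) or when $\Acal$ is very large (more than $r$ sets). Both are handled automatically: the first by simply taking all of $\Acal$, and the second because the greedy covering never uses more than $|U|\le r$ sets regardless of how large $\Acal$ is.
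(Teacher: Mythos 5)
Your proof is correct and follows essentially the same two-pronged approach as the paper: a greedy covering argument for the bound of $r$ and a counting argument comparing $\min_{S\neq\emptyset}\rho(S)$ against $\max_S\rho(S)$ for the other bound, with subadditivity applied in the same places. The only (harmless) refinement is your use of $n=\lceil\rho(U)/\mu\rceil$ rather than $\lceil M/\mu\rceil$ directly.
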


\begin{proof}
Case $1$: $\lambda_{\rho} = \lceil\fmax/\fmin\rceil$. 
If $|\Acal| \leq  \lambda_{\rho}$, the proposition follows trivially by taking $\Acalhat=\Acal$ and using the subadditivity of $\rho$.   Suppose that $|\Acal| > \lambda_{\rho}$, let $\Acalhat$ be any subcollection of $\Acal$ of size $\lambda_{\rho}$. Observe that
    $\sum_{A \in \Acalhat}\rho(A) \geq \lambda_{\rho} \min_{S \neq \emptyset} \rho(S) \cdot  \geq \max_S \rho(S) \geq \rho\left(\bigcup_{A \in \Acal} A\right)$.
    Case $2$: $\lambda_{\rho}=r$. Construct $\Acalhat$ by greedily choosing sets from $\Acal$, iteratively choosing a set which adds at least one new element of $\{1, \ldots, r\}$, until the union of sets of $\Acalhat$ equals that of $\Acal$. Then $|\Acalhat| \leq r$ and the result follows from subadditivity of $\rho$.
  \end{proof}

\begin{lemma}
  \label{lem:OPT-T}
  Any optimal solution has tree sparsity $O(k\lambda_\rho)$.
\end{lemma}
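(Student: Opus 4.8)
The goal is to show that an optimal solution $\Scal^*$ has tree sparsity $O(k\lambda_\rho)$; that is, for an arbitrary non-contracting HST embedding $T$ of $G$, we want to find a subset $\Ehat_T \subseteq G$ of size $O(k\lambda_\rho)$ such that
\[
\sum_{e' \in T} d_T(e') \rho\!\left(\bigcup_{e \in G : e' \in P_T(e)} I_e\right) \leq \sum_{e \in \Ehat_T} d_T(e)\rho(I_e).
\]
The plan is to argue edge-by-edge on the tree. Fix a tree edge $e' \in T$, and let $\Acal_{e'} = \{I_e : e \in G,\ e' \in P_T(e),\ I_e \neq \emptyset\}$ be the collection of (nonempty) active-time-step sets of those graph edges whose tree-path crosses $e'$. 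Then $\rho\big(\bigcup_{e : e' \in P_T(e)} I_e\big) = \rho\big(\bigcup_{A \in \Acal_{e'}} A\big)$. Applying Proposition~\ref{prop:subadd-ar} to $\Acal_{e'}$, there is a subcollection $\Acalhat_{e'} \subseteq \Acal_{e'}$ with $|\Acalhat_{e'}| \leq \lambda_\rho$ and $\rho\big(\bigcup_{A \in \Acal_{e'}} A\big) \leq \sum_{A \in \Acalhat_{e'}} \rho(A)$. Translating back to graph edges, there is a set $E_{e'}$ of at most $\lambda_\rho$ graph edges, each with $e' \in P_T(e)$, such that $d_T(e')\rho\big(\bigcup_{e : e' \in P_T(e)} I_e\big) \leq \sum_{e \in E_{e'}} d_T(e')\rho(I_e)$. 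Crucially, for $e \in E_{e'}$ we have $e' \in P_T(e)$, hence $d_T(e') \leq d_T(P_T(e)) = d_T(e)$ since all edge lengths of a rooted HST path are nonnegative; so $d_T(e')\rho(I_e) \leq d_T(e)\rho(I_e)$.

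The next step is to aggregate over all tree edges $e'$. Set $\Ehat_T = \bigcup_{e' \in T} E_{e'}$. Summing the per-edge bounds and using the inequality $d_T(e')\rho(I_e) \leq d_T(e)\rho(I_e)$, we get
\[
\sum_{e' \in T} d_T(e') \rho\!\left(\bigcup_{e : e' \in P_T(e)} I_e\right) \leq \sum_{e' \in T}\sum_{e \in E_{e'}} d_T(e)\rho(I_e) \leq \sum_{e \in \Ehat_T} (\text{mult}(e))\cdot d_T(e)\rho(I_e),
\]
where $\text{mult}(e)$ counts how many tree edges $e'$ have $e \in E_{e'}$. To turn this into the required bound $\sum_{e \in \Ehat_T} d_T(e)\rho(I_e)$ with $|\Ehat_T| = O(k\lambda_\rho)$, I will handle the multiplicity more carefully: since each $E_{e'} \subseteq \{e : e' \in P_T(e)\}$, a graph edge $e$ can only be selected by tree edges $e'$ lying on its tree-path $P_T(e)$. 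So instead of taking a union, I will charge along paths: for each graph edge $e$, the contribution $\sum_{e' \in P_T(e)} d_T(e') \leq d_T(e) \cdot (\text{number of tree edges on } P_T(e))$ — but this is wasteful. The cleaner route is to bound $|\Ehat_T|$ directly: $T$ has $O(k)$ edges by Theorem~\ref{thm:summary}(i), and each contributes at most $\lambda_\rho$ edges to $\Ehat_T$, so $|\Ehat_T| \leq O(k)\cdot\lambda_\rho = O(k\lambda_\rho)$; and then I re-derive the inequality without multiplicity by noting that $\Ehat_T$ (as a \emph{set}) already satisfies the sparsity inequality because $\sum_{e' \in T} d_T(e')\rho(\bigcup \cdots) \leq \sum_{e' \in T}\sum_{e \in E_{e'}} d_T(e')\rho(I_e)$ and each term $d_T(e')\rho(I_e)$ with $e' \in P_T(e)$ is at most $d_T(e)\rho(I_e)$, and moreover for a fixed $e$, the set of $e'$ with $e \in E_{e'}$ lie on $P_T(e)$ and since $T$ is a $\mu$-HST, $\sum_{e' \in P_T(e)} d_T(e') = O(1)\cdot d_T(e)$ by the geometric decrease of edge lengths; absorbing this $O(1)$ into the constant gives the claim.

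\textbf{Main obstacle.} The subtle point — and where I expect the real work to lie — is controlling the multiplicity with which a single graph edge $e$ is charged. A priori, many tree edges $e'$ along the long root-to-leaf path $P_T(e)$ could each independently select $e$ into their $E_{e'}$, which would blow up the coefficient in front of $d_T(e)\rho(I_e)$. The resolution exploits that $T$ is a $\mu$-HST with $\mu > 1$: the lengths $d_T(e')$ along $P_T(e)$ decay geometrically, so $\sum_{e' \in P_T(e)} d_T(e') = \frac{\mu}{\mu-1}\,\Delta_{\max}(P_T(e)) = O(1)\cdot d_T(e)$, which means the total charge to $e$ is $O(1)\cdot d_T(e)\rho(I_e)$ regardless of how many tree edges select it. I will make sure the statement of tree sparsity absorbs this constant (indeed Definition~\ref{def:sparsity}'s inequality is stated with a fixed constant, or the $O(\cdot)$ in the lemma conclusion absorbs it), and verify the size bound $|\Ehat_T| = O(k\lambda_\rho)$ is unaffected since it only counts \emph{distinct} graph edges selected, of which there are at most $\lambda_\rho$ per tree edge and $O(k)$ tree edges. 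The rest is routine.
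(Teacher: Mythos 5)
Your proposal is correct and follows essentially the same route as the paper: apply Proposition~\ref{prop:subadd-ar} once per tree edge $e'$ to select at most $\lambda_\rho$ graph edges, take the union over the $O(k)$ edges of $T$ (Theorem~\ref{thm:summary}), and control the multiplicity by observing that all tree edges charging a fixed graph edge $e$ lie on $P_T(e)$. The one simplification you miss is that $\sum_{e' \in P_T(e)} d_T(e')$ is not merely $O(1)\cdot d_T(e)$ by geometric decay but \emph{exactly} $d_T(e)$ (the tree distance is by definition the total length of the connecting path), so the multiplicity costs nothing and Definition~\ref{def:sparsity} is satisfied with no constant to absorb.
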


\begin{proof}
  Let $\Scal$ be an optimal solution and $T$ be a HST embedding. For each $e' \in T$, let $\Ehat_T(e')$ be a subset of at most $\lambda_\rho$ edges $e \in G$ with $e' \in P_T(e)$ such that $\rho\left(\bigcup_{e \in G: e' \in P_T(e)} I_e\right) \leq \sum_{e \in \Ehat_T(e')}\rho(I_e)$ (such a subset exists by Proposition~\ref{prop:subadd-ar}). Let $\Ehat_T = \bigcup_{e' \in T} \Ehat_T(e')$; note that $|\Ehat_T| \leq O(k\lambda_\rho)$ since $T$ has at most $O(k)$ edges (by Theorem~\ref{thm:summary}). So, we have
  \begin{align*}
   \sum_{e' \in T}d_T(e') \rho\left(\bigcup_{e \in G: e' \in P_T(e)} I_e\right) 
  &\leq \sum_{e' \in T}d_T(e')\sum_{e \in \Ehat_T(e')}\rho(I_e) \\
  &= \sum_{e \in \Ehat_T} \left(\sum_{e' \in T : e \in \Ehat_T(e')} d_T(e')\right) \rho(I_e) \\
  &\leq \sum_{e\in \Ehat_T}d_T(e) \rho(I_e)
\end{align*}
where the last inequality follows from the fact that the set of tree edges $e'$ with $e \in \Ehat_T(e')$ is a subset of the path in $T$ between the endpoints of $e$, and the total length of the path in $T$ is exactly $d_T(e)$. 
\end{proof}

Finally, we prove (2). We will need the following observation from Gupta, Nagarajan, and Ravi~\cite[Theorem 7]{GuptaNR10}, based on the probabilistic embedding of \cite{FRT04,Bartal04}, and a claim on Steiner points removal in HST trees (see e.g. Theorem 5.1 in \cite{KonjevodRS01} or the construction of the extension $H_i$ in the proof of Claim~\ref{extend_HST} in Appendix~\ref{online_embedding}).

\begin{theorem}
  \label{thm:GNR}
  Let $(M,d_M)$ be a metric space with a designated subset $W \subseteq M$. Then there is a distribution $\mathcal{T}$ of HSTs with leaves $M$ such that for every $T \in \mathcal{T}$, we have $d_T(x,y) \geq d_M(x,y)$ for every $x,y \in W$ and $E_{T \sim \mathcal{T}}[d_T(u,v)] \leq O(\log |W|)d_M(u,v)$ for every $u,v \in M$.
\end{theorem}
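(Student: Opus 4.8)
The plan is to derive the statement from the FRT probabilistic embedding~\cite{FRT04} applied to the submetric on $W$, and then to extend the resulting HST to all of $M$ in exactly the way the extension map $F_i$ is constructed in the proof of Claim~\ref{extend_HST}. First I would run FRT on $(W,d_M)$ (i.e.\ $d_M$ restricted to $W$): this produces a distribution over hierarchical probabilistic partitions of $W$, and hence (as in Theorem~\ref{hst_emb_bar}) a distribution over $\mu$-HSTs $T_W$ with leaf set $W$; the standard FRT per-scale analysis (in which the separating centers range only over $W$, so the harmonic sums telescope to $H_{|W|}$) gives $E[d_{T_W}(x,y)]\le O(\log|W|)\,d_M(x,y)$ for all $x,y\in W$, and after the usual constant rescaling of edge weights the embedding is non-contractive on $W$. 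This already settles every pair in $W\times W$.

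To extend to $M$, for each $u\in M\setminus W$ I would trace $u$ through the hierarchical partition as a ``phantom'' point, exactly as in the construction of $F_i$ in Claim~\ref{extend_HST}: at each scale, $u$ joins the first cluster (in the construction order of that scale's partition) whose chosen ball contains it; letting $C_u$ be the finest cluster reached this way, identify $u$ with the tree node $z_{C_u}$ of $T_W$. Finally I would turn the phantom points into genuine leaves: the phantom points frozen at a node $z$ are hung as leaves at the bottom of a short geometrically-decreasing chain attached below $z$. This last step is the standard Steiner-point manipulation of HSTs (see the construction of $H_i$ in Claim~\ref{extend_HST}, or Theorem~5.1 of~\cite{KonjevodRS01}): it leaves the metric on $W$ unchanged up to a constant and can only \emph{contract} distances that involve points of $M\setminus W$, which is harmless since the theorem asks for a lower bound on $d_T$ only on $W$. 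A final constant rescaling yields the desired distribution $\mathcal{T}$ of HSTs with leaf set $M$.

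For the expansion bound on a general pair $u,v\in M$, the key observation is that the phantom-tracing of $u$ (resp.\ $v$) is precisely the FRT process run on $W\cup\{u\}$ with $u$ forbidden from ever serving as a center. Hence at every scale $\Delta$ at which $u$ and $v$ still lie in a common cluster of the hierarchy, the ordinary FRT per-scale estimate applies with the separating point ranging over $W$: $\Pr[u\text{ and }v\text{ are separated at scale }\Delta]\le O\!\big(d_M(u,v)/\Delta\big)\cdot(H_{b_\Delta}-H_{a_\Delta})$, where $a_\Delta\le b_\Delta$ index the points of $W$ in the shell of radius $\Theta(\Delta)$ around $u$. Once $u$ or $v$ is frozen (has left every cluster), $d_T(u,v)$ is already decided and cannot grow, and collapses to the bottom of a chain if they are frozen together. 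Since $d_T(u,v)=\Theta(\text{coarsest scale at which }u\text{ and }v\text{ are separated})$ and the distance shells around $u$ overlap only $O(1)$ times across scales, summing the contributions $\Theta(\Delta)$ telescopes to $E_T[d_T(u,v)]\le O(d_M(u,v))+O(d_M(u,v))\cdot H_{|W|}=O(\log|W|)\,d_M(u,v)$.

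The step I expect to be the crux is controlling pairs $u,v\in M\setminus W$ that are close to each other but far from all of $W$: one must verify that their phantom traces agree down to the (coarse) scale at which both leave the $W$-clusters — equivalently, that they are frozen at the same node with probability $1-O(d_M(u,v)/\text{that scale})$ — so that the freeze-and-collapse step does not inflate $d_T(u,v)$. A naive construction that simply attaches each non-terminal to its nearest point of $W$ fails exactly on such pairs (their nearest $W$-points can be far apart), whereas the phantom-tracing construction does not; making this precise is the main technical obstacle, and it is handled by the same simulation/padding argument already used for all other pairs.
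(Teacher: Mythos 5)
Your construction is essentially the right one: the paper does not prove this statement at all but imports it as Theorem~7 of Gupta, Nagarajan and Ravi~\cite{GuptaNR10}, and the construction underlying that citation is exactly what you describe --- run the FRT decomposition with cluster centers (and the random permutation) restricted to $W$, let every point of $M$ join the first ball that contains it, and freeze a Steiner point at the scale where no $W$-centered ball reaches it. Your analysis sketch is also the correct one: the per-scale cutting probability $O(d_M(u,v)/\Delta)\cdot(H_{b_\Delta}-H_{a_\Delta})$ is valid for phantom points because ``one of $u,v$ is captured and the other is not'' (including the case where the other is captured by no ball at all) is still a cutting event for the first settling center, and the shells telescope over $W$ only; and you correctly isolate the real danger, namely a pair of Steiner points close to each other but far from $W$. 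One wording caveat: the chain below a freeze node $z$ cannot be ``short.'' If points frozen together at $z$ end up at mutual tree distance $\Theta(\Delta(z))$, the expansion bound fails for such pairs (since $\Delta(z)\approx d_M(u,W)$ can vastly exceed $d_M(u,v)$); the chain must descend below $d_{\min}(M)$ so that co-frozen points become essentially coincident leaves, which is legitimate because contraction off $W\times W$ is permitted. With that fix the argument goes through and, unlike the paper, actually supplies a proof.
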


\begin{claim}
  \label{claim:KRS}
  Let $(T, d_T)$ be a HST metric and $Z$ be a subset of its vertices. Then there exists an embedding $g$ of $T$ into a tree metric $(T', d_{T'})$ whose vertex set is exactly $Z$ such that $d_{T'}(g(u),g(v)) \leq d_T(u,v)$ for every $u, v \in T$ and $d_{T}(x,y) \leq 4d_{T'}(g(x),g(y))$ for every $x,y \in Z$.
\end{claim}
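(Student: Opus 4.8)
\textbf{Proof plan for Claim~\ref{claim:KRS}.} We may assume $Z \neq \emptyset$, and (as is standard for HSTs, and as holds for the trees produced by Theorem~\ref{thm:GNR}) that $T$ is a $\mu$-HST with $\mu \geq 2$; we use the weighted-tree representation of Definition~\ref{HST_Labels}, so all edges out of a node $v$ have a common weight $\omega_v$ and $\omega_v \leq \omega_p/\mu$ when $p$ is the parent of $v$. Root $T$ arbitrarily. As in the construction of the extension $H_i$ in Claim~\ref{extend_HST}, we assign to each vertex $v$ a \emph{representative} $\pi(v) \in Z$, defined top-down: $\pi(v) = v$ if $v \in Z$; $\pi(v) = \pi(\mathrm{parent}(v))$ if the subtree $T_v$ contains no terminal or if $\pi(\mathrm{parent}(v))$ already lies in $T_v$; and otherwise $\pi(v) = \pi(c)$ for an arbitrary child $c$ of $v$ with $T_c \cap Z \neq \emptyset$ (the root falls under the first or third case). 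One checks that each fiber $\pi^{-1}(z)$ is a connected subtree of $T$ containing $z$ as its unique terminal: it consists of a path from $z$ up to a ``top'' vertex $t_z$ (the ancestors of $z$ that ``chose toward $z$'' all the way down) together with terminal-free subtrees hanging off it. We let $T'$ be the tree obtained by contracting each fiber $\pi^{-1}(z)$ to a single vertex (identified with $z$), retaining inherited edge weights; since the fibers are connected and partition $V(T)$, this is a tree with vertex set exactly $Z$. Finally set $g := \pi$, which restricts to the identity on $Z$.

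The inequality $d_{T'}(g(u),g(v)) \leq d_T(u,v)$ for all $u,v \in T$ (in particular, the first inequality of the claim) is immediate: the $T$-path between $u$ and $v$ maps under contraction to a walk in $T'$ between $\pi(u)$ and $\pi(v)$ whose edges form a sub-multiset of that path's edges.

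For the second inequality, fix distinct $x,y \in Z$ and let $a = \mathrm{lca}(x,y)$. The HST decrease property yields $d_T(x,a) \leq \omega_a \sum_{j \geq 0} \mu^{-j} \leq 2\omega_a$ and likewise $d_T(y,a) \leq 2\omega_a$, so $d_T(x,y) \leq 4\omega_a$ (if $a \in \{x,y\}$ the stronger bound $d_T(x,y) \leq 2\omega_a$ holds). It thus suffices to find an edge of weight $\omega_a$ lying on the $T$-path from $x$ to $y$ that survives the contraction. The natural candidates are the edges $(c_x,a)$ and $(c_y,a)$, where $c_x,c_y$ are the children of $a$ towards $x$ and towards $y$ (when $a \in \{x,y\}$, take instead the first edge of the path); each has weight $\omega_a$, and an edge $(c,a)$ survives iff $\pi(c) \neq \pi(a)$. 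Now $\pi(c_x) \in T_{c_x}$ and $\pi(c_y) \in T_{c_y}$ lie in disjoint subtrees, while $\pi(a)$ is either $a$ (if $a \in Z$) or lies in $T_{c^\ast}$ for the single child $c^\ast$ that $a$ chose. A short case analysis on whether $c^\ast \in \{c_x, c_y\}$ shows that at least one of the two candidate edges always survives, so $d_{T'}(x,y) \geq \omega_a \geq \tfrac14 d_T(x,y)$, as required.

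\textbf{Main obstacle.} The only delicate point is this last step: one must rule out the relevant heavy edge being swallowed inside a single fiber, which is guaranteed precisely because each vertex's representative is inherited from a \emph{single} chosen child. This is also exactly where $\mu \geq 2$ is used, and some separation is in fact necessary---for $\mu = 1$ the statement is false (e.g.\ a long path with equal edge weights and $Z$ its two endpoints). The remaining points---connectivity of the fibers, that $T'$ is a tree on vertex set $Z$, and non-expansiveness---are routine.
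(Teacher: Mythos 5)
Your construction is correct and is essentially the argument the paper itself points to for this claim: the paper defers to the Steiner-point-removal construction of \cite{KonjevodRS01} and to the extension $H_i$ in the proof of Claim~\ref{extend_HST}, which likewise contracts each non-terminal vertex to a representative terminal chosen through a single child and extracts the factor $4$ from the geometric decay of edge weights (implicitly assuming $\mu \geq 2$, as you do). Your write-up is more detailed than the paper's (in particular the surviving-edge/cut argument and the handling of internal vertices of $Z$), but it is the same approach.
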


\begin{lemma}
  \label{lem:OPT-embed}
  There exists a $O(\log k)$-approximate solution with tree sparsity $k-1$.
\end{lemma}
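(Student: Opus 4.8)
The plan is to construct the desired solution in $G$ by going through an offline HST embedding. First I would invoke Theorem~\ref{thm:GNR} with the metric space $(M,d_M)$ being (the metric completion of) $G$ and the designated subset $W$ being the terminal set $\Zcal_r$, so that $|W| = k$. This yields a distribution $\mathcal{T}$ of HSTs with leaf set $M$ that is non-contracting on the terminals and has expected distortion $O(\log k)$ for every pair of points of $M$. Fix an optimal solution $\Scal^*$ in $G$ with responses $(R^*_i, C^*_i)$ and cost $\OPT$. For a sampled $T \in \mathcal{T}$, translate $\Scal^*$ into $T$ by replacing each edge $e=(u,v) \in R^*_i$ with the tree path $P_T(u,v)$ and keeping the connectivity lists $C^*_i$; since the $C^*_i$-pairs are terminal pairs and remain connected after the translation, this is a feasible solution on $T$ whose cost is at most $\sum_{e \in G} d_T(e)\rho(I^*_e)$, which in expectation is $O(\log k)\cdot\OPT$ by linearity of expectation and the distortion bound of Theorem~\ref{thm:GNR}. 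Hence there is a \emph{fixed} $T$ on which the translated solution, call it $\Scal^T$, is feasible and has cost at most $O(\log k)\OPT$.

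Next I would apply Proposition~\ref{prop:tree-containment} to $\Scal^T$ to obtain a feasible solution on $T$ whose $i$-th subgraph is contained in $T_i := T[\Zcal_i]$ and whose cost is no larger; more importantly, by the proof of that proposition we may take $R^T_i = \bigcup_{(u,v)\in C^*_i} p_T(u,v)$, so that the \emph{total} edge set $\bigcup_i R^T_i$ is contained in the Steiner tree (in $T$) spanned by the terminals $\Zcal_r$. Then I would use Claim~\ref{claim:KRS} with $Z = \Zcal_r$ to push this solution off the Steiner points: the claim gives an embedding $g$ of $T$ into a tree metric $T'$ whose vertex set is exactly $\Zcal_r$, non-expansive on all of $T$ and distorting terminal distances by at most a factor $4$. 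Applying $g$ to $\Scal^T$ (translating each tree edge to the $T'$-path between the images of its endpoints, as in Proposition~\ref{prop:tree-containment}) produces a feasible solution $\Scal'$ on the $k$-vertex tree metric $T'$ whose cost is at most $4$ times that of $\Scal^T$, i.e.\ $O(\log k)\OPT$.

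Finally I would map $\Scal'$ back into $G$. Since $g^{-1}$ (restricted to $\Zcal_r \subseteq M$) is non-contracting on the terminals — distances in $T'$ underestimate those in $T$, which in turn dominate the original metric on $W$ — and since $T'$ has $k$ vertices, a solution on $T'$ can be pulled back to $G$ edge-by-edge: for each $T'$-edge used at time $i$, put the corresponding $G$-edge (between its terminal endpoints) into $R_i$; feasibility is preserved because the connectivity lists are terminal pairs that stay connected, and the cost does not increase because each original metric distance is at most the corresponding $T'$-distance. Call the resulting $G$-solution $\Scal$; it is feasible and has $\cost(\Scal) = O(\log k)\OPT$. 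For the tree-sparsity bound, note that $\bigcup_i R_i$ consists only of edges between terminal pairs that lie in the image of the tree $T'$, which has $k$ vertices and hence $k-1$ edges; therefore $|\bigcup_i R_i| \le k-1$, and subadditivity of $\rho$ (as remarked after Definition~\ref{def:sparsity}) gives that the tree sparsity of $\Scal$ is at most $k-1$.

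The main obstacle I anticipate is the bookkeeping around \emph{which} vertex set the intermediate solutions live on and ensuring non-contractiveness survives each translation step — in particular, checking that $g^{-1}$ is non-contracting on terminals (so that pulling back to $G$ does not increase cost) and that after all translations the final support $\bigcup_i R_i$ genuinely has at most $k-1$ edges rather than merely being ``sparse in expectation''. A secondary subtlety is that $G$ is assumed complete with metric edge lengths (see the remark on metric completion), so that the $G$-edge between any terminal pair exists and has the right length; without that reduction one would need to route along shortest paths in $G$ and re-argue the cost bound. Everything else is a routine composition of the quoted results.
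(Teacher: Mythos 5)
Your proposal is correct and follows essentially the same route as the paper's proof: Theorem~\ref{thm:GNR} to get $\OPT(T)\leq O(\log k)\OPT$, Claim~\ref{claim:KRS} to move to a tree $T'$ on exactly the $k$ terminals, and then pulling the $T'$-solution back to $G$ with the $k-1$ edges of $T'$ bounding the tree sparsity. The only quibble is a harmless misplacement of the factor $4$: the translation $T\to T'$ via the non-expansive $g$ loses nothing, while it is the pull-back from $T'$ to $G$ that costs the factor $4$ (since $d(u,v)\leq 4\,d_{T'}(g(u),g(v))$ for terminals, not $d(u,v)\leq d_{T'}(g(u),g(v))$); the product is the same, so the $O(\log k)$ bound stands.
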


\begin{proof}
  Theorem~\ref{thm:GNR} implies that there exists a HST embedding $T$ such that $\OPT(T) \leq O(\log k)\OPT$ and $d_T(u,v) \geq d(u,v)$ for every pair of terminals $u,v \in \Zcal$. Let $T'$ and $g$ be the tree metric and the embedding guaranteed by applying Claim~\ref{claim:KRS} to $\Zcal$ and the subtree of $T$ induced by $\Zcal$. Let $\OPT(T')$ be the cost of the optimal solution for the instance on $T'$ induced by $g$. We now argue that any feasible solution on $T$ can be transformed into a feasible solution on $T'$ of less or equal cost. Let $\Scal = ((R_1, C_1), \ldots, (R_r, C_r))$ be a feasible solution on $T$ and $\Scal' = ((R'_1, C'_1), \ldots, (R'_r, C'_r))$ with $R'_i = g(R_i)$ and $C'_i = g(C_i)$. The solution $\Scal'$ is feasible for the instance on $T'$ induced by $g$, and $\cost(\Scal') \leq \cost(\Scal)$. Thus, $\OPT(T') \leq \OPT(T) \leq O(\log k)\OPT$.

  Let $\Scal^*$ be the optimal solution for $T'$ and $R^*_1, \ldots, R^*_r$ be the subgraphs it uses. Since the vertices of $T'$ is a subset of $G$ (in particular, it is exactly the set of terminals $\Zcal$), $\Scal^*$ is also a feasible solution in $G$. Moreover, since $d(u,v) \leq d_T(f(u),f(v)) \leq 4d_{T'}(g(f(u)), g(f(v)))$ for every $u,v \in \Zcal$, the cost of $\Scal^*$ in $G$ is at most $4\OPT(T') \leq O(\log k)\OPT$. Finally, the tree sparsity of $\Scal^*$ is at most $|\cup_i R'_i| \leq k-1$ since $T'$ is a tree with $k$ vertices.
\end{proof}

Combining Lemma~\ref{lem:translation-cost} with Lemmas~\ref{lem:OPT-T} and \ref{lem:OPT-embed} gives us Theorem~\ref{thm:combine}.

\section*{Acknowledgements}
We thank Arnold Filtser for pointing out an error in an earlier proof of Lemma~\ref{lem:base-HST} and suggesting proving a direct construction.

\bibliographystyle{alpha}{\small \bibliography{onlineemb,art}}

\appendix
\section{Metric Completion of an Input Graph}\label{metric_completion}
\begin{claim}
Let $P$ be an Abstract Network Design problem and let $G$ be its input graph. Then, w.l.o.g. $G$ is a complete graph with edge length satisfying triangle inequality. 
\end{claim}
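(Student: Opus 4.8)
The claim is that we may assume without loss of generality that the input graph $G$ of an Abstract Network Design problem is a complete graph whose edge lengths form a metric (satisfy the triangle inequality). The natural approach is to pass to the *metric completion* (shortest-path metric) of $G$ and argue that this transformation preserves the problem up to the exact same cost. First I would define $\bar G$ to be the complete graph on $V$ with $d_{\bar G}(u,v)$ equal to the shortest-path distance in $G$ between $u$ and $v$; this is by construction a metric. Requests $Z_i$, feasibility functions $\Fcal_i$, and the load function $\rho$ are copied over verbatim, since they only depend on the vertex set and on connectivity lists, not on which concrete edges $G$ has.

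The heart of the argument is a cost-preserving correspondence between feasible solutions on $G$ and feasible solutions on $\bar G$. Given a solution $\Scal$ on $G$ with responses $(R_i, C_i)$, every pair in $C_i$ is connected in $R_i \subseteq E(G) \subseteq E(\bar G)$ (an edge of $G$ is present in $\bar G$ with length at most its original length, by the triangle inequality — actually I would instead keep $R_i$ as a subgraph of $\bar G$ directly), so $\Scal$ is feasible for the instance on $\bar G$, and since each edge $e=(u,v)\in G$ satisfies $d_{\bar G}(u,v)\le d(e)$, we get $\cost_{\bar G}(\Scal) \le \cost_G(\Scal)$. Conversely, given a solution $\bar\Scal$ on $\bar G$ with responses $(\bar R_i, C_i)$, replace each edge $(u,v) \in \bar R_i$ by a shortest $(u,v)$-path in $G$, obtaining $R_i \subseteq E(G)$; this preserves all connectivity (hence feasibility) and, using subadditivity and monotonicity of $\rho$ together with $\sum_{e \in p_G(u,v)} d(e) = d_{\bar G}(u,v)$, one checks $\cost_G(\Scal) \le \cost_{\bar G}(\bar\Scal)$ — this is essentially the same telescoping computation as in the proof of Claim~\ref{clm:1}. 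Therefore $\OPT_G = \OPT_{\bar G}$, and any (online) algorithm for complete metric instances yields one for general $G$ with the same competitive ratio by running it on $\bar G$ and translating responses back via shortest paths, which can all be done online since shortest-path distances among $\Zcal_i$ are determined once $\Zcal_i$ is revealed.

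The step I expect to need the most care is the edge-to-path translation in the direction from $\bar G$ back to $G$: one must verify that substituting shortest paths does not decrease connectivity of the pairs in $C_i$ (it does not, since a path connects its endpoints), that the resulting $R_i$ is still a valid subgraph of the original $G$, and — most importantly — that the cost does not increase even though a single tree edge of $\bar G$ may be "used" across many time steps and may map to overlapping paths in $G$. Here one argues edge-by-edge in $G$: a fixed edge $f \in E(G)$ lies on the substituted paths of several $\bar G$-edges across several time steps, and by subadditivity $\rho$ of the union of those time-step sets is at most the sum of the $\rho$-values, which is exactly how the cost is accounted on the $\bar G$ side. This is routine but is the only place where the subadditivity hypothesis is genuinely invoked, so I would state it carefully. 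For the metric-oblivious variant the same argument goes through since the pairwise distances revealed at step $i$ already constitute (a submetric of) $\bar G$.
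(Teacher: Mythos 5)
Your proposal is correct and follows essentially the same route as the paper: pass to the shortest-path (metric) completion of $G$, translate solutions from the completed graph back to $G$ by substituting shortest paths for each edge, and use subadditivity of the load function (accounted edge-by-edge in $G$) to show the cost does not increase, while the reverse direction is immediate since shortest-path distances only shrink edge lengths. The one cosmetic difference is that the paper phrases the easy direction as ``an optimal solution on $G$ need not use triangle-inequality-violating edges, hence is feasible on $\hat{G}$ at the same cost,'' whereas you observe directly that any solution's cost can only decrease in $\hat{G}$; these are equivalent.
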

\begin{proof}
 We construct a complete graph $\hat{G}$ as follows: remove all the edges in $G$ that form triangles yet do not satisfy the triangle inequality, and complete the graph by adding edges $e=(u,v)$ with $d(e)$ being the length of a shortest path in $G$ between $u$ and $v$. 

Let $\hat{\mathcal{S}}_i=((\hat{{R}}_1, \hat{C}_1), \ldots, (\hat{{R}}_i, \hat{C_i}))$ denote a feasible solution to $P$ with the input graph being $\hat{G}$. We show that there is a feasible solution $\mathcal{S}_i=(({R}_1, C_1) \ldots, ({R}_i, C_i))$ for $P$ with the input graph being $G$, such that $\cost(\mathcal{S}_i) \leq \cost(\hat{\mathcal{S}}_i)$. For each $\hat{R}_j \in \hat{\mathcal{S}}_i$, for all $e=(u,v) \in \hat{R}_j$ that either has been added to $\hat{R}_j$ or the weight of which has been updated in $\hat{R}_j$ let $p_G(e)$ denote a shortest path between $u$ and $v$ in $G$. We add the edges of the path $p_G(e)$ to the response $R_j$. All the other edges in $\hat{R}_j$ are added to $R_j$ as well. The connectivity lists $C_j$ are defined to be exactly the lists $\hat{C}_j$. 

By the construction, the solution $\mathcal{S}_i$ is feasible.   
In addition, if $e' \in \hat{\mathcal{R}}_j$ is such that $e' \notin R_j$ or such that $d_G(e') \neq d_{\hat{G}}(e')$, then by the construction of $\hat{G}$, $d_{\hat{G}}(e')$ equals to the length $p_G(e') \subseteq R_j$. Thus, by the subadditivity of the load function we conclude that $\cost(\mathcal{S}_i) \leq \cost(\hat{\mathcal{S}}_i)$. 

On the other hand, let $\mathcal{S}^*_i$ be an optimal solution on the input graph $G$. Note that since its response subgraphs $R^*_j$ do not use edges that violate triangle inequality, the solution $\mathcal{S}^*_i$ is also a feasible solution on $\hat{G}$, of the same cost. This implies that the costs of an optimal solution on $G$ and on $\hat{G}$ are equal.  This completes the proof.
\end{proof}

\section{$s$-Server Problem}\label{app:server}

We prove that without loss of generality we can assume that at the beginning all the servers are located at the same point.
If the servers are located in different locations, then in the representation of the problem in terms of Abstract Network Design, the size of the terminal set is $k=r+s$. If $s=O(r)$, then $k=O(r)$ and the competitive ratio stated in Corollary ~\ref{cor_server} is also true for this case. 

Thus, we assume that $r=O(s)$,  implying $k=r+s=O(s)$. Therefore, the competitive ratio obtained by Corollary ~\ref{cor_server} is $O(\log^4 s)$. Let $A$ be an algorithm that solves the $s$-server problem, on any general metric input space, with the assumption that all the serves at the beginning are located at the same point $v_0 \in V$. Let $\gamma=O(\log^4 s)$ and $a$ be such that 
$\cost(A) \leq \gamma \cdot \OPT +a$, where $\OPT$ is the cost of the optimal solution. Consider the algorithm $\hat{A}$ that given an instance where the servers are not necessarily located at the same point first moves all the servers to $v_0$ and then runs the algorithm $A$. It holds that $\cost(\hat{A}) \leq \cost(A)+b$, where $b$ is some constant (dependent on the locations of the servers). Then, $\cost(\hat{A}) \leq \gamma \cdot \OPT + a +b$. Let $\hat{\OPT}$ be the cost of the optimum algorithm for the instance where servers are not located at the same point at the beginning. Therefore, $\OPT \leq {\hat\OPT} + c$, where $c$ is a constant dependent on the locations of the servers. We have, $\cost(\hat{A}) \leq \gamma \cdot \hat{\OPT} + \gamma \cdot c + a + b$. Since $\gamma=O(\log^4 s)$ is a constant (in terms of the parameters of the problem $r$ and $k$), this implies that the competitive ratio of $\hat{A}$ is the same as the competitive ratio of $A$. This completes the poof.

 \section{Min Operator For Abstract Network Design}\label{onlineMin}

Let $P$ be an Abstract Network Design problem, and let $A$ be an online deterministic algorithm solving $P$. For a sequence of $i \geq 1$ requests $(Z_1, \ldots, Z_i)$, let $\mathcal{S}^A_i=(\mathcal{R}^A_1, \ldots, \mathcal{R}^A_i)$ denote the feasible solution of $A$. 

\begin{definition}\label{def:min_operator}[Restatement of Definition~\ref{def:min_operator-inf}]
We say that an online problem $P$ \textsl{admits a min operator} with factor $\eta  \geq 1$ if 
for any given deterministic online algorithms $A$ and $B$, there is an online deterministic algorithm $C$, such that for any $(Z_1, \ldots, Z_r)$ requests to $P$,
 $\cost(\mathcal{S}^C_r) \leq \eta \cdot \min\{\cost(\mathcal{S}^A_r), \cost(\mathcal{S}^B_r) \}$.
\end{definition}

Note that if either $A$ and/or $B$ are randomized, it follows from the definition that $C$ is randomized and $E[\cost(C)] \leq \eta \cdot E[\min\{\cost(A), \cost(B)\}] \leq \eta \cdot \min \{ E[\cost(A)], E[\cost(B)]\}$. Therefore, we focus on the case of the deterministic algorithms in the min operator.

\begin{definition}[Sequencing property]\label{def:seq:prop}
We say that a problem $P$ \emph{satisfies the sequencing property with parameters $(a, b)$}, $a, b \geq 1$, if for any (possibly  online) deterministic algorithms $\hat{A}$ and $\hat{B}$, for any $(Z_1, \ldots, Z_i)$ requests and for any sequence of new $t-i$ requests $(Z_{i+1}, \ldots, Z_t)$, for any $t>i$, there exists a sequence of responses $((R'_{i+1}, C'_{i+1}),\ldots, (R'_{t}, C'_t))$ such that 
\begin{enumerate}
\item $\mathcal{F}_t(C^{\hat{A}}_1, \ldots, C^{\hat{A}}_i, C'_{i+1}, \ldots, C'_{t})=1$;
\item $\cost(\mathcal{R}^{\hat{A}}_1, \ldots, \mathcal{R}^{\hat{A}}_i, \mathcal{R}'_{i+1}, \ldots, \mathcal{R}'_{t}) \leq a \cdot \cost(\mathcal{R}^{\hat{A}}_1, \ldots, \mathcal{R}^{\hat{A}}_{i}) + b \cdot \cost(\mathcal{R}^{\hat{B}}_1, \ldots, \mathcal{R}^{\hat{B}}_t)$.
\end{enumerate}
\end{definition}

We next show that the sequencing property is enough to ensure the problem admits min operator. 

\begin{lemma}\label{min_general_lemma}
If an Abstract Network Design problem $P$ satisfies the sequencing property with parameters $(a, b)$, then $P$ admits min operator with factor $\eta=O(a^{3.5} \cdot b)$.
\end{lemma}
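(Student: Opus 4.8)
The goal is to build, from two deterministic online algorithms $A$ and $B$, a single deterministic online algorithm $C$ whose cost is $O(a^{3.5}b)$ times $\min\{\cost(\mathcal{S}^A_r),\cost(\mathcal{S}^B_r)\}$, using only the sequencing property of $P$. The natural strategy is a \emph{doubling / phase-based} scheme: run one of the two algorithms until its accumulated cost exceeds a threshold, then ``switch'' to the other, paying to splice its past responses into a feasible continuation via the sequencing property, and double the threshold each time we switch. Since at the end only one of $A,B$ is cheap, and the thresholds form a geometric sequence, the total cost telescopes to within a constant of the current threshold, which is within a constant factor of the true cheaper cost; the $a,b$ blow-ups from repeated applications of the sequencing property accumulate to the stated polynomial factor.

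\textbf{Key steps, in order.} First I would set up the phase structure precisely: maintain a ``current leader'' $\ell\in\{A,B\}$ and a budget $\tau$ (initialized to, say, the cost of the first response), and in each phase feed new requests to a simulated copy of the leader, tracking its cost. When the leader's cost since the start of the phase would exceed $\tau$, close the phase: invoke the sequencing property with $\hat A = $ (the other algorithm, whose responses on the requests of this phase we now want to adopt) and $\hat B = $ current leader, to obtain a feasible continuation $((R'_{i+1},C'_{i+1}),\dots,(R'_t,C'_t))$ of the \emph{other} algorithm's history that costs at most $a\cdot(\text{other's past cost}) + b\cdot(\text{leader's full cost up to }t)$. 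Then swap the leader, set $\tau\leftarrow 2\tau$, and continue. Second, I would verify feasibility of the composite solution $C$ maintains at every time step: each phase boundary produces responses that make the concatenation satisfy $\mathcal{F}_t(\cdots)=1$ by property (1) of the sequencing property, and within a phase we just copy the leader's (feasible) responses. Third — the accounting — I would bound $\cost(\mathcal{S}^C_r)$: let $\tau^*$ be the final budget; each phase's cost is at most $\tau^*$ up to the geometric-sum factor, but each application of sequencing multiplies the ``inherited history'' cost by a factor involving $a$ (and adds a $b$-weighted copy of the fresh cost), so after $O(\log(\text{cost ratio}))$ phases the compounding is controlled — here one must be careful that the recursion on the inherited cost is $\text{(new history cost)} \le a\cdot\text{(old history cost)} + b\cdot(\text{phase cost})$, and that phase costs are geometric. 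Finally, I would relate $\tau^*$ to $\min\{\cost(\mathcal{S}^A_r),\cost(\mathcal{S}^B_r)\}$: since we only double when the \emph{current} leader has exceeded the budget, and we always switch to whichever is (so far) not yet exceeded, $\tau^*/2$ is a lower bound on $\min\{\cost(\mathcal{S}^A_r),\cost(\mathcal{S}^B_r)\}$, giving the competitive factor.

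\textbf{Main obstacle.} The delicate part is the cost accounting, specifically getting the exponent right and keeping it \emph{polynomial} rather than exponential in the number of phases. Each switch applies the sequencing property to an already-spliced history, so naively the multiplicative $a$-factors compound as $a^{(\text{number of phases})}$, which is far too large. The fix — and the crux of the $a^{3.5}b$ bound — is to observe that the history being re-spliced at phase $j$ is not the genuine run of the ``other'' algorithm but rather $C$'s own composite history, whose cost is already bounded by $O(\tau_j)$ (geometric in $j$); so the term $a\cdot(\text{old history cost})$ is $O(a\tau_j)$ and, because $\tau_j$ is geometrically smaller than $\tau^*$, summing over phases still yields $O(a^{O(1)}b\,\tau^*)$. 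Making this rigorous requires a careful inductive invariant of the form ``at the start of phase $j$, $C$'s composite cost so far is at most $g(a,b)\cdot\tau_j$'' for an explicit slowly-growing $g$, and tracking how $g$ evolves across a switch; pinning down that the stable value of $g$ is $\Theta(a^{3.5}b)$ (balancing the $a$-compounding against the geometric savings, with the $3.5$ coming from how many ``recent'' phases contribute before the geometric decay dominates) is where the real work lies. I would also need to handle the randomized case, but as the paper notes that follows immediately since a randomized algorithm is a distribution over deterministic ones and $C$ is built pointwise.
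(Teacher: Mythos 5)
Your high-level plan (phase-based switching, splicing via the sequencing property, geometric growth to absorb the $a$-compounding) is the right family of ideas and matches the paper in spirit, but the proposal as written has a genuine gap exactly at the point you yourself flag as ``where the real work lies,'' and the specific scheme you describe does not close it. First, with a budget that merely \emph{doubles} at each switch, the recursion $\cost(C_j)\le a\cdot\cost(C_{j-1})+b\cdot(\text{phase cost})$ unrolls to $\sum_j a^{J-j}\,2^{j}\tau_0$, which is $\Theta(a^{J}\tau_0)=(a/2)^{J}\tau^*$ whenever $a>2$ --- exponential in the number of phases, not $O(a^{O(1)}b\,\tau^*)$. The geometric ratio between consecutive phases must \emph{exceed} the per-switch compounding factor $a$, and choosing it is precisely where the exponent $3.5$ comes from; this cannot be deferred. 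Second, your bound $\min\{\cost(A),\cost(B)\}\ge \tau^*/2$ is not justified by the rule ``double at every switch'': when the leader exceeds an \emph{absolute} budget $\tau_j$, you learn nothing about the other algorithm's cost, which may be far below $\tau_j$ while your composite has already paid $\Omega(\tau_j)$. Making this work requires the variant where one doubles only when \emph{both} algorithms exceed the budget, and even then the guarantee degrades to $\tau^*/R$ with $R$ the ($a$-dependent) ratio, and the bound must hold at every prefix, not just at time $r$.

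The paper resolves both issues with a single device: phases end by a \emph{cost-ratio} test rather than an absolute budget. Phase $i$ lasts as long as $\cost$ of the current follower is at most $\gamma$ times the other's, with $\gamma=2a^{1.5}$. This (i) ties the follower's cost to $\cost_M(t)=\min\{\cost_A(t),\cost_B(t)\}$ at \emph{every} time $t$ inside the phase (giving the $b\gamma\,\cost_M(t)$ term in the recursion), and (ii) forces $\cost_M$ to grow by a factor $\gamma^2=4a^3$ every two phases, which beats the $a^3$ compounding accumulated over three unrollings of the recursion; the induction then stabilizes at $\zeta=\Theta(a^{1.5}\cdot a^2\cdot b)=\Theta(a^{3.5}b)$. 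You would need to either adopt this ratio-based phase rule or re-engineer your budget scheme with ratio $\Omega(a)$ and the ``double only when both exceed'' rule, and then actually carry out the induction; as it stands the central quantitative step is missing. (A minor additional slip: in your description the roles of $\hat A$ and $\hat B$ in the sequencing property are initially reversed --- $\hat A$ must be the composite algorithm $C$'s own committed history, since that is what the continuation has to extend feasibly --- though your ``main obstacle'' paragraph corrects this.)
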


\begin{proof}

Let $A$ and $B$ be any deterministic online algorithms for $P$. For a sequence of requests $(Z_1, \ldots, Z_t)$ let $\cost_{A}(t)$ (let $\cost_B(t)$) denote the cost of the solution of the algorithm $A$ (the algorithm $B$) on the sequence $(Z_1, \ldots, Z_t)$. Let $M$ denote $\min\{A, B\}$ (i.e., the algorithm of smaller cost).

The algorithm $C$ proceeds in phases. 
Let $t_1>1$ be the maximal index such that for the sequence of requests $(Z_1, \ldots, Z_{t_1})$ it holds that $\cost_{A}({t_1}) \leq \gamma \cdot \cost_{B}({t_1})$ for some $\gamma \geq 1$ that will be chosen later, and let $t_2'=t_1+1$. Note that $\cost_{A}(t_2')>\gamma \cdot \cost_{B}(t_2')$ by the definition of $t_1$. We say that the first phase occurs at the timesteps $[1, t_1]$. The responses of $C$ during the first phase are the responses of $A$. Let $C_1$ denote the algorithm $C$ in the first phase.

For $i>1$ the phase $i$ is defined by the following: the phase occurs at the timesteps $[t_{i}', t_i]$, where $t_i>t_{i}'$ is defined as the maximal index for which $\cost_{B}(t_i) \leq \gamma \cdot \cost_{A}(t_i)$ if $i$ is even, and $t_i>t_{i}'$ is the maximal index for which $\cost_{A}(t_i) \leq \gamma \cdot \cost_{B}(t_i)$ if $i$ is odd.  
Let $\hat{A}_i=C_{i-1}$; let $\hat{B}_i=B$ if $i$ is even and $\hat{B}_i=A$ if $i$ is odd. Since $P$ satisfies sequencing property it holds that for $\hat{A}$ and $\hat{B}$, for any sequence of new requests $(Z_{t_{i}'}, \ldots, Z_{t_j})$, for any $t_j \geq t_i'$, there are responses $((R'_{t_{i}'} ,C'_{t_{i}'}), \ldots, (R'_{t_j}, C'_{t_j}))$ such that $\cost(\mathcal{R}^{\hat{A}}_{1}, \ldots, \mathcal{R}^{\hat{A}}_{t_{i-1}}, \mathcal{R}'_{t_{i}'}, \ldots, \mathcal{R}'_{t_j}) \leq a \cdot \cost_{\hat{A}}(t_{i-1})+b \cdot \cost_{\hat{B}}(t_j)$. 
We define algorithm  $C_i$ to be the algorithm that outputs the responses of $C_{i-1}$ followed by the responses obtained from the sequencing algorithm $\hat{A}$ with $\hat{B}$, i.e., the responses $\{(R'_{t'_l},C'_{t'_l} )\}_{l\geq 1}$. 

We prove the following claim:

\begin{claim}
Let $\zeta=4a^{1.5}(a^2+a+1)b$ and $\gamma=2a^{1.5}$. For all $i\geq 2$, for all $t \in [t_i', t_i]$ it holds that $\cost_{C_i}(t) \leq \zeta \cdot \cost_{M}(t)$.

\end{claim} 

\begin{proof}
The proof is by induction on the phase $i\geq 3$. Let $i\geq 3$ and $t\in [t_i', t_i]$. By the definition of the algorithm $C_i$ and by the sequencing property, we have $\cost_{C_i}(t) \leq a \cdot \cost_{C_{i-1}}(t_{i-1})+b \cdot \cost_{\hat{B}_i}(t) \leq a \cdot \cost_{C_{i-1}}(t_{i-1}) + b \cdot \gamma \cdot \cost_{M}(t)$, where the second inequality holds by definition of $\hat{B}_i$.
Applying this rule on $\cost_{C_{i-1}}(t_{i-1})$ for two more steps, we obtain $\cost_{C_i}(t) \leq a^3\cdot \cost_{C_{i-3}}(t_{i-3})+a^2(a+1)b\gamma \cdot\cost_{M}(t)$. By induction assumption it holds that $\cost_{C_{i-3}}(t_{i-3})\leq \zeta \cdot \cost_{M}(t_{i-3})$. Bellow we show that for all $t> t_i'$, $\cost_{M}(t) \geq \gamma^2 \cdot \cost_{M}(t_{i-3})$. Thus, we conclude: $\cost_{C_i}(t)\leq (a^3\zeta/\gamma^2 + a^2(a+1)b \gamma)\cdot \cost_{M}(t) \leq \zeta \cost_{M}(t)$ for chosen values of $\zeta$ and $\gamma$.

It remains to show that for all $t> t_i'$, $\cost_{M}(t) \geq \gamma^2 \cdot \cost_{M}(t_{i-3})$. By the monotonicity of the load function, it is enough to prove that $\cost_{M}(t'_i) \geq \gamma^2 \cost_{M}(t'_{i-2})$. Assume w.l.o.g. that at the phase $i$ it holds that $\cost_{B}(t) \leq \gamma \cost_{A}(t)$ for all timesteps $t$ in the phase (i.e., assume that $i$ is an even phase). Then, by the definition of $t_i'$, it holds that $\cost_{A}(t_i') > \gamma \cost_{B}(t'_i)$, i.e., $M(t'_i)=B$. Similarly, by the definition of 
$t'_{i-2}$, we conclude that $M(t'_{i-2})=B$. Since $\cost_{M}(t)=\cost_{B}(t) \geq \cost_{B}(t'_{i-1})>\gamma \cost_{A}(t_{i-1}')\geq \gamma \cost_A(t'_{i-2}) > \gamma^2 \cost_B(t'_{i-2}) \geq  \gamma^2 \cost_{B}(t_{i-3})$, which completes the proof. 
\end{proof}

\end{proof} This completes the proof of the lemma.

\begin{corollary}\label{cor_min_operator}
If the feasibility requirement of an Abstract Network Design problem $P$ is memoryless then $P$ admits a min operator.
\end{corollary}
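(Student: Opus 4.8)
The plan is to reduce to the already-proved Lemma~\ref{min_general_lemma}: that lemma shows that the sequencing property of Definition~\ref{def:seq:prop} with parameters $(a,b)$ yields a min operator with factor $O(a^{3.5}b)$, so it suffices to prove that every memoryless Abstract Network Design problem $P$ satisfies the sequencing property with the absolute constants $a = b = 1$, which then gives a min operator with factor $O(1)$.

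To establish the sequencing property I would fix deterministic algorithms $\hat A$ and $\hat B$ (which, being algorithms for $P$, maintain feasible solutions), a request prefix $Z_1, \ldots, Z_i$, and a continuation $Z_{i+1}, \ldots, Z_t$, and simply take the continuation responses to be those of $\hat B$, i.e.\ set $\mathcal{R}'_j := \mathcal{R}^{\hat B}_j$ for $i < j \le t$. The main point to verify is that the resulting hybrid solution --- $\hat A$'s responses on steps $1,\ldots,i$ followed by $\hat B$'s responses on steps $i+1,\ldots,t$ --- is feasible. The connectivity requirement (every pair of $C_j$ connected in $R_j$) is inherited coordinatewise from the feasibility of $\hat A$ (for $j \le i$) and of $\hat B$ (for $j > i$). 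The feasibility-function requirement $\mathcal{F}_t(C^{\hat A}_1, \ldots, C^{\hat A}_i, C^{\hat B}_{i+1}, \ldots, C^{\hat B}_t) = 1$ is where memorylessness enters: I would prove by induction on $j$, from $j = i$ up to $j = t$, that $\mathcal{F}_j(C^{\hat A}_1, \ldots, C^{\hat A}_i, C^{\hat B}_{i+1}, \ldots, C^{\hat B}_j) = 1$. The base case $j = i$ is feasibility of $\hat A$ after $i$ steps; in the inductive step, $\hat B$'s feasibility gives $\mathcal{F}_j(C^{\hat B}_1, \ldots, C^{\hat B}_j) = 1$, and the memoryless property --- which permits replacing the accepted length-$(j-1)$ prefix $C^{\hat B}_1, \ldots, C^{\hat B}_{j-1}$ by any other accepted length-$(j-1)$ list while keeping the last coordinate $C^{\hat B}_j$ --- lets us substitute the inductively-feasible hybrid prefix.

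The cost bound is then a short calculation. For each edge $e \in E$ let $I^{\hat A}_e = \{j \le i : e \in R^{\hat A}_j\}$ and $I^{\hat B}_e = \{j \le t : e \in R^{\hat B}_j\}$; the set of time steps in which the hybrid solution uses $e$ is contained in $I^{\hat A}_e \cup I^{\hat B}_e$, so monotonicity and subadditivity of $\rho$ (with $\rho(\emptyset) = 0$) give $\rho(\,\cdot\,) \le \rho(I^{\hat A}_e) + \rho(I^{\hat B}_e)$. Multiplying by $d(e)$ and summing over $e$ shows the hybrid solution has cost at most $\cost(\mathcal{R}^{\hat A}_1, \ldots, \mathcal{R}^{\hat A}_i) + \cost(\mathcal{R}^{\hat B}_1, \ldots, \mathcal{R}^{\hat B}_t)$, which is exactly the sequencing property with $a = b = 1$. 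Invoking Lemma~\ref{min_general_lemma} then finishes the proof.

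The only delicate point I anticipate is the inductive use of memorylessness: a single application of the property only swaps a prefix while fixing the final coordinate, so it cannot be invoked in one shot to replace the whole length-$i$ prefix of $\hat B$'s list --- one must peel off a step at a time and carry the invariant that the growing hybrid prefix is itself accepted by the corresponding $\mathcal{F}_j$. Everything else (the choice of continuation, the coordinatewise connectivity bookkeeping, and the subadditive cost estimate) is routine.
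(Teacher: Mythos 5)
Your proposal is correct and follows essentially the same route as the paper: take $\hat B$'s responses as the continuation, verify feasibility via memorylessness, bound the cost by subadditivity to get the sequencing property with $a=b=1$, and invoke Lemma~\ref{min_general_lemma}. The only difference is that you spell out the step-by-step induction behind the memorylessness argument, which the paper states in one line; that is a welcome elaboration, not a deviation.
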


\begin{proof}
By Lemma~\ref{min_general_lemma}, it is enough to show that $P$ satisfies the sequencing property. Assume we are given two deterministic online algorithms $A$ and $B$. For any $t>i \geq 1$, for a sequence of new requests $(Z_{i+1}, \ldots, Z_t)$ let $((R^B_{i+1},C^B_{i+1}), \ldots, (R^B_{t}, C^B_{t}))$ be the feasible responses of the algorithm $B$. Then, define $(R'_{i+1}, C'_{i+1} )=(R^B_{i+1}, C^B_{i+1}), \ldots, (R'_{t}, C'_{t})=(R^B_{t},C^B_{t})$. Since the feasibility function is memoryless $\mathcal{F}_{i+j}(C^A_{1}, \ldots, C^A_{i}, C'_{i+1}, \ldots, C'_{i+j})=1$, for all $ 1\leq j\leq t-i$. In addition, by the subadditivity of the load function it holds that
\begin{align*}
  \cost(\mathcal{R}^A_{1}, \ldots, \mathcal{R}^A_{i}, \mathcal{R}'_{i+1}, \ldots, \mathcal{R}'_{t}) 
  &\leq \cost(\mathcal{R}^A_1, \ldots, \mathcal{R}^A_{i}) + \cost(\mathcal{R}'_{i+1}, \ldots, \mathcal{R}'_{t}) \\
  &\leq \cost(\mathcal{R}^A_1, \ldots, \mathcal{R}^A_{i}) + \cost(\mathcal{R}^B_{1}, \ldots, \mathcal{R}^B_{t}),
\end{align*}
where the second inequality is by monotonicity of the load function. 
\end{proof}

\end{document}